 \def\map#1{\mathcal #1}
\def\d{\operatorname{d}}\def\<{\langle}\def\>{\rangle}
\def\Tr{\operatorname{Tr}}\def\:{\hbox{\bf
    :}}
\def\R{\mathbb R}
\def\N{\mathbb N}
\def\grp#1{\mathsf{#1}}
\def\Span{\mathsf{Span}}
\def\Supp{\mathsf{Supp}}
\def\spc#1{\mathcal{#1}}
\def\poly{\mathsf{poly}}
\def\sc{{\rm sc}}
\def\set#1{\mathsf{#1}}
\newtheorem{theo}{{Theorem}}
\newtheorem{lemma}{{Lemma}}
\newtheorem{prop}{{Proposition}}
\newtheorem{remark}{{Remark}}
\newtheorem{defi}{{Definition}}
\begin{document}
\title{Compression of quantum shallow-circuit states}
 
\author{Yuxiang Yang}
\affiliation{ QICI Quantum Information and Computation Initiative, Department of Computer Science, The University of Hong Kong, Pokfulam Road, Hong Kong SAR, China}

\begin{abstract}
    Shallow quantum circuits feature not only computational advantages over their classical counterparts but also cutting-edge applications.
    Storing quantum information generated by shallow circuits is a fundamental question of both theoretical and practical importance that remained largely unexplored. 
    In this work, we show that $N$ copies of an unknown $n$-qubit state generated by a fixed-depth circuit can be compressed into a hybrid memory of $O(n \log_2 N)$ (qu)bits, which achieves the optimal scaling of memory cost.
    Our work shows that the computational complexity of resources can significantly impact the rate of quantum information processing, offering a unique and unified view of quantum Shannon theory and quantum computing.
\end{abstract}
 
\maketitle

%\section{Introduction} 

\noindent{\it Introduction.} Shallow quantum circuits are a focus of recent research, for they are arguably the most accessible resources with genuine quantum features and advantages. Fundamentally, shallow quantum circuits with constant depth are hard to simulate classically (unless BQP $\subseteq$ AM) \cite{terhal2002adaptive}, and they outperform their classical counterparts in certain computational tasks \cite{bravyi2018quantum,bravyi2020quantum}.
Practically, variational shallow circuits \cite{benedetti2019parameterized,cerezo2021cost,skolik2021layerwise,abbas2021power} will remain a core ingredient of quantum algorithms in the noisy and intermediate-scale quantum (NISQ) era \cite{preskill2018quantum}. 
Efficient methods of learning shallow and bounded-complexity quantum circuits have recently been proposed \cite{zhao2023learning,yu2023learning,huang2024learning}.

Here we ask a fundamental question: Given $N$ copies of an {\em unknown} $n$-qubit state and the promise that it is generated by a shallow circuit, is there a faithful compression protocol that encodes the $N$-copy state into a memory of fewer (qu)bits and then decodes it up to an error vanishing at $N\to\infty$? 
Processing quantum states in the many-copy form is important for extracting, storing, and distributing quantum information.
Tasks where many-copy states serve as a fundamental resource, to list a few, include quantum metrology \cite{giovannetti2004quantum,giovannetti2006quantum,giovannetti2011advances}, state tomography \cite{d2001quantum,d2003quantum} and shadow tomography \cite{aaronson2018shadow,huang2020predicting}, quantum cloning \cite{gisin1997optimal,werner1998optimal,bruss1998optimal,scarani2005quantum}, and hypothesis testing \cite{helstrom1969quantum,yuen1975optimum,1976quantum,kholevo1979asymptotically}.
Quantum algorithms such as quantum principle component analysis \cite{lloyd2014quantum} also require states in the many-copy form.
As such, compression of many-copy quantum states is a basic and crucial protocol required for their storage and transmission.
In the literature, compression of many-copy states was first studied for pure qubits \cite{plesch2010efficient}, experimentally demonstrated in Ref.~\cite{rozema2014quantum}, and later generalized in a series of works to mixed qudits \cite{yang2016prlefficient,yang2016prlyang,yang2018prsaquantum,yang2018titcompression}.
However, regarding states generated by shallow quantum circuits, 
the existing results are not applicable, for they all assume the state to be in a fixed-dimension space. Here, instead, we consider states in a growing-dimension ($D=2^n$) space. 
Therefore, studying the compression of shallow-circuit states not only requires a better understanding of this important family of states but also demands new techniques of asymptotic quantum information processing.

\begin{figure}[bt] 
    \includegraphics[width=\linewidth]{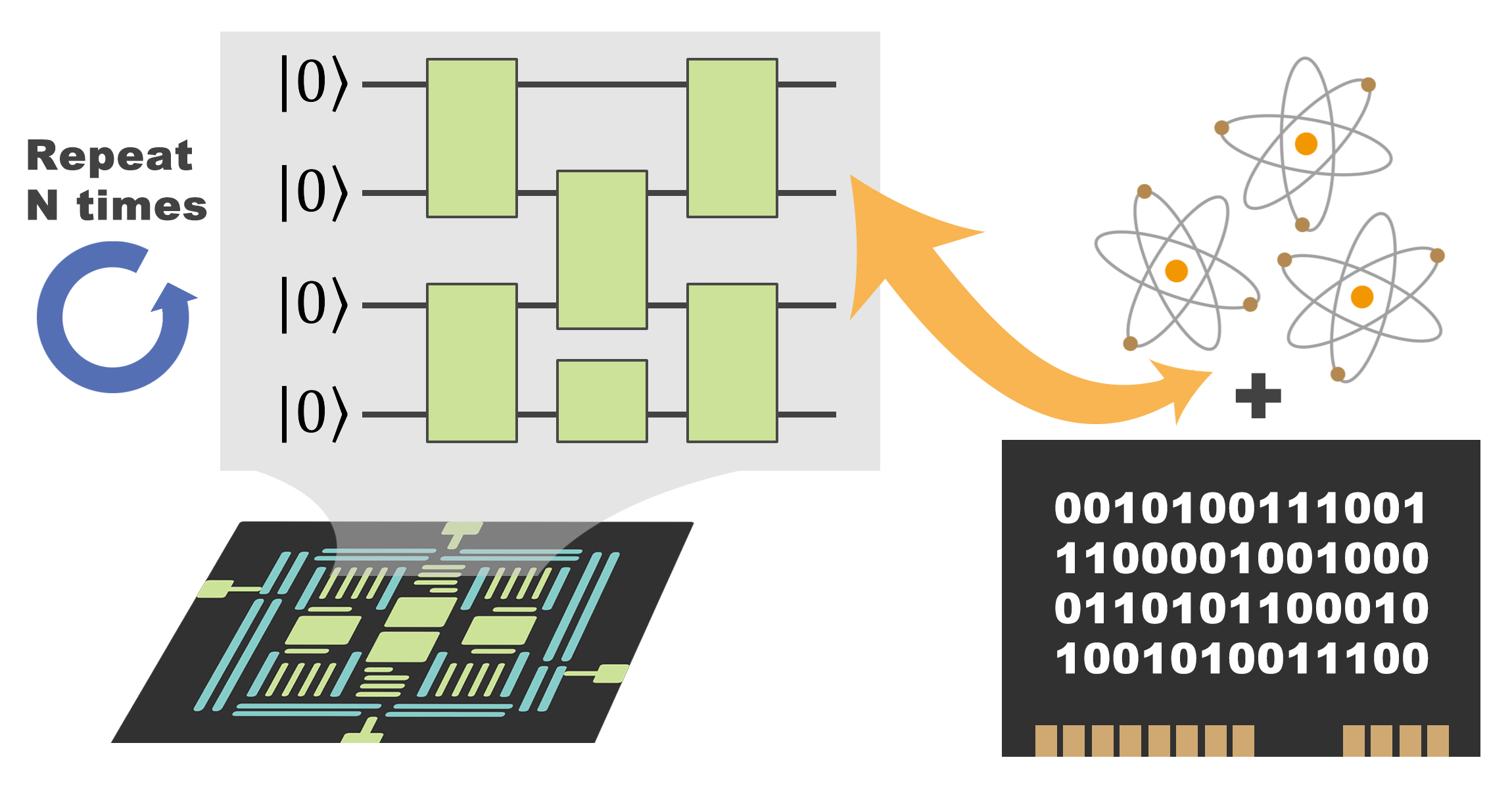}
    \caption{We show that $N$ copies of an $n$-qubit state generated by an unknown brickwork shallow circuit can be compressed into a classical-quantum hybrid memory of size $O(n\log_2 N)$ with vanishing error.} 
    \label{fig:compression}
\end{figure}

Without compression, storing the input state requires $N\cdot n$ qubits. Our main contribution, as illustrated in Figure \ref{fig:compression}, is to show that a faithful $N$-copy compression exists for brickwork shallow-circuit states, as long as $N$ grows at least as a polynomial of $n$ with a high enough degree. The memory cost of the compression is linear in $n$ and logarithmic in $N$, achieving the optimal rate with an exponential memory reduction in terms of $N$. Intriguingly, the memory does not have to be fully quantum. Instead, a classical-quantum hybrid memory suffices, where the ratio between the number of qubits and the number of classical bits decreases as $O(\log_2 n/\log_2 N)$. That is to say, when $N$ is large, the memory consists mainly of classical bits, while we also prove that a fully classical memory does not work.
The compression is achieved with novel tools, which we develop for quantum information processing in the asymptotic regime of many copies. They can be further applied to other information-processing tasks involving complexity-constrained quantum states.

%\section{Results}

\smallskip

\noindent{\it The setup.} 
For a pure state $|\psi\>$, $\psi$ denotes the projector $|\psi\>\<\psi|$.  For a unitary $U$, the corresponding calligraphic letter $\map{U}$ denotes the corresponding channel $\map{U}(\cdot):=U(\cdot)U^\dag$.
The big-$O$, big-$\Omega$, and big-$\Theta$ notation \footnote{For two real-valued functions $f,g$, $f=O(g)$ if there exist $n_0$ and $c>0$ such that $|f(n)|\le c\cdot g(n)$ for every $n\ge n_0$, $f=\Omega(g)$ if $g=O(f)$, and $f=\Theta(g)$ if both $f=O(g)$ and $f=\Omega(g)$. In addition, we denote by $f\ll g$ (which is also denoted by $f=o(g)$ in the literature) for non-negative functions $f$ and $g$ if $f(n)/g(n)$ vanishes in the limit of large $n$.} will be used to discuss the asymptotic behavior of parameters.
A function $f=f(n)$ is polynomial in $n$, denoted by $f\in\poly(n)$, if the exists a constant $k\ge 0$ such that $f=O(n^k)$.
 
Given a set $\set{S}$ of quantum states, the task of faithful $N$-copy compression is to design a protocol that consists of an encoder $\map{E}_N$ and a decoder $\map{D}_N$ such that the compression error vanishes for large $N$:
\begin{align}\label{faithfulnesscompress}
\lim_{N\to\infty}\sup_{\rho\in\set{S}}d_{\Tr}\left(\map{D}_N\circ\map{E}_N(\rho^{\otimes N}),\rho^{\otimes N}\right)=0.
\end{align}
Here $d_{\Tr}$ denotes the trace distance between quantum states. We focus on the blind setting \cite[Chapter 10]{hayashi2016quantum}, where $\map{E}_N$ and $\map{D}_N$ are dependent on $N$ but are independent of the input state. The goal of compression is to reduce the memory cost 
$    M:=\log_2\left|\Supp\left\{\map{E}_N(\rho^{\otimes N})\right\}_{\rho\in\set{S}}\right|$,
i.e., the number of (qu)bits required for storing $\map{E}_N(\rho^{\otimes N})$, while respecting the faithfulness condition (\ref{faithfulnesscompress}).

Here, we are interested in $n$-qubit pure states generated by circuits of depth no more than $d$. As a proof-of-principle example, we focus on the most representative case of (1D) brickwork shallow circuits and consider the set of \emph{shallow-circuit states} $\set{S}_{\sc}:=\left\{|\psi\>~:~|\psi\>=U_{\rm sc}|0\>^{\otimes n}\,\exists\,U_{\sc}\right\}$,
where $U_\sc$ is a brickwork circuit with bounded depth $(\le d)$ (as illustrated in Figure \ref{fig:compression}). We treat $n$ as the asymptotic parameters, $N$ as a growing function of $n$, and $d$ as a fixed parameter.

We begin with a sketch of the idea. 
%Note that, for fixed $D$, it is not hard to achieve an exponential compression in $N$, given  $N$ identical copies of a $D$-dimensional pure state. Indeed, as the $N$-copy state lives in a symmetric subspace of dimension ${N+D-1\choose D-1}$, one may store it in $O(\log_2 N)$ qubits. Nonetheless, in our setting $D=2^n$ is also an asymptotic parameter and we are most interested in the case when $D\gg N\in\poly(n)$, thus the exponential compression breaks down. It is also observed that the shallow-circuit state can be parameterized by $O(n)$ real parameters despite living in a space of $2^n$ dimension. This hints to another possibility of dimension reduction, but it is not obvious how to utilize this feature.  
%Our compression protocol is built on unifying coherently the above ideas with tools tailor-made for shallow-circuit states. 
First, by performing tomography on a negligible portion of the copies (while keeping most of the $N$ copies untouched), we \emph{localize} the compression problem by fixing the unknown shallow-circuit state to a neighborhood (i.e., a region with radius vanishing in $N$). We then show that all shallow-circuit states in this neighborhood can be generated by a circuit consisting of $O(n)$ small rotations plus a fixed shallow circuit, thereby locally parameterizing this neighborhood by $O(n)$ parameters. We proceed to establish an approximate equivalence between copies of a state within this neighborhood and a (multi-mode) coherent state of light, which can be compressed by photon number truncation. More details are as follows.
 
\smallskip

\noindent{\it Efficient local parameterization of shallow-circuit states.}
The first ingredient is a tool to pin the shallow-circuit state $|\psi_{\sc}\>$ down to a small local region parameterised by up to $\poly(n)$ real parameters. We call this task the \emph{local parameterization of quantum states}, which is achieved by obtaining a crude estimate and then constructing a local model that incorporates the remaining uncertainty of the state. The first step requires shallow-circuit state tomography, e.g., \cite{zhao2023learning}. The other step, however, is rather tricky as one has to parameterise the estimated neighborhood with as few parameters as possible.

We begin with general definitions. A unitary gate $U^\epsilon$ is called an $\epsilon$-rotation if $d_{\diamond}(\map{U}^\epsilon,\map{I})\le\epsilon$, where $d_{\diamond}$ denotes the diamond distance. 
An $\epsilon$-rotation can be expressed as a short-time unitary evolution $e^{-i2\epsilon H}$ for some Hermitian $H$ with $\|H\|_{\infty}\le 1$ (see Appendix for the proof). 
\begin{defi}
    Given a set $\set{S}$ of $n$-qubit quantum states and a fixed quantum state $|\psi_0\>=U_0|0\>^{\otimes n}$, with the guarantee that $d_{\Tr}(\psi,\psi_0)\le\epsilon$ for every $|\psi\>\in\set{S}$, the task of local parameterization of $\set{S}$ is to identify a set $\set{W}_{\rm loc}$ of $O(\epsilon)$-rotations, such that for every $|\psi\>\in\set{S}$ there exists a $W\in\set{W}_{\rm loc}$ with $|\psi\>=U_0W|0\>^{\otimes n}$. Moreover, the local parameterization is efficient if $\set{W}_{\rm loc}$ can be parameterized with $\poly(n)$ local parameters, i.e., if there exists an on-to function from $\set{E}^{f}$ to $\set{W}_{\rm loc}$ with $\set{E}\subset\R$ being an $O(\epsilon)-$ length interval and $f\in\poly(n)$.
\end{defi}
It is quite easy to find a local parameterization: Applying $(U_0)^\dag$ on every $|\psi\>$, the similarity of $U_0^\dag|\psi\>$ and $|0\>^{\otimes n}$ guarantees the existence of an $O(\epsilon)$-rotation. However, achieving an efficient parameterization is nontrivial. For an $n$-qubit state, the above naive strategy leaves the state in a neighborhood with exponentially many parameters.
A hope is to improve parameter efficiency when the set of states under consideration has a certain structure. For instance, when $\set{S}$ consists of states generated by locally acting an unknown quantum gate on a few qubits of a fixed $n$-qubit state $|\psi_0\>$.
Suppose we know that all states in $\set{S}$ are close to $|\psi_0\>$. It is quite natural to speculate that we could replace the few-qubit gate by an $O(\epsilon)$-rotation, acting on the same qubits, to achieve an efficient local parameterization (since the $O(\epsilon)$-rotation acts only on a few qubits, it can be modeled by fewer parameters).
Somewhat surprisingly, this idea fails: There exists a bipartite pure state $|\psi\>_{AB}$ and a unitary $V_{B}$ acting on $B$ such that $1)$ $d_{\Tr}((\map{I}_A\otimes\map{V}_{B})(\psi_{AB}),\psi_{AB})\le\epsilon$ and $2)$ there is no $O(\epsilon)$-rotation $V_{B}^\epsilon$ such that $(\map{I}_A\otimes\map{V}_{B})(\psi_{AB})=(\map{I}_A\otimes\map{V}_{B}^\epsilon)(\psi_{AB})$ (see 
\footnote{One such example is  \unexpanded{$|\psi\rangle_{AB}=\sqrt{1-\epsilon}|0\rangle_{A}|00\rangle_{B}+\sqrt{\epsilon}|1\rangle_A\frac{|10\rangle_B+|20\rangle_B}{\sqrt{2}}$} (where $A$ is a qubit system and $B$ consists of two qutrit systems) and a gate $V_{B}$ that acts trivially on the second qutrit if the first qutrit is in \unexpanded{$|0\rangle$} and maps the second qutrit from  \unexpanded{$|0\rangle$} to \unexpanded{$|1\rangle$} (\unexpanded{$|2\rangle$)} conditioning on the first qutrit being  \unexpanded{$|1\rangle$} (\unexpanded{$|2\rangle$)}. The output state is   \unexpanded{$\sqrt{1-\epsilon}|0\rangle_A|00\rangle_B+\sqrt{\epsilon}|1\rangle_A|B_0\rangle_{B}$}
(with \unexpanded{$|B_0\rangle:=\frac{|11\rangle+|22\rangle}{\sqrt{2}}$)}. If a desired $O(\epsilon)$-rotation $V^\epsilon$ on $B$ exists, it has to take the initial marginal state on $B$ to the marginal of the output state, i.e.,
 \unexpanded{$(1-\epsilon)|00\rangle\<00|+\epsilon\left(\frac{|10\rangle+|20\rangle}{\sqrt{2}}\right)\left(\frac{\<10|+\<20|}{\sqrt{2}}\right)\xrightarrow{V^\epsilon}(1-\epsilon)|00\rangle\<00|+\epsilon|B_0\rangle\<B_0|$}.
Since a unitary cannot change the eigenvalues, we need
 \unexpanded{$V^\epsilon\left(\frac{|1\rangle+|2\rangle}{\sqrt{2}}\otimes|0\rangle\right)=|B_0\rangle$}.
Note that the initial state is a product state while the final state has entanglement bounded away from zero, this transformation cannot be achieved by a unitary gate close to the identity.} for a concrete example).
%Note that if the initial state is full-rank on $B$ with non-vanishing eigenvalues, there could be a way to find $V^\epsilon_{B}$. But this requirement is not very practical.

The above remark shows that an efficient local parameterization is not guaranteed even if the set of states is parameterized by only a constant number of parameters.
As such, our goal -- an efficient local parameterization of brickwork shallow-circuit states -- is not as simple as it appears. We first use a lightcone argument to reduce the generating circuit into two effective layers, which is estimated up to a small error $\epsilon$. 
We then apply the inverse of the estimated outer layer gates to the state. Each gate in the outer layer of the generating circuit gets canceled out, up to an $O(\epsilon)$-rotation acting on a constant number of adjacent qubits. These $O(\epsilon)$-rotations are overlapping, but only with a constant of neighbors due to the locality of the shallow circuit. Therefore, the impact of each $O(\epsilon)$-rotation is limited to a constant-size neighborhood, and we can fully capture the impact by another $O(\epsilon)$-rotation with a constant number of parameters. Since there are only $O(n)$ such $O(\epsilon)$-rotations, putting these all together we get a parameter-efficient localization of the outer layer. A similar treatment can be applied to the inner layer. All together we get the desired result
(see Appendix for details):
\begin{theo}
%[Efficient local parameterization of shallow-circuit states.]
\label{theo1}
Let $\set{S}_{\sc}$ be the subset of $n$-qubit states generated by depth-$d$ (1D) brickwork circuits. Given a fixed shallow-circuit state $|\psi_0\>=U_0|0\>^{\otimes n}$ of the same structure and the guarantee $d_{\Tr}(\psi_{\sc},\psi_0)\le\epsilon$ for every $|\psi_{\sc}\>\in\set{S}_{\sc}$, there exists a set $\set{W}_{\rm loc}$ of $O(\epsilon)$-rotations, parameterized by $O(n\cdot (2^{16d}/d))$ (real) local parameters, such that for every $|\psi_{\sc}\>\in\set{S}_\sc$ there exists a $W\in\set{W}_{\rm loc}$ with $|\psi_\sc\>=U_0W|0\>^{\otimes n}$. Here $W=\prod_{j=1}^{n_{\rm gate}}W^{28\epsilon}_j$
with $n_{\rm gate}\le n/d+4$, and each $W^{28\epsilon}_j$ is a $(28\epsilon)$-rotation acting on no more than $8d$ neighboring qubits. 
On each qubit, no more than $n_{\rm overlap}=4$ gates in  $\{W_j^{28\epsilon}\}$  act non-trivially.  
\end{theo}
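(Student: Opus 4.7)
The plan is to exploit the two-layer block structure of depth-$d$ brickwork circuits and perform a layer-by-layer cancellation of $U_0^\dag U_\sc$ on the all-zero input, with the residuals being $O(\epsilon)$-rotations on small enlarged blocks. First, by a standard lightcone argument, any depth-$d$ 1D brickwork circuit admits a decomposition $U=U^{(2)}U^{(1)}$ where each $U^{(i)}$ is a tensor product of non-overlapping unitaries on $O(d)$ consecutive qubits, with the supports of the two layers staggered relative to one another. I would apply this decomposition to both $U_\sc$ and $U_0$, reducing $W|0\>^{\otimes n} = U_0^\dag U_\sc |0\>^{\otimes n}$ to two successive block-diagonal cancellations.

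Next I would process the outer layer. Setting $|\psi^{(1)}\>:=U_\sc^{(1)}|0\>^{\otimes n}$ and $R^{(2)}:=(U_0^{(2)})^\dag U_\sc^{(2)}$, the block-diagonal structure of $R^{(2)}$ together with $d_{\Tr}(U_\sc|0\>^{\otimes n},|\psi_0\>)\le\epsilon$ implies that $R^{(2)}|\psi^{(1)}\>$ is $O(\epsilon)$-close to $(U_0^{(2)})^\dag|\psi_0\>$. As the footnote counter-example warns, this does not by itself make each block factor $R^{(2)}_i$ close to the identity. The key observation is that $|\psi^{(1)}\>$ itself comes from a brickwork depth-$d$ circuit, so the reduced density matrix on an outer block enlarged by its overlapping inner-layer blocks (at most $8d$ qubits in total) already purifies every degree of freedom on which $R^{(2)}_i$ can act non-trivially. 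I would then establish, via a local Stinespring-type equivalence, that $R^{(2)}_i$ may be replaced by an $O(\epsilon)$-rotation $W^{28\epsilon}_j$ on this enlarged support which produces the same action on $|\psi^{(1)}\>$. A parallel argument treats the inner layer, whose input is the product state $|0\>^{\otimes n}$, yielding a further batch of $O(\epsilon)$-rotations on blocks of at most $4d$ qubits.

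Finally I would count parameters. An $O(\epsilon)$-rotation on $\le 8d$ qubits is determined, via $W^{28\epsilon}_j = e^{-i2\cdot(28\epsilon)H_j}$ with $\|H_j\|_\infty\le 1$, by a Hermitian generator in a real vector space of dimension at most $4^{8d}=2^{16d}$. The outer and inner layers together contribute $n_{\rm gate}\le n/d+4$ such rotations (the additive constant absorbing boundary terms), and since each outer block overlaps at most two inner blocks on each side, every qubit carries at most $n_{\rm overlap}=4$ non-trivial factors. Multiplying the per-rotation parameter count by the number of rotations gives the advertised $O(n\cdot 2^{16d}/d)$ real parameters.

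The hard part will be the second step: converting a block-local unitary that only approximately fixes the intermediate state into a unitary that is uniformly $O(\epsilon)$-close to the identity on a slightly enlarged support. This is exactly where the shallow-circuit promise becomes indispensable, since the inner layer confines the relevant entanglement to a constant-size neighborhood; the argument must also be quantitative enough to recover the explicit constants $28\epsilon$ and $8d$ appearing in the statement.
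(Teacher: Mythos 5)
Your overall blueprint — light-cone reduction of a depth-$d$ brickwork circuit to two staggered block layers, layer-by-layer cancellation of $U_0^\dag U_\sc$, and counting parameters by the dimension of Hermitian generators on $\le 8d$ qubits — matches the paper's proof. But the central step, where you claim that each block factor $R_i^{(2)}=(U_{0,i}^{(2)})^\dag U_{\sc,i}^{(2)}$ ``may be replaced by an $O(\epsilon)$-rotation $W_j^{28\epsilon}$ on this enlarged support which produces the same action on $|\psi^{(1)}\>$,'' is false as stated and papers over the exact difficulty the footnote counterexample is warning about.

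Here is why the claim fails. An $O(\epsilon)$-rotation $W$ satisfies $\|W|\psi^{(1)}\>-|\psi^{(1)}\>\|=O(\epsilon)$, so if $R_i^{(2)}|\psi^{(1)}\>=W|\psi^{(1)}\>$ you would be asserting that each $R_i^{(2)}$ approximately fixes $|\psi^{(1)}\>$. But the promise only gives $R^{(2)}|\psi^{(1)}\>$ close to $(U_0^{(2)})^\dag|\psi_0\>=U_0^{(1)}|0\>^{\otimes n}$, which is a product over \emph{first-layer} blocks with factors $U_{0,k}^{(1)}|0\>$ that can be $\Omega(1)$-far from the factors $U_{\sc,k}^{(1)}|0\>$ of $|\psi^{(1)}\>$. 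So $R_i^{(2)}|\psi^{(1)}\>$ is \emph{not} $O(\epsilon)$-close to $|\psi^{(1)}\>$, and no $O(\epsilon)$-rotation on any support can reproduce its action. The same issue is why your appeal to ``the enlarged block purifies the relevant degrees of freedom'' does not by itself close the gap: the purity of the marginal on the $8d$-qubit block lets you invoke a Uhlmann unitary, but that unitary is a genuine $\Omega(1)$-rotation, not an $\epsilon$-rotation.

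The paper resolves this with an approximate decoupling lemma that you are missing. Because the state before the second layer is a tensor product across first-layer blocks, and the ``middle'' qubits $Q_j^{(2)}(M)$ of each second-layer block are fresh $|0\>$'s untouched by the first layer, the residual $\hat{U}_j^{(2)\dag}U_j^{(2)}$ can be decomposed as $\map{V}_j^{12\epsilon}\circ\bigl(\map{W}_j^{(T)}\otimes\map{W}_j^{(B)}\bigr)$: two Uhlmann unitaries $W_j^{(T)},W_j^{(B)}$ acting on the top and bottom subsets $Q_j^{(2)}(T)$ and $Q_j^{(2)}(B)$ (these are in general \emph{not} small rotations), times a residual that is a $12\epsilon$-rotation. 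The $W_j^{(T)},W_j^{(B)}$ factors are then merged with the neighboring first-layer gates $U_i^{(1)}$ before the first layer is processed, and only after that merge does the first-layer residual become a $28\epsilon$-rotation. Without this two-stage structure — large-but-local Uhlmann corrections absorbed downstream, then a small residual — you cannot push the $O(\epsilon)$ bound through to the rotations, and the constants $28\epsilon$, $8d$, and $n_{\rm overlap}=4$ in the theorem will not come out. I would recommend you state and prove the decoupling lemma explicitly (it is a two-line Uhlmann argument on the $A$ and $B$ marginals, given that $C=Q_j^{(2)}(M)$ starts and ends in $|0\>$) before attempting the layer cancellation.
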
  

\iffalse
We emphasize once again the importance of this theorem.
Via tomography, one can only fix the $n$-qubit shallow-circuit state, which lives in a Hilbert space of very high dimension, to a small but still high-dimensional region. With this theorem, one can further delegate the uncertainty to small local rotations that live in a much smaller parameter space.
This achieves an exponential reduction of the parametric dimension and serves as a crucial step of our compression protocol.
\fi

\smallskip

\noindent{\it Quantum local asymptotic normality (Q-LAN).}
Generally speaking, taking $N$ copies of a $D(<\infty)$-dimensional state as input, a Q-LAN \cite{guctua2006local,guctua2007local,kahn2009local} is an input-independent transformation that outputs a state close to a multi-mode Gaussian state. It is approximately reversible, as there exists an inverse Q-LAN recovering the original state up to an error vanishing in $N$. Here locality means that the Q-LAN works only if the concerned states are in a neighborhood with a vanishing radius.
The processing of many-copy finite-dimensional quantum systems can then be reduced to that of Gaussian systems, which has been a key ingredient to achieve the ultimate precision limit of multiparameter quantum state estimation \cite{gill2013asymptotic,yang2019cmpattaining,girotti2023optimal}.

For our task, we would like to reduce $N$ copies of a locally-parameterized shallow-circuit state to a Gaussian state via the Q-LAN.
A crucial issue with the existing Q-LANs is they require $D$ to be fixed with respect to $N$ ($n$), whereas we need $D$ to be growing exponentially fast in $n$. In addition, for existing Q-LANs, the parameterization of the $D$-dimensional state is set to a specific form  (see \cite{kahn2009local}) that does not match ours.
In the following, we bridge the gaps by proving a \emph{new} Q-LAN.

We prove the result in a more general form that covers the reduced state in Theorem \ref{theo1}.
First, define a circuit template to be $Q=(q_1,\dots,q_G)$, with each $q_j\subset[n]$ and $|q_j|\le \tilde{d}$ for some $\tilde{d}$ independent of $n$. Let 
$\set{S}_{N,\eta}(Q):=\left\{|\psi(\vec{H})\>^{\otimes N}~:~|\psi(\vec{H})\>=\prod_{j=1}^{G}e^{-i\frac{\eta}{\sqrt{N}} H_j}|0\>^{\otimes n}\right\}$
be the $N$-copy family of $n$-qubit pure states generated by $O(\eta/\sqrt{N})$-rotations with a fixed $Q$, where every $H_j$ is a Hermitian operator acting trivially on qubits not in $q_j$ and $\|H_j\|_{\infty}\le 1$.  
Second, define $\set{B}(Q)=\cup_{j=1}^G\left\{\vec{k}\in2^{[n]}~:~k_i=0~\forall i\not\in q_j\right\}$ that keeps track of all $n$-qubit strings that could appear when acting $\sum_j H_j$ upon $|0\>^{\otimes n}$, and by
\begin{align}\label{cohstate}
    |\vec{u}(\vec{H})\>_{\rm coh}=\bigotimes_{\vec{k}\in\set{B}(Q)}|u^{\vec{k}}_1+iu^{\vec{k}}_2\>_{\rm coh}
\end{align}
a coherent state with no more than $G\cdot 2^{\tilde{d}}$ modes, where $|u\>_{\rm coh}:=e^{-|u|^2/2}\sum_{m=0}^{\infty}\frac{u^m}{\sqrt{m!}}|m\>$ is a (single-mode) coherent state in an infinite-dimensional space spanned by the Fock basis $\{|m\>\}$, and $u^{\vec{k}}_1$ ($u^{\vec{k}}_2$) is the real (imaginary) part of $\eta\<\vec{k}|\sum_{j=1}^G H_j|\vec{0}\>$. The amplitude of each mode of $|\vec{u}(\vec{H})\>_{\rm coh}$ is upper bounded by 
\begin{align}\label{amplitudebound1}
    \max_{\vec{k}}|u^{\vec{k}}_1+iu^{\vec{k}}_2|\le \eta\cdot n_{\rm overlap}
\end{align}
where $n_{\rm overlap}:=\max_{i\in[n]}\sum_{j:i\in q_j}1$ is the maximum of the number of subsets overlapping at any local site.
Our new Q-LAN is stated as follows:
\begin{theo}
%[Q-LAN for shallow-circuit states.]
\label{theo2}
When $\eta^2 G^2\ll \sqrt{N}$, there exist quantum channels $\map{V}_N$ (the Q-LAN) and $\map{V}^\ast_N$ (the inverse Q-LAN) such that  any $|\psi(\vec{H})\>\in\set{S}_{N,\eta}(Q)$ can be reversibly converted into a coherent state (\ref{cohstate}) with no more than $G\cdot 2^{\tilde{d}}$ modes up to errors vanishing in $N$.
\iffalse
 \begin{align*}
        &\lim_{N\to\infty}\sup_{\vec{H}}d_{\Tr}\left(\map{V}_N(\psi(\vec{H}^{\otimes N})),|\vec{u}(\vec{H})\>\<\vec{u}(\vec{H})|_{\rm coh}\right)=0\\
        &\lim_{N\to\infty}\sup_{\vec{H}}d_{\Tr}\left(\psi(\vec{H}^{\otimes N}),\map{V}^\ast_N(|\vec{u}(\vec{H})\>\<\vec{u}(\vec{H})|_{\rm coh})\right)=0.
    \end{align*}
    \fi
\end{theo}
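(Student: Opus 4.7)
The plan is to reduce the ordered product of $G$ small rotations to a single exponential, Taylor-expand in $\eta/\sqrt N$, and map the resulting symmetric $N$-copy state into a multi-mode Fock space where it becomes approximately a coherent state. Set $V:=\prod_j e^{-i(\eta/\sqrt N)H_j}$ and $U:=e^{-i(\eta/\sqrt N)H_{\rm tot}}$ with $H_{\rm tot}:=\sum_j H_j$. Iterated Baker--Campbell--Hausdorff, using $\|H_j\|_\infty\le 1$, gives $\|V-U\|_\infty=O(\eta^2 G^2/N)$. Hence $U|0\rangle^{\otimes n}$ and $V|0\rangle^{\otimes n}$ differ by $O(\eta^2 G^2/N)$ in norm, the per-copy overlap is $1-O(\eta^4 G^4/N^2)$, and the $N$-copy trace distance between $|\psi_V\rangle^{\otimes N}$ and $|\psi_U\rangle^{\otimes N}$ is $O(\eta^2 G^2/\sqrt N)$, which vanishes precisely when $\eta^2 G^2\ll\sqrt N$. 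From here I only need to analyze $|\psi_U\rangle^{\otimes N}$.

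Enumerate $\set B(Q)\setminus\{\vec 0\}=\{\vec k_1,\dots,\vec k_L\}$ with $L\le G\cdot 2^{\tilde d}$ and assign one bosonic mode per $\vec k_i$. Define the isometry
\begin{equation*}
W_N|m_1,\dots,m_L\rangle:=\bigl|\mathrm{Sym}\bigl(\vec 0^{\,\otimes(N-\sum_i m_i)},\vec k_1^{\,\otimes m_1},\dots,\vec k_L^{\,\otimes m_L}\bigr)\bigr\rangle
\end{equation*}
from the truncated Fock space $\{(m_1,\dots,m_L):\sum_i m_i\le N\}$ into the symmetric subspace of $(\C^{2^n})^{\otimes N}$. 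Take $\map V_N$ to be $W_N^\dagger$ composed with the projection onto the image of $W_N$, extended to a CPTP map on the complement by a fixed vacuum output, and take $\map V_N^\ast$ to be $W_N$ composed with truncation of the input Fock state to total excitation $\le N$ (extended trivially on the non-symmetric complement of the $N$-copy space). Taylor expansion gives $U|0\rangle^{\otimes n}=|\vec 0\rangle+\sum_i c_{\vec k_i}|\vec k_i\rangle+|\chi\rangle$ with $c_{\vec k_i}=-i\alpha_{\vec k_i}/\sqrt N+O(\eta^2/N)$, where $|\chi\rangle$ has norm $O(\eta^2 G^2/N)$ and is supported outside $\set B(Q)$ (using $\alpha_{\vec k}=0$ for $\vec k\notin\set B(Q)$); this leakage contributes at most $O(\eta^4 G^4/N)$ weight to the part of the $N$-copy state outside the $\set B(Q)$-symmetric subspace, and is absorbed by the projection. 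Multinomial expansion then produces symmetric-basis amplitude
\begin{equation*}
\sqrt{\binom{N}{N-\sum_i m_i,\,m_1,\dots,m_L}}\,c_{\vec 0}^{\,N-\sum_i m_i}\prod_i c_{\vec k_i}^{\,m_i},
\end{equation*}
and applying $W_N^\dagger$ converts this into an explicit vector on the truncated Fock space. In the regime $\sum_i m_i\ll N$, Stirling combined with the standard limit $|c_{\vec 0}|^{N-\sum_i m_i}\to e^{-\sum_i|\alpha_{\vec k_i}|^2/2}$ matches these amplitudes, mode by mode, with the coherent-state Fock amplitudes $e^{-|\alpha|^2/2}(-i\alpha)^m/\sqrt{m!}$; the overall factor $(1-i\eta\gamma_{\vec 0}/\sqrt N)^N$ appears as an irrelevant global phase. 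The amplitude bound $|\alpha_{\vec k_i}|\le\eta\cdot n_{\rm overlap}$ and the uniform mode count $L\le G\cdot 2^{\tilde d}$ make the coherent-state weight beyond total Fock number $N$ super-exponentially small, so both truncations are essentially free.

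The hard part is controlling all these errors uniformly in $\vec H$ as the dimension $D=2^n$ grows with $n$. Existing Q-LAN proofs are set up in a fixed dimension, whereas here the Hilbert-space dimension, the operator norm $\|H_{\rm tot}\|_\infty$, and the higher-order Taylor remainders could a priori scale badly with $n$. The saving feature is that only local data -- the per-term locality $\tilde d$ and the overlap multiplicity $n_{\rm overlap}$ -- enters each per-mode amplitude, and the number of modes $L\le G\cdot 2^{\tilde d}$ is linear in $G$ rather than exponential in $n$. Assembling the BCH, Taylor, multinomial/Stirling, and Fock-truncation errors via the triangle inequality then yields a total deviation that vanishes in $N$ whenever $\eta^2 G^2\ll\sqrt N$, establishing the Q-LAN $\map V_N$ and its inverse $\map V_N^\ast$ in the explicit form above.
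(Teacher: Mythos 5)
Your proposal follows the same route as the paper's proof: a Zassenhaus/BCH reduction of the ordered product to a single exponential $e^{-i(\eta/\sqrt N)H_{\rm tot}}$, a locality argument restricting the first-order support to the $O(G\cdot 2^{\tilde d})$ strings in $\set{B}(Q)$, the isometry between the symmetric-type basis and multi-mode Fock states (your $W_N$ is the adjoint of the paper's $V_N$, with the same vacuum-completion to a channel), and a multinomial/Poisson amplitude comparison. The error scalings you write -- per-copy infidelity $O(\eta^4 G^4/N^2)$, $N$-copy leakage $O(\eta^4 G^4/N)$ -- are the correct ones, and your observation that only $n_{\rm overlap}$ and $\tilde d$ enter the per-mode amplitudes while the mode count stays $O(G)$ is exactly the feature that lets the argument survive growing $n$.

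The one genuine gap is that the final convergence is asserted rather than established. Matching the multinomial amplitude against the coherent-state Fock amplitude ``mode by mode'' via Stirling is the right intuition, but a pointwise amplitude comparison on a favorable set of $\vec m$ does not by itself give $\langle\vec u|_{\rm coh}V_N|\tilde\psi\rangle^{\otimes N}\to 1$ uniformly in $\vec H$. One must actually lower-bound that inner product: restrict the multinomial sum to $m_k\lesssim N^{1/4}\eta/K$, control $[N]_{|\vec m|}/N^{|\vec m|}\ge(1-|\vec m|/N)^{|\vec m|}$ on that range, and check that the residual weight outside the restricted range is negligible in both the multinomial and Poisson expansions. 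This is precisely where the assumption $\eta G\ll N^{1/4}$ (equivalently $\eta^2 G^2\ll\sqrt N$) enters a second time, independently of the BCH step. Your closing paragraph names this as ``the hard part'' and lists the ingredients, but does not carry out the bound, so the theorem is not yet proved at the level of rigor the paper supplies.
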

To establish the new Q-LAN, we first observe that the state can be expanded as $|\psi(\vec{H})\>=|0^n\>+\sum_{\vec{p}\in\set{B}(Q)\setminus\{0^n\}}(\eta/\sqrt{N})a_{\vec{p}}|b\>+O(\eta^2/N)$ for some bounded $\{a_{\vec{p}}\}$.
If $\eta^2/N$ is as small as assumed in Theorem \ref{theo2}, the difference between $|\psi(\vec{H})\>$ and $|\tilde{\psi}(\vec{H})\>\propto |0^n\>+\sum_{\vec{p}\in\set{B}(Q)\setminus\{0^n\}}(\eta/\sqrt{N})a_{\vec{p}}|b\>$ will be negligible, so we can focus on constructing a Q-LAN for $|\tilde{\psi}(\vec{H})\>^{\otimes N}$. 
Note that $|\tilde{\psi}(\vec{H})\>$ lives in a much smaller space of dimension $|\set{B}(Q)|=O(G)$.
It follows that we may express $|\tilde{\psi}(\vec{H})\>^{\otimes N}$ in the symmetric basis, and define $\map{V}_N$ to be one that maps symmetric states to multi-mode Fock states according to their types.
More details can be found in Appendix.

\smallskip

\begin{algorithm}[H]
  \caption{Compression protocol for $N$-copy shallow-circuit states.} \label{protocol:compression}
   \begin{algorithmic}[1]
  % \setcounterref{ALC@line}{0}
   \Statex {\em Encoder:}
   \Statex {\bf Input:}  $N$ copies of a $n$-qubit state $|\psi_\sc\>$ generated by a brickwork circuit of depth-$d$; 
    \Statex a configuration of parameters $(\epsilon_0, N_0,\alpha_0)$.
   \Statex {\bf Require:} A classical memory of $M_{\rm c}$ bits and a quantum memory of $M_{\rm q}$ qubits.
   \State Construct an $\epsilon_0$-covering set $\set{U}_{\sc}(\epsilon_0)$ for brickwork circuits of depth-$d$. That is, for every $|\psi_{\sc}\>$, there exists an element of $\set{U}_{\sc}(\epsilon_0)$ whose trace distance with $|\psi_\sc\>$ is upper bounded by $\epsilon_0$.
   \State Run state tomography with $N_0$ copies of $|\psi_\sc\>$, which outputs an estimate $|\hat{\psi}_{\sc}\>=\hat{U}|0\>^{\otimes n}$ with $\hat{U}\in\set{U}_{\sc}(\epsilon_0)$. Store $\hat{U}$ in the classical memory.
   \item Apply first the inverse of $\hat{U}^{\otimes (N-N_0)}$ and then the Q-LAN transformation on the remaining $N-N_0$ copies.
   \item   Amplify each mode of the resultant multi-mode coherent state: $|z\>\to |\sqrt{N/(N-N_0)}z\>$, i.e., with intensity gain $N/(N-N_0)$.
   \item   Compress the multi-mode coherent state via truncating each mode to less than $(e\alpha_0)^2$ photons. Store the resultant state in the quantum memory. 
   \end{algorithmic} 
    \begin{algorithmic}[1]
  %\setcounterref{ALC@line}{0}
   \Statex {\em Decoder:} 
   \Statex {\bf Require from the encoder:} both the quantum memory (step 5) and the classical memory (step 2).
   \item Apply the inverse Q-LAN transformation on the state of the quantum memory.
   \item Retrieve $\hat{U}$ from the classical memory and apply $\hat{U}^{\otimes N}$. 
   \item {\bf Output} the final state.
   \end{algorithmic}
\end{algorithm}

\noindent{\it The compression protocol.}
With both tools in place, we are now ready to define our compression protocol, shown as Protocol \ref{protocol:compression}.

In the End Matter, we give a high-level analysis of the protocol's error and its memory cost, leaving details to Appendix. Essentially, we show that Protocol \ref{protocol:compression} with a configuration $
N_0=\Theta(N^{1-\frac{1}{2}\Delta}), \epsilon_0 =\Theta(n\cdot N^{-\frac12(1-\frac23\Delta)})$, and $\alpha_0 =\Theta(\sqrt{N}\epsilon_0)$
achieves both faithfulness and the desired compression rate, where $\Delta>0$ is a coefficient independent of $n$ and $N$.Another important question besides the memory efficiency is the computational efficiency. 
In Appendix, we show that Protocol \ref{protocol:compression} can be implemented within $\poly(n)$ time, via a high-dimensional Schur transform \cite{krovi2019efficient} and parallelizing the tomography \cite{zhao2023learning} using the brickwork structure of generating circuits. 
It is a meaningful future direction of research to further reduce the complexity and to make the compression executable on practical devices.
%Unfortunately, the compression protocol in this work is not computationally efficient Appendix. The main obstacle is that the protocol requires searching over an exponentially large covering mesh of shallow-circuit states \cite{zhao2023learning}. 
%It is noteworthy that this is also the key step to convert a part of the memory to classical bits.
%It is thus intriguing to conjecture that any protocol using a hybrid memory is computationally inefficient. 
%On the other hand, there exist compression protocols using fully quantum memory \cite{yang2016prlefficient,yang2016prlyang} that do not require searching, and there remains hope that these protocols could inspire a computationally efficient protocol for shallow-circuit states.

Detailed error analysis (in the End Matter) also yields that $N$ must grow at least as a polynomial of $n$ with a high enough degree.
Summarizing, we have the following theorem as the main result for compression:
\begin{theo}
%[Faithful shallow-circuit state compression]
\label{theo:compression}
Let $\set{U}_\sc$ be the collection of all $n$-qubit pure states generated by depth-$d$ (1D) brickwork circuits, with $d$ being a fixed parameter. Assume $N=\Theta(n^{\frac{32}{3}+\gamma})$ for some $\gamma>0$. For any  $\Delta\in\left(\frac{6}{32+3\gamma},\frac{3}{4}-\frac{18}{32+3\gamma}\right)$, there exist a time-efficient compression protocol that faithfully compresses $N$ copies of any state from $\set{U}_\sc$ into a hybrid memory of $8(1-2\Delta/3)nd\left(\log_2 N+O\left(\log_2 n\right)\right)$ classical bits and $(n/d+4)2^{8d+1}/3\cdot\left(\Delta\log_2 N+O(\log_2 n)\right)$ qubits.
\end{theo}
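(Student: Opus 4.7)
The plan is to instantiate Protocol \ref{protocol:compression} with the configuration $N_0=\Theta(N^{1-\Delta/2})$, $\epsilon_0=\Theta(n\,N^{-(1-2\Delta/3)/2})$, $\alpha_0=\Theta(\sqrt{N}\,\epsilon_0)$ stated just before Theorem \ref{theo:compression}, and then bound, stage by stage, (i) the trace-distance error of the encode-decode pipeline and (ii) the classical and quantum memory used. The pipeline is a chain of maps---tomography on $N_0$ copies, undoing the estimate on the remaining $N-N_0$ copies, viewing each residual copy as $U_0 W|0^n\rangle$ via Theorem \ref{theo1}, applying the Q-LAN of Theorem \ref{theo2}, amplifying each coherent amplitude by $\sqrt{N/(N-N_0)}$, truncating each mode at $(e\alpha_0)^2$ photons, and reversing the last two steps at the decoder---so the triangle inequality reduces faithfulness to bounding the error introduced in each stage.

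For the faithfulness bound, the tomography step uses the shallow-circuit estimator of Ref.~\cite{zhao2023learning}, which returns $\hat U\in\set{U}_\sc(\epsilon_0)$ with $d_{\Tr}(\hat\psi_\sc,\psi_\sc)\le\epsilon_0$ from $N_0=\poly(n)/\epsilon_0^2$ copies; the configuration above is chosen precisely so that this sample complexity is met with vanishing failure probability. Conditioned on success, the remaining $N-N_0$ copies match the hypothesis of Theorem \ref{theo2} with $\eta=\Theta(\epsilon_0\sqrt{N})$, $G=n/d+4$, and $\tilde d=8d$, so the forward and inverse Q-LAN are each $o(1)$-faithful provided $\eta^2G^2\ll\sqrt{N}$; with $N=\Theta(n^{32/3+\gamma})$ this reduces to $n^4 N^{2\Delta/3}\ll N^{1/2}$, which is exactly what the lower bound $\Delta>6/(32+3\gamma)$ enforces. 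Amplification replaces each $z$ (with $|z|\le \eta\cdot n_{\rm overlap}=O(\epsilon_0\sqrt{N})$ by (\ref{amplitudebound1})) by $\sqrt{N/(N-N_0)}\,z$, changing per-mode fidelity by $O((N_0/N)^2|z|^2)$ and, summed over the $O(n\cdot 2^{8d})$ modes, staying $o(1)$. Fock truncation is controlled by the Poisson tail of a coherent state: cutting $|z\rangle$ at $(e\alpha_0)^2$ photons with $|z|\le\alpha_0$ yields trace-distance error $e^{-\Omega(\alpha_0^2)}$, which vanishes since $\alpha_0=\Theta(n\,N^{\Delta/3})\to\infty$. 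The upper bound $\Delta<3/4-18/(32+3\gamma)$ emerges when these constraints are closed off against the growth of $\alpha_0$ relative to $\sqrt{N-N_0}$.

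For the memory cost I count the two registers separately. An $\epsilon_0$-net for the $\Theta(nd)$ two-qubit gates of a depth-$d$ brickwork (each a point on the $16$-real-parameter manifold $U(4)$, with an operator-norm net of size $(C/\epsilon_0)^{16}$) composes gate-by-gate under operator-norm subadditivity to give $\log_2|\set{U}_\sc(\epsilon_0)|=8nd\log_2(1/\epsilon_0)+O(nd\log_2 n)$; substituting $\log_2(1/\epsilon_0)=\tfrac12(1-2\Delta/3)\log_2 N+O(\log_2 n)$ yields the stated $8(1-2\Delta/3)nd(\log_2 N+O(\log_2 n))$ classical bits. Quantumly, Theorem \ref{theo2} stores a Fock state on at most $(n/d+4)\cdot 2^{8d}$ modes, each encoded in $\lceil\log_2((e\alpha_0)^2+1)\rceil=(2\Delta/3)\log_2 N+O(\log_2 n)$ qubits, giving $(n/d+4)\cdot 2^{8d+1}/3\cdot(\Delta\log_2 N+O(\log_2 n))$ qubits in total.

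Time-efficiency will follow from three ingredients: parallelization of the tomography of Ref.~\cite{zhao2023learning} across the $O(n/d)$ disjoint lightcones of a depth-$d$ brickwork, explicit indexing of the $\epsilon_0$-net above, and the high-dimensional Schur transform of Ref.~\cite{krovi2019efficient}, which makes both the Q-LAN and its inverse poly-time in the effective dimension $|\set{B}(Q)|=\poly(n)$. The main obstacle I expect is simultaneously satisfying every inequality in the faithfulness analysis: tomography wants $N_0$ and $\epsilon_0$ large, the Q-LAN smallness condition wants $\eta=\Theta(\epsilon_0\sqrt{N})$ small, amplification wants $N_0/N$ small, and truncation wants $\alpha_0=\Theta(\epsilon_0\sqrt{N})$ large enough for $\log_2\alpha_0$ to carry information while remaining bounded by $\sqrt{N-N_0}$. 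The two endpoints of the $\Delta$-interval in the theorem are exactly the feasible region of this coupled system, and carving it out cleanly is the delicate part of the argument.
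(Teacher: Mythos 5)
Your overall plan — the same protocol, the same configuration $(\epsilon_0,N_0,\alpha_0)$, a stage-by-stage trace-distance budget via the triangle inequality, and separate counts of the classical and quantum registers — is exactly what the paper does, and your mode-counting and quantum-memory arithmetic are correct. However, your account of \emph{which} constraint produces \emph{which} endpoint of the $\Delta$-interval is reversed, and this is not a cosmetic slip: it would lead to an incorrect feasibility analysis if carried out. The Q-LAN condition $\eta^2G^2\ll\sqrt{N}$ with $\eta=\Theta(nd\,N^{\Delta/3})$ and $G=\Theta(n/d)$ reads $n^4N^{2\Delta/3}\ll N^{1/2}$, which (with $\log n/\log N=3/(32+3\gamma)$) is equivalent to $\Delta<\tfrac34-\tfrac{18}{32+3\gamma}$, i.e.\ the \emph{upper} endpoint, not the lower one as you claim. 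The \emph{lower} endpoint comes from amplification, not from ``the growth of $\alpha_0$ relative to $\sqrt{N-N_0}$'': the Caves amplifier with gain $N/(N-N_0)$ has per-mode error $O(N_0/N)=O(N^{-\Delta/2})$ (Lemma~\ref{lemma:amplification}), independent of $|z|$, and summing over the $\Theta(n\cdot 2^{8d})$ modes gives $O(n\,N^{-\Delta/2})$, which vanishes iff $\Delta>6/(32+3\gamma)$. Truncation, by contrast, places \emph{no} constraint on $\Delta$: the tail error $e^{-\Omega(\alpha_0^2)}$ already vanishes for every $\Delta>0$ because $\alpha_0=\Theta(nd\,N^{\Delta/3})\to\infty$.

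Relatedly, your per-mode amplification error $O\bigl((N_0/N)^2|z|^2\bigr)$ is the overlap penalty for \emph{not} amplifying at all — i.e.\ for feeding the attenuated coherent state directly to the inverse Q-LAN — rather than the error of the amplifier channel the protocol actually applies. With that formula the total error is $O(n^3N^{-\Delta/3})$, which requires $\Delta>27/(32+3\gamma)$ and hence proves a strictly narrower interval than the theorem states. (The paper only invokes this ``skip-amplification'' bound in the time-complexity discussion, as an alternative realization under stronger assumptions on $\gamma$.) Finally, a small arithmetic slip in the classical-memory accounting: if your per-gate net has $16$ real parameters and there are $nd/2$ gates, composition gives $\log_2|\set{U}_\sc(\epsilon_0)|=8nd\log_2(1/\epsilon_0)+O(nd\log n)$, and substituting $\log_2(1/\epsilon_0)=\tfrac12(1-\tfrac{2\Delta}{3})\log_2N+O(\log n)$ yields $4(1-\tfrac{2\Delta}{3})nd\log_2N$, not the $8(1-\tfrac{2\Delta}{3})nd\log_2N$ you wrote; the paper's factor of $8$ comes from the $32G\log_2(12G/\epsilon)$ bound of Lemma~\ref{lemma:covering} (a $32$-parameter per-gate net), so you should either use that lemma or accept the tighter $4(\cdot)$ constant and note that it still implies the theorem's upper bound.
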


The total memory cost of our protocol is in the order of $n\cdot\log_2 N$.
One may ask whether the compression rate is optimal and if the cost can be made even smaller in terms of $n$ or $N$.
Via an information-theoretic approach (details can be found in Appendix), we show that any faithful protocol requires a memory size close to the Holevo information \cite{holevo1973bounds} of the ensemble of shallow-circuit states, which is $\Omega(n\cdot\log_2 N)$. This proves the optimality of the compression rate in the scaling of both $n$ and $N$.

The quantum-to-classical cost ratio (in the leading order) is 
$
    r_{\rm q-c}=\frac{2^{8d-2}\Delta}{d^2(3-2\Delta)}$
When the number of copies gets larger fixing $n$, i.e., when $\gamma\gg 1$, we may choose $\Delta\to 6/(32+3\gamma)$ and the ratio $r_{\rm q-c}=O(1/\gamma)$. This means that,
although the sheer number of qubits in the hybrid memory might be large, it becomes negligible when compared to the number of bits. In this sense, when $N$ is large enough, $N$-copy shallow-circuit states can be converted into classical bits plus a small portion of qubits. The compressed $N$-copy state can be regarded as a long (classical) file with a quantum signature appended to it, similar as in the case of general qudits \cite{yang2018titcompression}.

A reasonable suspicion then is whether any qubit is needed at all in the memory. A positive answer would render the compression much less interesting. Indeed, it would then be reduced to learning, where the encoder consists of tomography with all the $N$ copies and storing the outcome properly.
Here we give a negative answer to this question and show the inability of any incoherent protocol. Explicitly, we show that any protocol using a fully classical memory cannot be faithful, no matter how large the classical memory is. 
This justifies the unique position of the task of compression as a fully coherent task, distinguishing it from the incoherent task of learning \cite{zhao2023learning,yu2023learning,huang2024learning}.
The proof idea Appendix, originally appeared in \cite{yang2018titcompression} to prove a similar result for qudits, is to keep track of genuinely quantum monotones such as the trace distance and the quantum Hellinger distance \cite{shunlong2004informational}. On one hand, we can find shallow-circuit states for which these monotones are strictly different, and the difference must be preserved by any faithful compression. On the other hand, these monotones coincide when the states are collapsed into bits, and the difference cannot be recovered, leading to a contradiction.

%Therefore, a fully classical memory does not work for the task of compressing $N$-copy shallow-circuit states, while an almost quantum memory is optimal under certain circumstances.

\smallskip

%\section{Discussions}
\noindent{\it Applications and extensions.} 
%We have shown that $N$ copies of an $n$-qubit shallow-circuit state can be optimally compressed to $\Theta(n\log_2 N)$ qubits.
%Intriguingly, the two key parameters $n$ (the number of qubits per copy) and $N$ (the number of copies) take distinct positions in the compression rate. 
%Intuitively, $n$ is the parameter of informativeness, proportional to the number of free parameters of a shallow-circuit state. $N$ is the parameter of accuracy, since $1/\sqrt{N}$ is the error scaling of tomography, i.e., of how well can we learn the information in the state.Our result shows that the $N$-copy state can be exponentially compressed only in the parameter of accuracy.
Our compression can store multiple instances of the same program at intermediate steps of a constant-depth computation or transfer them more efficiently in distributed settings. 
The need to compress $N$-copy states is also naturally encountered in new scenarios of quantum sensing, when the goal is to estimate the \emph{total} effect of signals that occur in disjoint time intervals, e.g., the total duration of multiple spikes or the mean effect of intermittent fields. 
Our new protocol for shallow quantum circuits matches the trend to consider many-body probe states prepared by low-depth circuits \cite{marciniak2022optimal}. 
The task is also a natural analogy of distributed quantum metrology \cite{komar2014quantum,proctor2018multiparameter,ge2018distributed,liu2021distributed,kwon2022quantum,yang2024quantum}, where signals are now distributed in time instead of in space.
There using an array of $N$ identical probes and the $N$-copy state compression to store the acquired information coherently and memory-efficiently in between time intervals achieves a substantial accuracy enhancement
%that scales up with respect to the number of signals 
compared to measuring the signals separately, which has been verified in \cite{yang2018prsaquantum} for qubit probes used as a ``quantum stopwatch". 
Our protocol here enables multiparameter sensing using correlation-bounded many-body physical systems, which covers the detection of various electromagnetic properties.
The application can even go beyond metrology:
For example, the optimal tradeoff between the program size and the accuracy for  programming a constant-size quantum gate is achieved via metrology \cite{yang2020prloptimal}, which might be extended to the programming of (1D brickwork) shallow quantum circuits using our compression as a subroutine.

As we focused only on the most fundamental case, there is plenty of room for extension, e.g., to shallow-circuit states with a 2D structure. 
It is likely that we need to extend the tools, especially efficient local parameterization (Theorem \ref{theo1}), to 2D circuits. 
One possible route is to design new circuit covering schemes \cite{landau2024learning} for 2D or higher-dimension circuits that meet the requirement of local parameterization.
Moreover, the circuit depth $d$ is treated as a constant throughout this work, but from the derivation of results it can be seen that the compression will still be faithful when $d$ grows very slowly (e.g., $d\ll\log n$) with $n$.
In particular, it would be interesting to discuss pesudorandom quantum states \cite{ji2018pseudorandom}, which are low-depth states processing approximate Haar-randomness and are thus of particular interest in quantum cryptography.
At last, the extension to noisy setting is of particular interest in practice. While similar results are expected there, some techniques in this work do not immediately generalize to mixed states and require moderate adaptation.

This work serves as the first step in establishing a new direction of coherent quantum information processing where the complexity of resources determines the rate and performance of processing, which goes beyond the existing literature that focused on incoherent information processing  \cite{zhao2023learning,yu2023learning,huang2024learning}. 
For future perspectives, it is our goal to consider more tasks such as cloning \cite{gisin1997optimal,werner1998optimal,bruss1998optimal,scarani2005quantum} and gate programming \cite{nielsen1997programmable,ishizaka2008asymptotic,kubicki2019resource,sedlak2019optimal,yang2020prloptimal} and, ultimately, to re-examine the entire quantum Shannon theory established in the past decade from the new perspective of the NISQ era.

%\section{Methods} 
\smallskip

\noindent{\bf Acknowledgement.}
The author acknowledges Penghui Yao and Fang Song for helpful discussions on pseudorandom states and Xinhui Yang for making Figure 1. 
%This work is supported by an Innovation Program for Quantum Science and Technology (project no.~2023ZD0300600).
This work is supported by the National Natural Science Foundation of China via the Excellent Young Scientists Fund (Hong Kong and Macau) Project 12322516, Guangdong Provincial Quantum Science Strategic Initiative (GDZX2303007 and GDZX2203001), Guangdong Basic and Applied Basic Research Foundation (Project No.~2022A1515010340),
and the Hong Kong Research Grant Council (RGC) through the Early Career Scheme (ECS) grant 27310822 and the General Research Fund (GRF) grant 17303923.
% and grant 17302724.

\medskip

\noindent{\bf End Matter on the error and the memory cost of Protocol \ref{protocol:compression}.}
Here we provide an analysis of the total error and the memory cost of our compression protocol.

The total error will be vanishing if the error of each subroutine vanishes.
First, the tomography, with an error that scales as
$1/\sqrt{N_0}\ll\epsilon_0$,
determines an estimate $\hat{U}|0\>^{\otimes n}$ such that  $|\psi_{\sc}\>$ is within its $\epsilon_0$-neighborhood with high chance. Applying the inverse of $\hat{U}$ thereby localizes the remaining copies to a neighborhood with only $\poly(n)$ parameters. By Theorem \ref{theo1}, the generating circuit of the localized state consists of $O(n)$ $\epsilon_0$-rotations. Via the Q-LAN and the inverse Q-LAN (cf.~Theorem \ref{theo2}), we can interchange between the remaining $N-N_0$ copies of the shallow-circuit state and a multi-mode coherent state $|\vec{u}\>_{\rm coh}$.
The coherent state $|\vec{u}\>_{\rm coh}$ has $O(n)$ modes, whose amplitudes' moduli are all upper bounded by $\eta=O(\sqrt{N}\epsilon_0)$.
By Theorem \ref{theo2}, the Q-LAN and the inverse Q-LAN have vanishing errors, as long as  
$\sqrt{N}\gg n^2 N\epsilon_0^2$. 
The copies consumed by tomography are translated into the decrease of the coherent state's amplitude by a ratio of $\sqrt{\frac{N}{N-N_0}}$, which can be compensated by amplitude amplification \cite{caves1982quantum}. 
The idea here is similar as that of approximate and asymptotic cloning of quantum states. Although quantum states cannot be cloned perfectly \cite{wootters1982single,dieks1982communication}, they can be cloned approximately. In particular, for the task of generating $N_0$ extra copies out of $N-N_0$ input copies, the infidelity of cloning vanishes if $N\gg N_0\gg 1$ \cite{werner1998optimal}, which is the case here.
Now, the coherent state has $O(n)$ modes. For each mode, we apply an individual amplifier. The total amplification error is ensured to be vanishing as long as the single amplifier's error, which scales as $N_0/N$, is small compared to $1/n$.
At last, each mode of the coherent state can be compressed by a photon number truncation. As the photon number of a single-mode coherent state $|u\>_{\rm coh}$ follows the Poisson distribution with mean $|u|^2$, it is enough to truncate at a suitable constant times the mean to ensure faithfulness, which can be satisfied by choosing $\alpha_0^2$ to be a suitable constant times $\epsilon_0^2 N$.

As for the memory cost, it is stated in the protocol that the hybrid memory consists of two parts: a classical memory to store $\hat{U}$ and a quantum memory to store the truncated coherent state. The classical memory cost equals $\log_2|\set{U}_\sc(\epsilon_0)|$ bits, where $|\set{U}_\sc(\epsilon_0)|$ denotes the cardinality of the $\epsilon_0$-covering set.
Intuitively, for a circuit of $G$ two-qubit gates, we can construct an $\epsilon_0$-covering by concatenating $G$ $(\epsilon_0/G)$-coverings of two-qubit gates. Each $\epsilon$-covering of two-qubit gates can be stored with $O(\log_2(1/\epsilon))$ bits, and thus the cost for the $G$ gate covering scales as $G\log_2(G/\epsilon_0)$. In our case, $G=O(n)$ for shallow circuits and, since $\epsilon_0=\Theta(n\cdot N^{-\frac12(1-\frac23\Delta)})$, the size of the classical memory is  $O(n(1-\frac23\Delta)\log_2 N)$.
For the quantum memory, 
each mode of the coherent state has an amplitude upper bounded by $O(\sqrt{N}\epsilon_0)$, according to Eq.~(\ref{amplitudebound1}) and Theorem \ref{theo1}.
Since there are $O(n)$ modes and each mode's photon number is truncated at $O(\alpha_0^2)=O(n^2 N^\Delta)$, it requires $O(\Delta\log_2 N)$ qubits (assuming $N\gg n$) to store each mode, and $O(n\Delta\log_2 N)$ qubits in total.

We note that the aforementioned constraints ($\sqrt{N}\gg n^2 N\epsilon_0^2$ and $N_0/N\ll 1/n$) together set a constraint on the relation between $n$ and $N$, as state in Theorem \ref{theo:compression}.

\bibliographystyle{apsrev4-2}
\bibliography{ref}

\begin{widetext}

\appendix
 
\tableofcontents

\section{Preliminaries}\label{sec:prelim} 
\subsection{Conventions and notations}
We denote by $\spc{H}_k$ the $k$-dimensional Hilbert space, and $\spc{H}_2$ denotes the Hilbert space of a qubit.
For a pure state $|\psi\>$, we denote by $\psi$ the projector $|\psi\>\<\psi|$.  For a unitary $U$, we use the corresponding calligraphic letter $\map{U}$ to denote the corresponding channel $\map{U}(\cdot):=U(\cdot)U^\dag$.

We will make frequent use of the big-$O$, big-$\Omega$, and big-$\Theta$ notation. For two real-valued functions $f,g$, $f=O(g)$ if there exist $n_0$ and $c>0$ such that $|f(n)|\le c\dot g(n)$ for every $n\ge n_0$, $f=\Omega(g)$ if $g=O(f)$, and $f=\Theta(g)$ if both $f=O(g)$ and $f=\Omega(g)$. In addition, we denote by $f\ll g$ (which is also denoted by $f=o(g)$ in the literature) for non-negative functions $f$ and $g$ if $f(n)/g(n)$ vanishes in the limit of large $n$.
We say a function $f=f(n)$ is polynomial in $n$, which is denoted by $f\in\poly(n)$, if the exists a constant $k\ge 0$ such that $f=O(n^k)$. 

\subsection{Distance measures for quantum states and gates}
For two quantum states $\rho$ and $\sigma$ of the same quantum system, the trace distance between $\rho$ and $\sigma$ is defined as
\begin{align}\label{deftracedist}
    d_{\Tr}(\rho,\sigma):=\frac12\|\rho-\sigma\|_1,
\end{align}
where $\|A\|_1:=\Tr|A|$ is the trace norm of operators. The trace distance satisfies the triangle inequality, i.e., for any states $\rho,\sigma,$ and $\eta$, we have
\begin{align}
    d_{\Tr}(\rho,\eta)\le d_{\Tr}(\rho,\sigma)+d_{\Tr}(\sigma,\eta).
\end{align}
The most commonly used measure of similarity for quantum states is the fidelity:
\begin{align}
    F(\rho,\sigma)=\left(\Tr\sqrt{\rho^{\frac12}\sigma\rho^{\frac12}}\right)^2.
\end{align}
For pure states, the fidelity takes a simpler form 
$F(\psi,\phi)=|\<\psi|\phi\>|^2$.
A more intuitive expression of the fidelity is given by Uhlmann's theorem:
\begin{align}
    F(\rho,\sigma)=\max_{U}\left|\<\psi_\sigma|(I\otimes U_A)|\psi_\rho\>\right|^2,
\end{align}
where $|\psi_\rho\>,|\psi_\sigma\>\in\spc{H}\otimes\spc{H}_A$ are any purifications of $\rho$ and $\sigma$, and the maximization is over any unitary $U_A$ on an ancillary system $\spc{H}_A\simeq\spc{H}$.
The approximate equivalence of the trace distance and the fidelity can be established via the Fuchs-van de Graaff inequalities:
\begin{align}\label{fuchsvandegraaff}
    1-\sqrt{F(\rho,\sigma)}\le d_{\Tr}(\rho,\sigma)\le\sqrt{1-F(\rho,\sigma)}.
\end{align}
In particular, when both $\rho$ and $\sigma$ are pure, the equality holds in the second inequality.
Both measures satisfy the data processing inequality. That is, for any channel $\map{A}$ taking $\rho$ or $\sigma$ as input, we have
\begin{align}
    F(\rho,\sigma)&\le F(\map{A}(\rho),\map{A}(\rho))\\
    d_{\Tr}(\rho,\sigma)&\ge d_{\Tr}(\map{A}(\rho),\map{A}(\sigma)).
\end{align}
In particular, both measures are unitary-invariant. That is, both equalities hold when $\map{A}$ is a unitary channel.

For two quantum channels $\map{A},\map{B}$ with input space $\spc{H}$ and output space $\spc{H}'$, the diamond distance between $\map{A}$ and $\map{B}$ is defined as:
\begin{align}\label{def:diamonddist}
    d_{\diamond}(\map{A},\map{B}):=\sup_{|\psi\>\in\spc{H}\otimes\spc{H}_A}d_{\Tr}\left(\map{A}\otimes\map{I}_A(\psi),\map{B}\otimes\map{I}_A(\psi)\right),
\end{align}
where $\spc{H}_A\simeq\spc{H}$ is an ancillary space. For two unitary channels $\map{U}$ and $\map{V}$, the diamond norm is well-captured by the operator norm (cf.~\cite[Proposition 1.6]{haah2023query}):
\begin{align}\label{bounddiamond}
    \frac12\min_{\varphi}\|e^{i\varphi}U-V\|_{\infty}\le d_{\diamond}(\map{U},\map{V})\le\min_{\varphi}\|e^{i\varphi}U-V\|_{\infty}
\end{align}
where $\|\cdot\|_{\infty}$ is the operator norm (i.e., the largest singular value).

\subsection{Properties of $N$-copy states.}\label{subsec:ncopy}
In general, we could replace a state $\psi$ with another state $\psi'$ in quantum information processing tasks, if their trace distance is negligible. In the multi-copy regime, we can derive such a similar criterion: 
\begin{lemma}\label{lemma:Ncopyfidelity}
    If $d_{\Tr}(\psi,\psi')\ll 1/\sqrt{N}$ for two pure states $|\psi\>,|\psi'\>$, then the trace distance between $|\psi\>^{\otimes N}$ and $|\psi'\>^{\otimes N}$ vanishes as in the large $N$ limit.
\end{lemma}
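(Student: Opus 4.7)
The plan is to reduce the $N$-copy trace distance to an explicit function of the single-copy overlap, and then use the hypothesis $d_{\Tr}(\psi,\psi')\ll 1/\sqrt{N}$ to show this function vanishes. The key observation is that for pure states, both the Fuchs--van de Graaff upper bound (\ref{fuchsvandegraaff}) is saturated and the fidelity tensorizes trivially, so the computation collapses to an elementary limit.

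First, I would use the pure-state equality $d_{\Tr}(\psi,\psi')=\sqrt{1-|\langle\psi|\psi'\rangle|^2}$. Setting $\epsilon:=d_{\Tr}(\psi,\psi')$, this gives $|\langle\psi|\psi'\rangle|^2=1-\epsilon^2$. Next, since $\langle\psi|\psi'\rangle^{\otimes N}=\langle\psi|\psi'\rangle^{N}$, the $N$-copy fidelity is $F(\psi^{\otimes N},\psi'^{\otimes N})=(1-\epsilon^2)^N$, and applying the pure-state Fuchs--van de Graaff equality again yields
\begin{equation}
d_{\Tr}(\psi^{\otimes N},\psi'^{\otimes N})=\sqrt{1-(1-\epsilon^2)^N}.
\end{equation}

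Finally I would invoke Bernoulli's inequality $(1-\epsilon^2)^N\ge 1-N\epsilon^2$ (valid for $\epsilon^2\le 1$, which holds for large $N$ under the hypothesis), giving the bound
\begin{equation}
d_{\Tr}(\psi^{\otimes N},\psi'^{\otimes N})\le\sqrt{N}\,\epsilon=\sqrt{N}\,d_{\Tr}(\psi,\psi').
\end{equation}
Since by assumption $\sqrt{N}\,d_{\Tr}(\psi,\psi')\to 0$ as $N\to\infty$, the conclusion follows.

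I do not expect any genuine obstacle here; the statement is essentially a one-line consequence of pure-state fidelity multiplicativity combined with Bernoulli's inequality. The only minor care needed is to ensure $\epsilon\le 1$ so that Bernoulli applies, which is automatic for $N$ sufficiently large given the hypothesis $\epsilon\ll 1/\sqrt{N}$. Equivalently, one could use $\log(1-\epsilon^2)^N=N\log(1-\epsilon^2)\sim -N\epsilon^2\to 0$ to conclude $(1-\epsilon^2)^N\to 1$; both routes are routine.
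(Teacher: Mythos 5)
Your proof is correct and follows exactly the route the paper sketches: the pure-state identity $F=1-d_{\Tr}^2$ (equality in the second Fuchs--van de Graaff bound for pure states) together with multiplicativity of fidelity, with Bernoulli's inequality supplying the final quantitative bound. The paper leaves this as a one-line remark, so your write-up is simply a fully fleshed-out version of the same argument.
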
 
This property can be shown using the relation between the trace distance and the fidelity for pure states (i.e., $F=1-d_{\Tr}^2$) and the multiplicativity of fidelity. 
We will make frequent use of it in later sections. 

When $d_{\Tr}(\psi,\psi')\gg 1/\sqrt{N}$, $d_{\Tr}(\psi^{\otimes N},\psi'^{\otimes N})$ goes to one as $N$ becomes larger. There are multiple ways of seeing this. One is to notice that by tomography using $N$ copies of $|\psi\>$ one can construct a confidence region of radius $O(1/\sqrt{N})$ that contains $|\psi\>$. The confidence region of $|\psi\>$ does not overlap with that of $|\psi'\>$ since the distance between the two states is much larger than the radii. In this way, the two states can be almost perfectly distinguished, and Helstrom's theorem on two-state discrimination \cite{helstrom1969quantum} guarantees that they must have nearly unit trace distance.

Speaking of $N$-copy compression, it is natural to ask whether tomography is enough. That is if one can simply measure $\psi^{\otimes N}$ and store the estimate of $\psi$ in a fully classical memory. 
Since the tomography error scales as $1/\sqrt{N}$, it is not immediate from the previous discussion whether this is going to work.
Somewhat surprisingly, it has been shown that tomography does not work no matter how large the classical memory is:
\begin{prop}[\cite{yang2018titcompression}]\label{prop:noclassicalcompression}
Let $(\map{E}_N,\map{D}_N)$ be any $N$-copy compression protocol for $n$-qubit states. If the protocol uses a fully classical memory, then the compression error will not vanish in the large $N$ limit, no matter how large the memory is.
\end{prop}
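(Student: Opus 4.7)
The plan is to derive a contradiction by pitting two genuinely quantum monotones against each other: the root Uhlmann fidelity $\sqrt{F}$ defined in Section~\ref{sec:prelim} and the quantum affinity $A(\rho,\sigma):=\Tr(\sqrt{\rho}\sqrt{\sigma})$ underlying Luo's quantum Hellinger distance $d_H^2=2(1-A)$~\cite{shunlong2004informational}. Both are non-decreasing under CPTP maps---$\sqrt{F}$ by Uhlmann's theorem, $A$ by the monotonicity of Luo's quantum Hellinger metric---and they coincide exactly on jointly classical pairs, since $A(p,q)=\sum_x\sqrt{p(x)q(x)}=\sqrt{F(p,q)}$ whenever the density matrices commute. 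Crucially, on a pair of pure states one has $A=F$ rather than $A=\sqrt{F}$, opening room for a persistent quantum gap.

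To realise such a gap, fix a constant $c>0$ and choose the two $n$-qubit (in fact depth-one shallow-circuit) inputs
\[
|\psi_1\>=|0\>^{\otimes n},\qquad |\psi_2^{(N)}\>=\left(\sqrt{1-c/N}\,|0\>+\sqrt{c/N}\,|1\>\right)\otimes|0\>^{\otimes(n-1)},
\]
so that $|\<\psi_1|\psi_2^{(N)}\>|^2=1-c/N$. Since both $N$-copy states remain pure, a direct computation yields
\[
A(\psi_1^{\otimes N},(\psi_2^{(N)})^{\otimes N})=(1-c/N)^{N}\to e^{-c},\qquad \sqrt{F(\psi_1^{\otimes N},(\psi_2^{(N)})^{\otimes N})}=(1-c/N)^{N/2}\to e^{-c/2},
\]
a strict limiting gap.

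Suppose for contradiction that a faithful classical-memory protocol $(\map{E}_N,\map{D}_N)$ exists, and let $p_i^{(N)}:=\map{E}_N(\psi_i^{\otimes N})$ be the output probability distributions. Monotonicity of $A$ applied to the encoder and then the decoder yields the two-sided sandwich
\[
A(\psi_1^{\otimes N},(\psi_2^{(N)})^{\otimes N})\le A(p_1^{(N)},p_2^{(N)})\le A(\map{D}_N(p_1^{(N)}),\map{D}_N(p_2^{(N)})),
\]
and the analogous sandwich holds for $\sqrt{F}$. The faithfulness condition~(\ref{faithfulnesscompress}) makes $\map{D}_N(p_i^{(N)})$ converge in trace distance (uniformly in the input) to the corresponding $N$-copy state, and the continuity of $A$ (via Powers--Stormer, $\|\sqrt{\rho_1}-\sqrt{\rho_2}\|_2^2\le\|\rho_1-\rho_2\|_1$) and of $\sqrt{F}$ (via Fuchs--van de Graaff~(\ref{fuchsvandegraaff})) in the trace norm then forces the upper bounds to approach the input values. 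Hence $A(p_1^{(N)},p_2^{(N)})\to e^{-c}$ and $\sqrt{F(p_1^{(N)},p_2^{(N)})}\to e^{-c/2}$, whereas the classical identity $A(p,q)=\sqrt{F(p,q)}$ forces these two quantities to be equal for every $N$, contradicting $e^{-c}<e^{-c/2}$ for $c>0$.

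The main obstacle is marshalling the two monotonicity results with the correct inequality direction---in particular, monotonicity of the affinity is less standard than that of the fidelity and must be imported from the monotone-metric literature~\cite{shunlong2004informational}---and making the continuity step quantitative enough that the vanishing compression error of~(\ref{faithfulnesscompress}) propagates to vanishing error in both $A$ and $\sqrt{F}$ along the $N$-indexed families of possibly mixed decoded states. Once these are discharged, the contradiction rests on a single structural fact: classical memories collapse $A$ and $\sqrt{F}$ to the same Bhattacharyya coefficient, while on quantum pure states they genuinely separate.
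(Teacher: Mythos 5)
Your proof is correct and isolates the same core obstruction the paper exploits: a pair of states for which two quantum monotones coincide on commuting (i.e.\ classical) outputs but are strictly separated on the quantum inputs, with monotonicity under CPTP maps plus faithfulness forcing an impossible preservation of the gap. Your choice of monotones $(A,\sqrt{F})$ is the same pair the paper uses, just reparametrized: $d_{\rm H}^2=2(1-A)$ and $d_{\rm B}^2=2(1-\sqrt{F})$, so the paper's ``$d_{\rm H}>d_{\rm B}$ unless $[\rho,\sigma]=0$'' is literally your ``$A<\sqrt{F}$ unless $[\rho,\sigma]=0$'', and the Powers--Stormer continuity estimate you invoke for $A$ is exactly Eq.~(\ref{distHTr}) restated. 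The route differs in one substantive way. The paper's proof of Theorem~\ref{theo:classicalmemo} detours through Q-LAN: it maps the $N$-copy states via $\map{V}_N$ (Theorem~\ref{theo:qlan}) to the coherent states $|0\>_{\rm coh}$ and $|1\>_{\rm coh}$, evaluates the $d_{\rm H}$--$d_{\rm B}$ gap on that Gaussian pair, and defines a derived protocol $(\tilde{\map{E}}_N,\tilde{\map{D}}_N)$ whose error it bounds by the original protocol's error plus the Q-LAN error. You instead compute $A$ and $\sqrt{F}$ directly on the pure $N$-copy states, getting $(1-c/N)^{N}\to e^{-c}$ versus $(1-c/N)^{N/2}\to e^{-c/2}$, and sandwich the classical intermediate via the two-sided monotonicity and continuity. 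Your route is more self-contained---it dispenses with Q-LAN entirely for this proposition, and the gap is immediate because $A$ coincides with $F$ (not $\sqrt{F}$) on pure pairs. What the paper's Q-LAN detour buys in return is a form of the argument that transfers directly to the Gaussian picture in which the protocol is actually analyzed, and a concrete numerical residual error ($>0.003$). One small item worth stating explicitly if you flesh this out: the $N$-dependence of $|\psi_2^{(N)}\>$ is harmless because the faithfulness condition~(\ref{faithfulnesscompress}) is a supremum over the fixed input set $\set{S}$ for each $N$, so it applies uniformly to whichever member you choose at that $N$.
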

Therefore, the task of $N$-copy state compression cannot be trivially reduced to quantum state tomography.
In Appendix \ref{subsec:classicalmemo}, we will show the same result for shallow-circuit states.

\section{Efficient local parameterization of shallow-circuit states}\label{sec:localparameterization}
In Section \ref{sec:compression} we will see that, to achieve faithful compression, we need to pin the shallow-circuit state $|\psi_{\sc}\>$ down to a small local region parameterised by up to $\poly(n)$ real parameters. We name such a task as \emph{local parameterization of quantum states}, which is achieved by measuring copies of $|\psi_{\sc}\>$ and then constructing the local model. However, it is more challenging than the task of quantum state tomography, as one has to not only identify a neighborhood of the state but also parameterise this neighborhood efficiently.

\subsection{Local parameterization of quantum states}
We begin by giving a general definition of the task.
\begin{defi}[$\epsilon$-rotation]
A unitary gate $\map{U}^\epsilon(\cdot)=U^\epsilon(\cdot)(U^\epsilon)^\dag$ is called an $\epsilon$-rotation if $d_{\diamond}(\map{U}^\epsilon,\map{I})\le\epsilon$.
\end{defi}
As the name suggests, an $\epsilon$-rotation can be expressed as a short-time unitary evolution:
\begin{lemma}[Local parameterization of $\epsilon$-rotations]\label{lemma:epsrotation}
An $\epsilon$-rotation $W^\epsilon$ can always be represented as $W^\epsilon=e^{-i2\epsilon H}$ for some Hermitian $H$ with $\|H\|_{\infty}\le 1$.
\end{lemma}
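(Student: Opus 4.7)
The plan is to convert the hypothesis $d_\diamond(\map{W}^\epsilon,\map{I})\le\epsilon$ into an operator-norm bound on $W^\epsilon-I$ (up to a global phase), and then read off $H$ via the matrix logarithm of a unitary close to the identity. Concretely, I would first invoke the bound (\ref{bounddiamond}) to obtain a phase $\varphi\in\R$ such that
\begin{equation*}
    \|e^{i\varphi}W^\epsilon-I\|_\infty \le 2\,d_\diamond(\map{W}^\epsilon,\map{I}) \le 2\epsilon.
\end{equation*}
Since $e^{i\varphi}W^\epsilon$ is unitary hence normal, $e^{i\varphi}W^\epsilon-I$ is also normal, and its operator norm equals the largest modulus of its eigenvalues. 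Writing the spectrum of $e^{i\varphi}W^\epsilon$ as $\{e^{i\theta_k}\}$ and choosing principal arguments $\theta_k\in[-\pi,\pi]$, the bound above becomes
\begin{equation*}
    \max_k\bigl|e^{i\theta_k}-1\bigr|=2\max_k\bigl|\sin(\theta_k/2)\bigr|\le 2\epsilon,
\end{equation*}
so that $|\theta_k|\le 2\arcsin(\epsilon)$ for every $k$ (this requires only $\epsilon\le 1$, which is the nontrivial regime).

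Next, define the Hermitian operator $H_0:=-\sum_k\theta_k\,|k\>\<k|$ in the eigenbasis of $e^{i\varphi}W^\epsilon$, yielding $e^{i\varphi}W^\epsilon=e^{-iH_0}$ with $\|H_0\|_\infty\le 2\arcsin(\epsilon)$. Because the global phase $\varphi$ has no effect on the induced unitary channel $\map{W}^\epsilon$ (and the lemma is a statement at the channel level, as $\epsilon$-rotations are defined via $d_\diamond$), I absorb it into the representative and write $W^\epsilon=e^{-iH_0}$. Setting $H:=H_0/(2\epsilon)$ then gives the desired form $W^\epsilon=e^{-i2\epsilon H}$, with $\|H\|_\infty\le \arcsin(\epsilon)/\epsilon$. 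For the small-$\epsilon$ values used throughout the paper, $\arcsin(\epsilon)/\epsilon = 1+O(\epsilon^2)\le 1$ up to a harmless constant absorbed in the big-$O$ language, establishing the lemma.

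The only subtle point, rather than a real obstacle, is the bookkeeping of the global phase: the matrix-logarithm argument naturally produces $e^{i\varphi}W^\epsilon=e^{-iH_0}$, not $W^\epsilon=e^{-iH_0}$, and one must recognise that this phase does not alter the channel $\map{W}^\epsilon$ so that the representative $W^\epsilon$ may be replaced by $e^{i\varphi}W^\epsilon$ without loss. The constant in the upper bound on $\|H\|_\infty$ is the only place where one has to be mildly careful, and its value is inconsequential for the applications in Theorem~\ref{theo1} since all subsequent uses only require $\|H\|_\infty = O(1)$.
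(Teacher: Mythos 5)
Your route is essentially the paper's -- bound the eigenphases of $W^\epsilon$ via Eq.~(\ref{bounddiamond}) and read off $H$ from the logarithm -- but the last step does not close. The inequality you need is $\arcsin(\epsilon)/\epsilon\le 1$, and that is false: the Taylor series $\arcsin(x)=x+x^3/6+\cdots$ has nonnegative coefficients, so $\arcsin(\epsilon)\ge\epsilon$ on $[0,1]$ and $\arcsin(\epsilon)/\epsilon\ge 1$, with equality only as $\epsilon\to 0$. The remark that this is ``$\le 1$ up to a harmless constant absorbed in big-$O$'' cannot be right, because the lemma's constant $1$ is not a big-$O$ placeholder: it enters Theorem~\ref{theo1} and the coherent-state amplitude bound~(\ref{amplitudebound1}) literally, with explicit numerical prefactors downstream.

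The loss comes from the lower bound in Eq.~(\ref{bounddiamond}), which is slack by exactly a factor of $2$ here. What you should use instead is the exact identity for unitary channels, $d_\diamond(\map{W},\map{I})=\sin(\Theta/2)$, where $\Theta\le\pi$ is the spread of the eigenphases of $W$; this follows from $d_{\Tr}\bigl((\map{W}\otimes\map{I})(\psi),\psi\bigr)=\sqrt{1-|\<\psi|(W\otimes I)|\psi\>|^2}$ for pure $\psi$ together with the fact that $\min_\psi|\<\psi|(W\otimes I)|\psi\>|$ equals the distance from the origin to the convex hull of the spectrum of $W$, namely $\cos(\Theta/2)$. Then $\epsilon\ge\sin(\Theta/2)$ gives $\Theta\le 2\arcsin\epsilon$; choosing the phase $\varphi$ that centers the spectrum of $e^{i\varphi}W^\epsilon$ about $1$ gives $\|H_0\|_\infty=\Theta/2\le\arcsin\epsilon$, and $H:=H_0/(2\epsilon)$ satisfies $\|H\|_\infty\le\arcsin(\epsilon)/(2\epsilon)\le\pi/4<1$ by convexity of $\arcsin$ on $[0,1]$. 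Your route through Eq.~(\ref{bounddiamond}) gives only $2\sin(\Theta/4)\le 2\epsilon$, i.e.\ $\Theta\le 4\arcsin\epsilon$ -- precisely a factor of $2$ too weak, which is where your $\arcsin(\epsilon)/\epsilon$ comes from.

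Incidentally, the paper's own proof writes $d_\diamond(\map{W},\map{I})\ge\min_\varphi\|e^{i\varphi}I-W\|_\infty$, silently dropping the $\tfrac12$ from Eq.~(\ref{bounddiamond}); as stated that chain of inequalities is not valid for nontrivial $W$ (for unitaries, $\min_\varphi\|e^{i\varphi}I-W\|_\infty=2\sin(\Theta/4)\ge\sin(\Theta/2)=d_\diamond$). You were more careful in applying Eq.~(\ref{bounddiamond}) and thereby exposed the gap; the fix is to replace the two-sided bound by the exact unitary-channel formula as above.
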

\begin{proof}
    Consider a generic $\epsilon$-rotation $W$. We can always cast it in the form $W=\exp(-i\tilde{\epsilon}H)$ for some Hermitian operator $\|H\|_\infty=1$. By the equivalence of the diamond norm and the unitary operator norm [cf.~Eq.~(\ref{bounddiamond})], we have
\begin{align}
  \epsilon\ge d_{\diamond}(\map{W},\map{I})&\ge\min_{\varphi}\|e^{i\varphi}I-e^{-i\tilde{\epsilon}H}\|_{\infty}=2\sin\left(\frac{\tilde{\epsilon}}{2}\right)\ge \frac{\tilde{\epsilon}}{2}.
\end{align}
\end{proof}
\begin{defi}[Local parameterization of quantum states]
    Given a set $\set{S}=\{|\psi\>\}$ of $n$-qubit quantum states and a fixed quantum state $|\psi_0\>=U_0|0\>^{\otimes n}$ with the guarantee that $d_{\Tr}(\psi,\psi_0)\le\epsilon$ for every $|\psi\>\in\set{S}$, the task of local parameterization of $\set{S}$ is to identify a set $\set{W}_{\rm loc}$ of $O(\epsilon)$-rotations, such that for every $|\psi\>\in\set{S}$ there exists a $W\in\set{W}_{\rm loc}$ with $|\psi\>=U_0W|0\>^{\otimes n}$. Moreover, the local parameterization is \emph{efficient} if $\set{W}_{\rm loc}$ can be parameterized with $\poly(n)$ local parameters, i.e., if there exists an on-to function $\map{P}:\set{E}^{f}\to \set{W}_{\rm loc}$ with $\set{E}\subset\R$ being an $O(\epsilon)-$ length interval and $f\in\poly(n)$.
\end{defi}
At first sight, it is rather straightforward to find the parameterization: One could just apply the inverse of the fixed unitary $U_0$ on every $|\psi\>$. Then the similarity of $U_0^\dag|\psi\>$ and $|0\>^{\otimes n}$ would guarantee the existence of an $O(\epsilon)$-rotation, as shown in the following lemma:
\begin{lemma}\label{lemma:localization}
    If $d_{\Tr}(\map{V}(\phi),\phi)\le\epsilon$ for a pure state $|\phi\>$ and some unitary $\map{V}$, then there exists a $(4\epsilon)$-rotation such that $\map{V}(\phi)=\map{V}^\epsilon(\phi)$.
\end{lemma}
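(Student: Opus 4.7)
\smallskip

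\noindent\textbf{Proof plan for Lemma \ref{lemma:localization}.}
The plan is to construct $\map{V}^\epsilon$ explicitly as a two-dimensional rotation in the plane spanned by $|\phi\rangle$ and $V|\phi\rangle$, extended by the identity on the orthogonal complement, and then to verify both that it reproduces the action of $\map{V}$ on $\phi$ and that its diamond distance from $\map{I}$ is at most $4\epsilon$.

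First I would pass from the trace distance hypothesis to an inner-product statement. Setting $|\psi\rangle := V|\phi\rangle$, the Fuchs--van de Graaff inequality (\ref{fuchsvandegraaff}), which is an equality for pure states, gives $|\langle\phi|\psi\rangle|^2 = F(\phi,\map{V}(\phi)) \ge 1-\epsilon^2$. Write $\langle\phi|\psi\rangle = \sqrt{1-s^2}\,e^{i\theta}$ with $s\le\epsilon$, absorb the global phase by defining $|\tilde\psi\rangle:=e^{-i\theta}|\psi\rangle$, and decompose $|\tilde\psi\rangle = \sqrt{1-s^2}\,|\phi\rangle + s\,|\phi^\perp\rangle$ for some unit vector $|\phi^\perp\rangle$ orthogonal to $|\phi\rangle$ (if $s=0$ just take $V^\epsilon=I$).

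Next I would define $V^\epsilon$ to act as the real rotation
\begin{equation*}
\begin{pmatrix} \sqrt{1-s^2} & -s \\ s & \sqrt{1-s^2} \end{pmatrix}
\end{equation*}
in the basis $\{|\phi\rangle,|\phi^\perp\rangle\}$ and as the identity on the orthogonal complement of $\mathrm{span}\{|\phi\rangle,|\phi^\perp\rangle\}$. By construction $V^\epsilon|\phi\rangle = |\tilde\psi\rangle$, so $\map{V}^\epsilon(\phi) = |\tilde\psi\rangle\langle\tilde\psi| = |\psi\rangle\langle\psi| = \map{V}(\phi)$, which is the required equality of channel actions on $\phi$.

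It then remains to bound $d_\diamond(\map{V}^\epsilon,\map{I})$. The spectrum of $V^\epsilon$ consists of $1$ with multiplicity $2^n-2$ and the conjugate pair $e^{\pm i\alpha}$ with $\sin\alpha=s\le\epsilon$, so $\|V^\epsilon-I\|_\infty = 2\sin(\alpha/2) \le \alpha \le \arcsin\epsilon \le \tfrac{\pi}{2}\epsilon < 2\epsilon$. Plugging this into the unitary bound (\ref{bounddiamond}) with $\varphi=0$ yields $d_\diamond(\map{V}^\epsilon,\map{I}) \le \|V^\epsilon-I\|_\infty < 2\epsilon \le 4\epsilon$, so $V^\epsilon$ is indeed a $(4\epsilon)$-rotation. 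No step looks genuinely hard here; the only minor subtlety is the global-phase absorption, which must be performed to guarantee that the two-dimensional block really is a proper rotation and not a rotation composed with an unwanted phase that would inflate the spectral distance.
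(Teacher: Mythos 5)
Your proof is correct and follows essentially the same construction as the paper's: both build $V^\epsilon$ as a two-dimensional rotation in $\mathrm{span}\{|\phi\rangle, V|\phi\rangle\}$ (extended trivially on the orthogonal complement), use Fuchs--van de Graaff to control the overlap, and bound the diamond distance via the spectrum of $V^\epsilon$. The only cosmetic difference is that you absorb the phase $e^{i\theta}=e^{i\arg\langle\phi|V|\phi\rangle}$ into the target vector $|\tilde\psi\rangle$ and use the identity on the complement, whereas the paper multiplies the complementary subspace by $e^{i\theta}$ instead; since these two unitaries differ by an overall phase, both channels are identical and both constructions give the claimed bound (yours in fact yields the sharper constant $d_\diamond<2\epsilon$, comfortably within the $4\epsilon$ stated).
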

\begin{proof}
    A desired small rotation $V^\epsilon$ can be explicitly constructed. Note that $|\<\phi|V|\phi\>|\ge\sqrt{1-\epsilon^2}$ by Eq.~(\ref{fuchsvandegraaff}). We define $|\phi^\perp\>:=(V|\phi\>-\<\phi|V|\phi\>|\phi\>)/\|V|\phi\>-\<\phi|V|\phi\>|\phi\>\|$. Consider a unitary $V^\epsilon$ that acts as
    \begin{align}
        \left(V^\epsilon\right)_{\phi,\phi^\perp}=\left(\begin{matrix}\<\phi|V|\phi\> & e^{2i\arg(\<\phi|V|\phi\>)}\|V|\phi\>-\<\phi|V|\phi\>|\phi\>\| \\ -\|V|\phi\>-\<\phi|V|\phi\>|\phi\>\| & \<\phi|V|\phi\>\end{matrix}\right)
    \end{align}
    in the two-dimensional subspace spanned by $\{|\phi\>,|\phi^\perp\>\}$ and as $V^\epsilon|\phi'\>=e^{i\arg(\<\phi|V|\phi\>)}|\phi'\>$ for any $|\phi'\>$ in the complementary subspace. It is straightforward to verify that $V^\epsilon|\phi\>=|\phi\>$. $V^\epsilon$ has two eigenvalues $e^{i\arg(\<\phi|V|\phi\>)\pm i\eta}$ with $\eta=\arccos(|\<\phi|V|\phi\>|)$ and all other eigenvalues are $e^{i\arg(\<\phi|V|\phi\>)}$. Therefore, we have
    \begin{align}
        d_{\diamond}(\map{V}^\epsilon,\map{I})&\le 2\min_{\varphi}\|e^{i\varphi}I-V^\epsilon\|_{\infty}\\
        &\le 2\|e^{i\arg(\<\phi|V|\phi\>)}I-V^\epsilon\|_{\infty}\\
        &=2|e^{i\arccos(|\<\phi|V|\phi\>|)}-1|\\
        &\le 4\epsilon.
    \end{align}
    Here $\|\cdot\|_{\infty}$ denotes the operator norm, and the first inequality comes from Eq.~(\ref{bounddiamond}), the equivalence of the diamond norm and the unitary operator norm.
\end{proof}

The above ``vanilla" local parameterization of quantum states appears a simple task. However, making it parameter-efficient is not a trivial task. The local parameterization in Lemma \ref{lemma:localization} consists of generic $n$-qubit unitaries and has exponentially many degrees of freedom.

In practice, it is more favorable to seek an efficient localization. A hope is when the set of states under consideration has certain structure. For instance, when $\set{S}$ consists of states generated by acting a low-dimensional unknown quantum gate upon a fixed state.
If all such states in $\set{S}$ are close to the original fixed state, it is quite natural to speculate that we could probably replace this unknown unitary by an $O(\epsilon)$-rotation to achieve an efficient local parameterization (since the $O(\epsilon)$-rotation is also low-dimensional).
Somewhat surprisingly, this intuition turns out to be wrong:
\begin{remark}
In general, it is not possible to partially locally parameterize a quantum state. That is, there exists a bipartite pure state $|\psi\>_{AB}$ and a unitary $\map{V}_{B}$ acting on $B$ such that:
\begin{enumerate}
    \item $d_{\Tr}((\map{I}_A\otimes\map{V}_{B})(\psi_{AB}),\psi_{AB})\le\epsilon$;
    \item there is no $O(\epsilon)$-rotation $\map{V}_{B}^\epsilon$ such that $(\map{I}_A\otimes\map{V}_{B})(\psi_{AB})=(\map{I}_A\otimes\map{V}_{B}^\epsilon)(\psi_{AB})$.
\end{enumerate} 
\end{remark}
One such example is $|\psi\>_{AB}=\sqrt{1-\epsilon}|0\>_{A}|00\>_{B}+\sqrt{\epsilon}|1\>_A\frac{|10\>_B+|20\>_B}{\sqrt{2}}\in\spc{H}_2\otimes\spc{H}_3^{\otimes 2}$ and a gate $V_{B}$ that acts trivially on the second qutrit if the first qutrit is in $|0\>$ and maps the second qutrit from $|0\>$ to $|1\>$ ($|2\>$) conditioning on the first qutrit being $|1\>$ ($|2\>$). The output state is 
\begin{align*}
    \sqrt{1-\epsilon}|0\>_A|00\>_B+\sqrt{\epsilon}|1\>_A|B_0\>_{B}
\end{align*}
(with $|B_0\>:=\frac{|11\>+|22\>}{\sqrt{2}}$). If a desired $O(\epsilon)$-rotation $V^\epsilon_{B}$ exists, it has to take the initial marginal state on $B$ to the marginal of the output state. That is:
\begin{align*}
    (1-\epsilon)|00\>\<00|+\epsilon\left(\frac{|10\>+|20\>}{\sqrt{2}}\right)\left(\frac{\<10|+\<20|}{\sqrt{2}}\right)\xrightarrow{V^\epsilon}(1-\epsilon)|00\>\<00|+\epsilon|B_0\>\<B_0|.
\end{align*}
Since a unitary cannot change the eigenvalues, we need
\begin{align*}
    V^\epsilon\left(\frac{|1\>+|2\>}{\sqrt{2}}\otimes|0\>\right)=|B_0\>.
\end{align*}
Note that the initial state is a product state while the final state has entanglement bounded away from zero, this transformation cannot be achieved by a unitary gate close to the identity.
%Note that if the initial state is full-rank on $B$ with non-vanishing eigenvalues, there could be a way to find $V^\epsilon_{B}$. But this requirement is not very practical.

The purpose of the above remark is to justify that our goal -- an efficient local parameterization of shallow-circuit states -- is not as simple as it appears. One of our main results is to rigorously prove that such an efficient local parameterization indeed exists for shallow-circuit states.

\begin{theo}[Efficient local parameterization of shallow-circuit states.]\label{theo:localization}
Given the set $\set{S}_{\sc}=\{|\psi_{\sc}\>\}$ of $n$-qubit pure states generated by depth-$d$ brickwork circuits and a fixed shallow-circuit state $|\psi_0\>=U_0|0\>^{\otimes n}$ of the same structure, with the guarantee that $d_{\Tr}(\psi_{\sc},\psi_0)\le\epsilon$ for every $|\psi_{\sc}\>\in\set{S}_{\sc}$, there exists a set $\set{W}_{\rm loc}$ of $O(\epsilon)$-rotations, such that for every $|\psi_{\sc}\>\in\set{S}_\sc$ there exists a $W\in\set{W}_{\rm loc}$ with $|\psi_\sc\>=U_0W|0\>^{\otimes n}$. 

Here $W=\prod_{j=1}^{n_{\rm gate}}W^{28\epsilon}_j$
with $n_{\rm gate}\le n/d+4$, and each $W^{28\epsilon}_j$ is a $(28\epsilon)$-rotation acting on no more than $8d$ qubits. Among $\{W_j^{28\epsilon}\}$, the maximum of the number of gates that act non-trivially on the same qubit is upper bounded by $n_{\rm overlap}=4$. The set $\set{W}_{\rm loc}$ can be parameterized with $O(n\cdot (2^{16d}/d))$ (real) local parameters.
\end{theo}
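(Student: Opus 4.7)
The plan is to exploit the block-local structure of depth-$d$ 1D brickwork circuits and peel off layers one at a time, converting each layer's mismatch between $U_\sc$ and $U_0$ into local $O(\epsilon)$-rotations on slightly enlarged blocks. A lightcone argument shows that both generating circuits can be written in a two-layer form $U=O\cdot I$, where each of $O$ and $I$ is a product of unitaries on disjoint blocks of $O(d)$ neighboring qubits, with the $O$- and $I$-partitions shifted so that each block in one partition overlaps at most two blocks in the other. This fixes the basic geometry and already explains the constants quoted in the theorem: $\Theta(n/d)$ blocks per partition, widths $\le 4d$, and $O(1)$ block memberships per qubit.

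Starting from $|\phi\>:=U_0^\dag|\psi_\sc\>$, which is $\epsilon$-close to $|0\>^{\otimes n}$, I peel off the outer layer first. Write $O_0^\dag O_\sc=\prod_j\widetilde V_j$ with $\widetilde V_j:=(V_j^{(0)})^\dag V_j^\sc$ supported on the outer-block $B_j^{\rm out}$. The aim is to replace each $\widetilde V_j$, \emph{in its action on the relevant state}, by an $O(\epsilon)$-rotation $R_j^{\rm out}$ supported on the enlargement $\widetilde B_j^{\rm out}$ that also covers the one or two inner-blocks meeting $B_j^{\rm out}$. By data processing of $d_{\Tr}$, the reduced state of $|\psi_\sc\>$ on $\widetilde B_j^{\rm out}$ is $\epsilon$-close to that of $|\psi_0\>$; pulling back through $V_j^{(0)}$ and using Uhlmann's theorem to match purifications on the complement yields a local vector $|\chi_j\>$ on $\widetilde B_j^{\rm out}$ with $d_{\Tr}(\widetilde V_j|\chi_j\>\<\chi_j|\widetilde V_j^\dag,|\chi_j\>\<\chi_j|)=O(\epsilon)$, and Lemma \ref{lemma:localization} then supplies the desired $R_j^{\rm out}$. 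The enlargement is exactly what the cautionary remark before the theorem shows one cannot avoid, because entanglement across the boundary of $B_j^{\rm out}$ generally cannot be absorbed by a rotation supported strictly on $B_j^{\rm out}$.

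After the outer layer is handled, the residual state is within $O(\epsilon)$ of $I_0|0\>^{\otimes n}$, and applying $I_0^\dag$ reduces the problem to the analogous inner-layer problem, which I resolve by the same argument to produce a second family of $O(\epsilon)$-rotations on enlarged inner-blocks. Concatenating the outer and inner families, and tracking the constants of Lemma \ref{lemma:localization} (the factor $4$ there, propagated through the two peeling stages and a triangle inequality), gives the stated product $W=\prod_j W_j^{28\epsilon}$ with the quoted radius $28\epsilon$, extended block size $\le 8d$, number of factors $n_{\rm gate}\le n/d+4$, and overlap $n_{\rm overlap}=4$ from the shifted-grid geometry.

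The principal obstacle is the local-to-global replacement: each $\widetilde V_j$ is generally far from the identity, and only its action on a specific local vector is $O(\epsilon)$-small. Making these replacements consistently, without error blow-up or incompatible overlaps, requires a block-by-block inductive argument in which the current mismatch is absorbed into the next extended block using its $O(1)$ overlap slack with its neighbors. Once this is done, the parameter count follows immediately from Lemma \ref{lemma:epsrotation}: each $(28\epsilon)$-rotation on at most $8d$ qubits is described by a Hermitian of operator norm at most $1$ on $\C^{2^{8d}}$, contributing $O(2^{16d})$ real parameters; summing over the $\Theta(n/d)$ rotations yields the claimed $O(n\cdot 2^{16d}/d)$ real parameters in total.
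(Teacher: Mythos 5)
Your high-level route is the same as the paper's: a lightcone-based two-layer reduction, outer-then-inner peeling, Uhlmann's theorem to match marginals, and Lemma~\ref{lemma:localization} to produce the $O(\epsilon)$-rotations. The constants you track and the parameter count are also right. But two things in the middle are not actually argued and are exactly where the difficulty sits.

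First, when you ``match purifications'' via Uhlmann, the resulting unitary is \emph{not} small -- it is an arbitrary unitary $W_j^{(T)}$ (resp.\ $W_j^{(B)}$) on the boundary region, and it cannot simply be discarded or converted into an $O(\epsilon)$-rotation at this stage. The reason the enlargement helps is not that the boundary mismatch is small, but that it can be absorbed into the \emph{next} (inner) layer's gates, which are themselves being replaced by their tomographic estimates. The paper's Lemma~\ref{lemma:decoupling} makes this precise: inverting the estimate of a single outer gate yields, up to $3\epsilon$ in trace distance, a product state decorated by unknown (possibly large) Uhlmann unitaries on the overlap regions, and these unknown unitaries then ride along to be localized together with the inner-layer gates, producing the $(28\epsilon)$-rotations on $Q_j^{(1)}$. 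Your proposal instead asserts a local vector $|\chi_j\>$ on which $\widetilde V_j$ itself is $O(\epsilon)$-close to identity; as the cautionary example before the theorem shows, nothing like this holds for the un-enlarged block, and even on the enlarged block what is small is the \emph{residual after} the Uhlmann unitaries are factored off, not $\widetilde V_j$ directly.

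Second, you correctly name the consistency problem (``making these replacements consistently, without error blow-up or incompatible overlaps''), but then treat it as done. The paper's resolution is worth noting because it is structurally different from an inductive absorption of approximate errors: Lemma~\ref{lemma:localization} does not merely bound a distance, it produces a rotation satisfying an \emph{exact} identity on the relevant state, as in Eq.~(\ref{secondlayeraction}). Because the remainders are exact small unitaries rather than $O(\epsilon)$ trace-distance errors, there is no error accumulation when the $n/d$ of them are composed; the only bookkeeping is a commutation argument that conjugates each $V_j^{12\epsilon}$ past the remaining second-layer gates (which enlarges its support to $\le 8d$ qubits and caps the overlap at $4$). Without this ``exactness after Lemma~\ref{lemma:localization}'' observation and the explicit commutation step, your block-by-block induction would face exactly the error blow-up and overlap incompatibility you flag, so the gap is real rather than cosmetic.
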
 
Note that the bound on the number of parameters can be made tighter (but the bound might become less succinct).
Details can be found in the proof; see Appendix \ref{subsec:prooflocalpara}. 

We emphasize once again the importance of Theorem \ref{theo:localization}.
Via tomography, one can only fix the $n$-qubit shallow-circuit state, which lives in a Hilbert space of very high dimension, to a small but still high-dimensional region. With Theorem \ref{theo:localization}, however, one can further delegate the uncertainty to small local rotations that live in a much smaller parameter space.
This achieves an exponential reduction of the parametric dimension and serves as the first and a crucial step of the entire compression protocol.

\subsection{Proof of Theorem \ref{theo:localization}}\label{subsec:prooflocalpara}
\noindent{\it Reduction to two layers.}
A shallow-circuit state $|\psi_{\sc}\>$ is generated by an $n$-qubit, depth-$d$ brickwork shallow circuit.
We divide the brickwork circuit into non-intersecting light-cones, resulting in a two-layer circuit (see Figure~\ref{fig:circuit}):
\begin{align}\label{circuit-def}
    U_{\rm bw} = U^{(2)}U^{(1)} = \left(\bigotimes_{i=1}^{n_1}U_i^{(2)}\right)\left(\bigotimes_{j=1}^{n_2}U_j^{(1)}\right).
\end{align}
More explicitly, we first take gates in the light-cone of the first two qubits as the first gate in the second layer, $U_1^{(2)}$. Then starting from the qubit next to the last qubit in this light-cone, we trace $2d$ qubits' light-cone backwards and define this block of gates as the second gate in the second layer, $U_2^{(2)}$. We repeat this procedure until all qubits are exhausted, and the remaining separated blocks are defined as the first-layer gates, as illustrated in Figure \ref{fig:circuit}. It is easy to see that every (block-)gate acts on no more than $2d$ qubits. There are no more than $n_2\le \frac{n}{2d}+2$
gates in the second layer and, since the first layer has no more gates than the second, the total number of gates $n_{\rm gate}$ in the reduced two-layer circuit is upper bounded as
\begin{align}
    n_{\rm gate}\le \frac{n}{d}+4.
\end{align}

\begin{figure}[bt]
    \centering
    \begin{subfigure}[t]{0.3\textwidth}
         \centering
        \includegraphics[width=\textwidth]{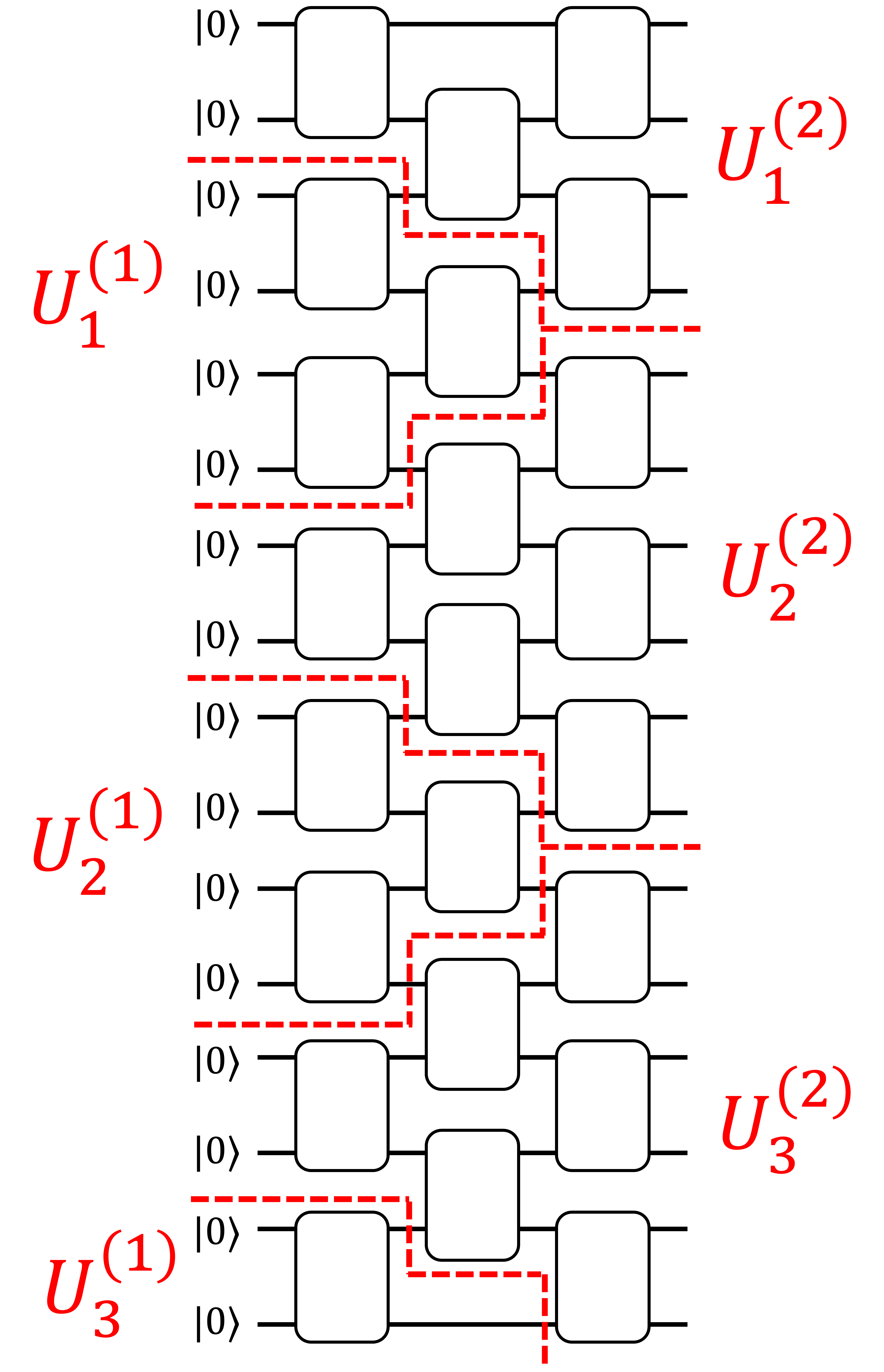}
         \caption{A brickwork shallow-circuit state considered in this work.} 
     \end{subfigure}
     \qquad
     \begin{subfigure}[t]{0.42\textwidth}
         \centering
         \includegraphics[width=\linewidth]{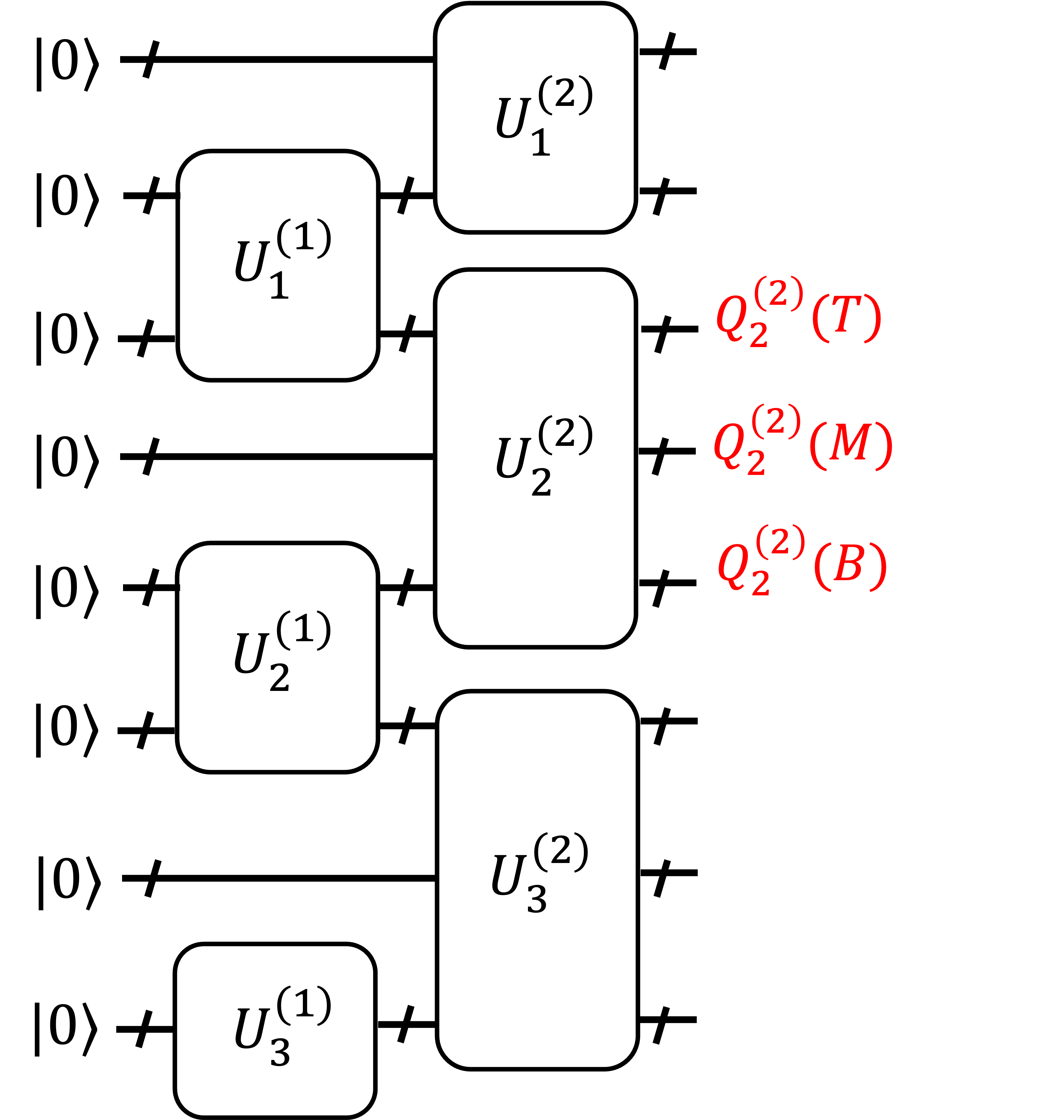}
         \caption{The two-layer reduction of the shallow circuit.}
     \end{subfigure}
    \caption{{\bf Two-layer reduction of a shallow-circuit state.}}
    \label{fig:circuit}
\end{figure}

\medskip

\noindent{\it Grouping qubits.} For convenience of discussion, we group the $n$ qubits into subsets and label them using the following convention. We define a function $Q=Q(U)$ that maps a unitary $U$ to the qubits that it acts non-trivially on. In particular, $Q_j^{(i)}:=Q(U_j^{(i)})$
is the collection of qubits that $U_j^{(i)}$ acts non-trivially on. We further divide $Q_j^{(i)}$ into subsets $Q_j^{(i)}(T)$ (top), $Q_j^{(i)}(M)$ (middle) and $Q_j^{(i)}(B)$ (bottom) as illustrated in Figure~\ref{fig:circuit}. Note that $Q_j^{(i)}(M)=\emptyset$ if $i=1$. For any of these subsets, its cardinality is no more than $d$, and $|Q_j^{(i)}|\le 3d$, i.e., every gate $U^{(i)}_j$ acts non-trivially on no more than $3d$ qubits.
 
By assumption, we are given a shallow-circuit state $|\hat{\psi}_{\sc}\>=\hat{U}_{\rm bw}|0\>^{\otimes n}$ of the same structure. By the same two-layer reduction, we get 
\begin{align}
    \hat{U}_{\rm bw} = \hat{U}^{(2)}\hat{U}^{(1)} = \left(\bigotimes_{i}\hat{U}_i^{(2)}\right)\left(\bigotimes_{j}\hat{U}_j^{(1)}\right).
\end{align}
such that 
\begin{align}
    d_{\Tr}(\psi_{\sc},\hat{\psi}_{\sc})\le\epsilon.
\end{align}
We remark that in practice such a $|\hat{\psi}_{\sc}\>$ can indeed be found via tomography, as discussed later (see Lemma \ref{lemma:tomography}).

\medskip

\noindent{\it Localization of second-layer gates.}
In the following, we shall argue that the desired localization can be achieved via applying the inverse of $\hat{U}_{\rm bw}$ to $|\psi_\sc\>$. This appears rather intuitive, but the local parameterization turns out to be quite non-trivial.

To begin with, we apply the inverse of $\hat{U}^{(2)}$ to $|\psi_\sc\>$. This approximately decouples the $n$-qubit state into the tensor product of $O(n/d)$ states.
The effectiveness is guaranteed by the following lemma:
\begin{lemma}[Approximate decoupling.]\label{lemma:decoupling}
Let $A,A',B,B',C$ be quantum systems. 
Given two bipartite pure states of the form $|\psi\>_{AA'}\otimes|\phi\>_{BB'},|\psi'\>_{AA'}\otimes|\phi'\>_{BB'}$ and a tripartite unitary channel $\map{V}_{ABC}$, if $d_{\Tr}((\map{I}_{A'B'}\otimes\map{V}_{ABC})(\psi_{AA'}\otimes\phi_{BB'}\otimes|0\>\<0|_C),\psi'_{AA'}\otimes\phi'_{BB'}\otimes|0\>\<0|_C)\le\epsilon$, then there exist unitary channels $\map{W}_A,\map{U}_B$ on $A$ and $B$, respectively, such that
\begin{align}
    d_{\Tr}((\map{I}_{A'B'}\otimes\map{V}_{ABC})(\psi_{AA'}\otimes\phi_{BB'}\otimes|0\>\<0|_C),(\map{I}_{A'B'C}\otimes\map{W}_{A}\otimes\map{U}_B)(\psi_{AA'}\otimes\phi_{BB'}\otimes|0\>\<0|_C))\le 3\epsilon.
\end{align}
\end{lemma}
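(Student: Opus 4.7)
\textbf{Proof plan for Lemma \ref{lemma:decoupling}.}
The plan is to construct $W_A$ and $U_B$ via Uhlmann's theorem applied separately on the $A\text{-}A'$ and $B\text{-}B'$ cuts, exploiting the fact that $\map{V}_{ABC}$ leaves $A'B'$ untouched. Write $|\Phi\>:=V_{ABC}(|\psi\>_{AA'}|\phi\>_{BB'}|0\>_C)$ and $|\Phi'\>:=|\psi'\>_{AA'}|\phi'\>_{BB'}|0\>_C$, so the hypothesis reads $d_{\Tr}(\Phi,\Phi')\le\epsilon$, equivalently (pure-state Fuchs--van de Graaff) $F(\Phi,\Phi')\ge 1-\epsilon^2$.

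First I would trace out all systems except $A'$. Because $V_{ABC}$ acts trivially on $A'$ (and on $B'$), the $A'$-marginals of $|\Phi\>$ and $|\Phi'\>$ are exactly $\Tr_A\psi_{AA'}$ and $\Tr_A\psi'_{AA'}$ respectively. Monotonicity of the fidelity under partial trace then gives
\begin{align*}
F\bigl(\Tr_A\psi_{AA'},\,\Tr_A\psi'_{AA'}\bigr)\ge F(\Phi,\Phi')\ge 1-\epsilon^2.
\end{align*}
Applying Uhlmann's theorem to these two states in $\spc{H}_{AA'}$, with $|\psi\>_{AA'}$ and $|\psi'\>_{AA'}$ as the chosen purifications, produces a unitary $W_A$ on $A$ with $|\<\psi'|(W_A\otimes I_{A'})|\psi\>|^2\ge 1-\epsilon^2$. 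The same argument on the $B\text{-}B'$ cut yields $U_B$ on $B$ with $|\<\phi'|(U_B\otimes I_{B'})|\phi\>|^2\ge 1-\epsilon^2$.

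Multiplying the two overlaps, the tensor-product structure of $|\Phi'\>$ (together with $|0\>_C\mapsto|0\>_C$ on the passive ancilla) gives
\begin{align*}
\bigl|\<\Phi'|(W_A\otimes U_B\otimes I_{A'B'C})(|\psi\>_{AA'}|\phi\>_{BB'}|0\>_C)\bigr|^2\ge(1-\epsilon^2)^2\ge 1-2\epsilon^2,
\end{align*}
so the trace distance between $|\Phi'\>$ and the candidate local-unitary output is at most $\sqrt{2}\,\epsilon$. The triangle inequality combined with $d_{\Tr}(\Phi,\Phi')\le\epsilon$ then yields $d_{\Tr}(\Phi,(W_A\otimes U_B\otimes I_C)(\psi\otimes\phi\otimes|0\>\<0|_C))\le(1+\sqrt{2})\,\epsilon<3\epsilon$, as claimed.

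The main obstacle is purely quantitative: if one passes through marginals via trace distance and uses the weaker Fuchs--van de Graaff side to bound fidelity, one is led to a suboptimal $\sqrt{\epsilon}$ scaling and misses the advertised linear $3\epsilon$ constant. Staying on the fidelity side throughout---using $F=1-d_{\Tr}^2$ for pure states, fidelity monotonicity under partial trace, and Uhlmann overlaps rather than marginal trace distances---is what preserves the linear dependence on $\epsilon$. The ancilla $C$ itself plays an essentially passive role because it is in $|0\>$ both at input and (to within the hypothesis) at output, so no nontrivial argument about $C$ is required.
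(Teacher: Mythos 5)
Your proof is correct and follows essentially the same route as the paper: Fuchs--van de Graaff to pass to fidelity, data processing to reduce to the $A'$ (resp.\ $B'$) marginals which are untouched by $\map{V}_{ABC}$, Uhlmann on each cut with $A$ (resp.\ $B$) as the purifying system, and a final triangle inequality against the hypothesis. The one small variation is in how the two Uhlmann overlaps are recombined---you multiply fidelities over the tensor product (giving $\sqrt{2}\,\epsilon$ for that step and an overall $(1+\sqrt{2})\epsilon$), whereas the paper converts each factor back to trace distance and adds them (giving $2\epsilon$ and an overall $3\epsilon$); both land within the advertised $3\epsilon$.
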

\begin{proof}
    For pure states $\psi,\psi'$, the relation between the fidelity and the trace distance is $F=1-d_{\Tr}^2$ [see Eq.~(\ref{fuchsvandegraaff})], from which we get
    \begin{align}
        F((\map{I}_{A'B'}\otimes\map{V}_{ABC})(\psi_{AA'}\otimes\phi_{BB'}\otimes|0\>\<0|_C),\psi'_{AA'}\otimes\phi'_{BB'}\otimes|0\>\<0|_C)\ge 1-\epsilon^2.
    \end{align}
    Tracing out $A',B,B',C$, we get $F(\rho_A,\rho_A')\ge 1-\epsilon^2$ by data processing, where 
    \begin{align}
        \rho_A&:=\Tr_{A'BB'C}(\map{I}_{A'}\otimes\map{V}_{ABC})(\psi_{AA'}\otimes|0\>\<0|_C\otimes\phi_{BB'})=\Tr_{A'}\psi_{AA'}\\
        \rho_A'&:=\Tr_{A'}\psi_{AA'}'.
    \end{align}
    By Uhlmann's theorem, there exists a unitary channel $\map{W}_A$ such that
    \begin{align}
    F\left(\map{I}_{A'}\otimes\map{W}_A(\psi_{AA'}),\psi'_{AA'}\right)=F(\rho_A,\rho_A')\ge 1-\epsilon^2
    \end{align}
    By Eq.~(\ref{fuchsvandegraaff}), we have
    \begin{align}\label{prelim:temp1}
    d_{\Tr}\left(\map{I}_{A'}\otimes\map{W}_A(\psi_{AA'}),\psi'_{AA'}\right)\le \epsilon.
    \end{align}
    In the same way, we can show that there exists a unitary channel $\map{U}_B$ on $B$ such that
     we have
    \begin{align}\label{prelim:temp2}
    d_{\Tr}\left(\map{I}_{B'}\otimes\map{U}_B(\phi_{BB'}),\phi'_{BB'}\right)\le \epsilon.
    \end{align}
    Combining Eqs.~(\ref{prelim:temp1}) and (\ref{prelim:temp2}), we have 
    \begin{align}
    d_{\Tr}\left((\map{I}_{A'B'C}\otimes\map{W}_A\otimes\map{U}_B)(\psi_{AA'}\otimes|0\>\<0|_C\otimes\phi_{BB'}),\psi_{AA'}'\otimes|0\>\<0|_C\otimes\phi_{BB'}'\right)\le 2\epsilon.
    \end{align}
    Combining the above inequality with the assumption and applying the triangle inequality, we get the desired inequality.
\end{proof}

For any gate $U_j^{(2)}$ in the second layer, applying the inverse of $\hat{U}_j^{(2)}$ we get
$d_{\Tr}\left(\left(\hat{\map{U}}_j^{(2)}\right)^{-1}(\psi_\sc),\left(\hat{\map{U}}_j^{(2)}\right)^{-1}(\hat{\psi}_\sc)\right)\le\epsilon$.
(Note that, to keep the notations succinct, we will abbreviate   $\map{U}\otimes\map{I}$ to $\map{U}$ when there is no risk of confusion.)
Since $\left(\map{U}_j^{(2)}\right)^{-1}(\hat{\psi}_\sc)$ acts trivially on $Q_j^{(2)}(M)$, by Lemma \ref{lemma:decoupling}, there exist two unitaries $W_j^{(T)}$ and $W_j^{(B)}$ acting non-trivially on $Q_j^{(2)}(T)$ and $Q_j^{(2)}(B)$, respectively, such that 
\begin{align}\label{secondlayer:error}
    d_{\Tr}\left(\left(\hat{\map{U}}_j^{(2)}\right)^{-1}(\psi_\sc),\left(\map{W}_j^{(T)}\otimes\map{W}_j^{(B)}\right)\circ\left(\map{U}_j^{(2)}\right)^{-1}(\psi_\sc)\right)\le 3\epsilon.
\end{align}
By the invariance of the trace distance under unitary transformation, we may now remove all other gates and associated qubits in the second layer, which results in the following inequality:
\begin{align}
    d_{\Tr}\left(\left(\hat{\map{U}}_j^{(2)}\right)^{-1}\map{U}_j^{(2)}\circ\left(\map{U}_{j-1}^{(1)}\otimes\map{U}_j^{(1)}\right)(|0\>\<0|_{Q_j}),\left(\map{W}_j^{(T)}\otimes\map{W}_j^{(B)}\right)\circ\left(\map{U}_{j-1}^{(1)}\otimes\map{U}_j^{(1)}\right)(|0\>\<0|_{Q_j})\right)\le 3\epsilon.
\end{align}
for every $j$. Here $Q_j=Q_{j-1}^{(1)}\cup Q_{j}^{(1)}\cup Q_{j}^{(2)}$ and $|0\>\<0|_{Q_j}$ denotes the ground state of these qubits. By Lemma \ref{lemma:localization}, there exists a $(12\epsilon)$-rotation $\map{V}^{12\epsilon}_j$ acting non-trivially on $Q_j$ such that
\begin{align}\label{secondlayeraction}
    \left(\hat{\map{U}}_j^{(2)}\right)^{-1}\map{U}_j^{(2)}\circ\left(\map{U}_{j-1}^{(1)}\otimes\map{U}_j^{(1)}\right)(|0\>\<0|_{Q_j})=\map{V}_j^{12\epsilon}\circ\left(\map{W}_j^{(T)}\otimes\map{W}_j^{(B)}\right)\circ\left(\map{U}_{j-1}^{(1)}\otimes\map{U}_j^{(1)}\right)(|0\>\<0|_{Q_j}).
\end{align}

\begin{figure}[htb]
    \centering
    \includegraphics[width=\linewidth]{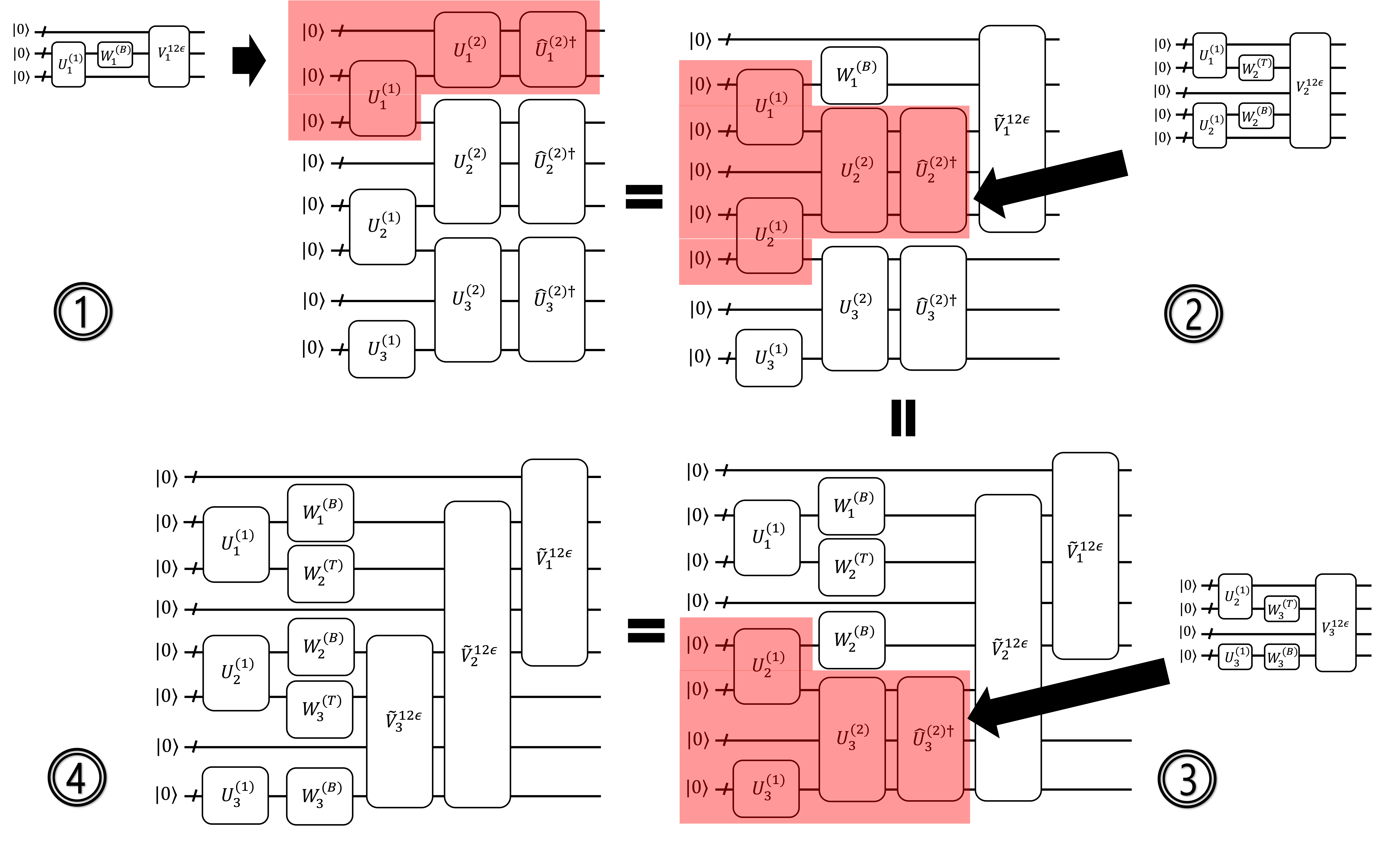}
    \caption{{\bf Gate-wise inversion of the second-layer.} Applying the inverse of $\hat{U}^{(2)}$ to the shallow-circuit state approximately inverts the second-layer gate $U^{(2)}$ up to some unitaries $\{W_j^{(T/B)}\}$ that act only on single registers. The approximation error can be effectively represented by a ladder-shape circuit of $O(\epsilon)$-rotations $\{\tilde{V}_j^{12\epsilon}\}$.}
    \label{fig:secondlayer}
\end{figure}

Now, we see the effect of inverting the second-layer gates. 
As shown in Figure \ref{fig:secondlayer} (step 1), applying the inverse of $\hat{U}^{(2)}$ to $|\psi_{\sc}\>$, we start from analysing the effect of $\hat{U}_1^{(2)}$:
\begin{align}
    &\left(\hat{\map{U}}^{(2)}\right)^{-1}(\psi_\sc)=\left(\bigotimes_{j=2}\left(\hat{\map{U}}^{(2)}_j\right)^{-1}\map{U}^{(2)}_j\right)\circ\left(\left(\hat{\map{U}}^{(2)}_1\right)^{-1}\map{U}^{(2)}_1\circ\map{U}_j^{(1)}(|0\>\<0|_{Q_1})\otimes\left(\bigotimes_{i=2}\map{U}_i^{(1)}\right)(|0\>\<0|_{\bar{Q}_1})\right).
\end{align}
Substituting Eq.~(\ref{secondlayeraction}) into the above equation (Note that $\map{U}_0^{(1)}$ and $\map{W}_1^{(T)}$ are trivial.), we get 
\begin{align}
    &\left(\hat{\map{U}}^{(2)}\right)^{-1}(\psi_\sc)=\left(\bigotimes_{j=2}\left(\hat{\map{U}}^{(2)}_j\right)^{-1}\map{U}^{(2)}_j\right)\circ\left(\map{V}_1^{12\epsilon}\circ\map{W}_1^{(B)}\circ\map{U}_1^{(1)}(|0\>\<0|_{Q_1})\otimes\left(\bigotimes_{i=2}\map{U}_i^{(1)}\right)(|0\>\<0|_{\bar{Q}_1})\right).
\end{align}
As shown in Figure \ref{fig:secondlayer} (step 2), we can move $V_1^{12\epsilon}$ rightwards in the circuit (and leftwards in the equation):
\begin{align}
    &\left(\hat{\map{U}}^{(2)}\right)^{-1}(\psi_\sc)=\tilde{\map{V}}_1^{12\epsilon}\circ\left(\bigotimes_{j=2}\left(\hat{\map{U}}^{(2)}_j\right)^{-1}\map{U}^{(2)}_j\right)\circ\left(\map{W}_1^{(B)}\circ\map{U}_1^{(1)}(|0\>\<0|_{Q_1})\otimes\left(\bigotimes_{i=2}\map{U}_i^{(1)}\right)(|0\>\<0|_{\bar{Q}_1})\right).
\end{align}
Noticing that all remaining gates in the second but $U_2^{(2)}$ commutes with $V_j^{12\epsilon}$ and any $\epsilon$-rotation remains an $\epsilon$-rotation under unitary conjugation, we have $\tilde{V}^{12\epsilon}_1=\left(\map{U}^{(2)}_2\right)^{\dag}\map{U}^{(2)}_2(V^{12\epsilon}_1)$, which is a $(12\epsilon)$-rotation acting on $Q_1^{(2)}\cup Q_1^{(1)}\cup Q_2^{(2)}$.

Repeating the above procedure for every gate in the second layer, we obtain:
\begin{align}\label{secondlayer:localization}
    &\left(\hat{\map{U}}^{(2)}\right)^{-1}(\psi_\sc)=\left(\prod_{j}\tilde{\map{V}}_j^{12\epsilon}\right)\left(\bigotimes_{i}\left(\left(\map{W}^{(B)}_{i}\otimes\map{W}^{(T)}_{i+1}\right)\map{U}^{(1)}_i\left(|0\>\<0|_{Q^{(1)}_i}\right)\otimes|0\>\<0|_{Q^{(2)}_i(M)}\right)\right).
\end{align}
Here each $\left(\map{W}^{(B)}_{i}\otimes\map{W}^{(T)}_{i+1}\right)\map{U}^{(1)}_i$ acts on $Q^{(1)}_i$, and each $\tilde{\map{V}}_j^{12\epsilon}$ is a $(12\epsilon)$-rotation acting on (no more than $7d$) qubits $Q^{(1)}_{j-1}\cup Q^{(2)}_j\cup Q^{(2)}_{j+1}$. See Figure~\ref{fig:secondlayer} (step 4) for an illustration.

\medskip

\noindent{\it Localization of first-layer gates.}
The localization of $U_j^{(1)}$ can be achieved by first (approximately) inverting the two overlapping second-layer gates $U_j^{(2)}$ and $U_{j+1}^{(2)}$ using $\hat{U}_j^{(2)}$ and $\hat{U}_{j+1}^{(2)}$. 
Using Eq.~(\ref{secondlayer:error}) twice and applying the triangle inequality, for every gate of the first layer, we obtain the following:
    \begin{align}
        d_{\Tr}\left(\hat{\map{U}}^{(1)}_j(|0\>\<0|_{Q_j^{(1)}}),\left(\map{W}^{(B)}_j\otimes\map{W}^{(T)}_{j+1}\right)\map{U}^{(1)}_j(|0\>\<0|_{Q_j^{(1)}})\right)\le 7\epsilon.
    \end{align}
    Then, we use the inverse of $\hat{U}_j^{(1)}$.
We can now apply Lemma \ref{lemma:localization} and replace each first layer gate (plus the local degree of freedom) by its estimate and a $(28\epsilon)$-rotation $\map{P}_j^{28\epsilon}$:
\begin{align}\label{firstlayer:localization}
       \left(\map{W}^{(B)}_j\otimes\map{W}^{(T)}_{j+1}\right)\map{U}^{(1)}_j(|0\>\<0|_{Q_j^{(1)}})=\hat{\map{U}}^{(1)}_j\circ\map{P}_j^{28\epsilon}(|0\>\<0|_{Q_j^{(1)}}).
\end{align}
Overall, to localize the whole state, we apply $\hat{\map{U}}^{-1}$ on $\psi_{\sc}$. Combining Eq.~(\ref{secondlayer:localization}) with Eq.~(\ref{firstlayer:localization}), we get
\begin{align}
    \left(\hat{\map{U}}\right)^{-1}(\psi_\sc)=\left(\prod_{j}\tilde{\map{V}'}_j^{12\epsilon}\right)\left(\bigotimes_{i}\map{P}_i^{28\epsilon}\right)(|0\>\<0|^{\otimes n}),
\end{align}
where
\begin{align}
    \tilde{V'}_j^{12\epsilon}=\left(\hat{\map{U}}^{(1)}\right)^{-1}(\tilde{V}_j^{12\epsilon})
\end{align}
is a $(12\epsilon)$-rotation acting on (no more than $8d$) qubits $Q^{(1)}_{j-1}\cup Q^{(2)}_j\cup Q^{(2)}_{j+1}\cup Q^{(1)}_{j+1}$ and  $P_i^{28\epsilon}$ is a $(28\epsilon)$-rotation acting on (no more than $2d$) qubits $Q^{(1)}_i$. 

In summary, there are no more than $n/d+4$ rotations in total, since the rotations are in one-to-one correspondence with the two-layer reduced gates. 
Also notice that $\tilde{V}'^{12\epsilon}_j$ and $\tilde{V}'^{12\epsilon}_k$ overlap only if $|j-k|\le 2$. Taking into account $\otimes_j\map{P}_j^{28\epsilon}$, we conclude that the number of rotations acting non-trivially at a qubit is no more than 4. 
Finally, since a rotation on $8d$ qubits can be characterized by $4^{8d}-1$ parameters and there are $O(n/d)$ such rotations, the total number of real parameters of the local model is $O((n/d)\cdot 4^{8d})$.

\section{Quantum local asymptotic normality (Q-LAN).}\label{sec:QLAN}
Generally speaking, taking $N$ copies of a $D(<\infty)$-dimensional state (which can be mixed) as input, a Q-LAN \cite{guctua2006local,guctua2007local,kahn2009local} is a transformation (independent of the input state) that outputs a state close to a multi-mode Gaussian state. The transformation is approximately invertible. Namely, there exists an inverse Q-LAN transformation recovering the original state up to an error vanishing in $N$. Here locality means that the Q-LAN works only if all states under consideration are in a neighborhood with a vanishing radius.
The Q-LAN can reduce the information processing of finite-dimensional quantum systems (when multiple copies are at hand) to that of Gaussian systems and has been an ingredient to achieve the ultimate precision limit of multiparameter quantum state estimation \cite{gill2013asymptotic,yang2019cmpattaining,girotti2023optimal}.

For our task of compression, we would like to reduce $N$ copies of a shallow-circuit state to a Gaussian state via the Q-LAN.
Problems with the existing Q-LANs are: $1)$ $D=2^n$ is too large; $2)$ $D$ was assumed to be fixed but here $D$ is varying ($n$-dependent); $3)$ the parametrization of the $D$-dimensional state is fixed to be a standard form (see \cite{kahn2009local}); we need to convert our $O(\epsilon)$-rotation parameterization to the standard parametrization.

In the following, we bridge the gap between the existing results and the setting of this work by finding a \emph{new} Q-LAN for pure states generated by $O(\epsilon)$-rotations.
For the generality of our results, let us consider a pure state generated by the concatenation of $G$ gates: $|\psi\>=U_G U_{G-1}\cdots U_1|0\>$,
where each of the gates is an $\epsilon$-rotation that acts non-trivially only on a constant number of qubits. The reduced shallow-circuit state after parameter localization (see Appendix \ref{sec:localparameterization}) stands as a special case. 

Explicitly, our Q-LAN concerns the following family of states:
Let $Q=(q_1,\dots,q_G)$ be a sequence of subsets of $[n]$ with each $|q_j|\le \tilde{d}$. Let
\begin{align}
    \set{S}_{N,\eta}(Q):=\left\{|\psi(\vec{H})\>^{\otimes N}~:~|\psi(\vec{H})\>=|\psi(H_1,\dots,H_G)\>=\prod_{j=1}^{G}\exp\left(-i\frac{\eta}{\sqrt{N}} H_j\right)|0\>^{\otimes n}\right\}
\end{align}
be the $N$-copy family of $n$-qubit pure states generated by $O(\eta/\sqrt{N})$-time unitary evolutions with a fixed circuit template $Q$, where every $H_j$ is a Hermitian operator  acting trivially on qubits that are not in $q_j$ and $\|H_j\|_{\infty}\le 1$. 
Denote by $\set{B}(Q)=\cup_{j=1}^G\left\{\vec{k}\in2^{[n]}~:~k_i=0~\forall i\not\in q_j\right\}$, which keeps track of all $n$-qubit strings that could possibly be ``generated" by acting $\sum_j H_j$ upon $|0\>^{\otimes n}$, and by
\begin{align}\label{def:cohstate}
    |\vec{u}(\vec{H})\>_{\rm coh}=\bigotimes_{\vec{k}\in\set{B}(Q)}|u^{\vec{k}}_1+iu^{\vec{k}}_2\>_{\rm coh}
\end{align}
a coherent state with no more than $G\cdot 2^{\tilde{d}}$ modes, where $|u\>_{\rm coh}:=e^{-|u|^2/}\sum_{m=0}^{\infty}u^m/\sqrt{m!}|m\>$ is a (single-mode) coherent state over an infinite-dimensional Hilbert space with a Fock basis $\{|m\>\}$, $u^{\vec{k}}_1=\Re\left(\eta\<\vec{k}|\sum_{j=1}^G H_j|\vec{0}\>\right)$ (the real part of $\eta\<\vec{k}|\sum_{j=1}^G H_j|\vec{0}\>$), and $u^{\vec{k}}_2=\Im\left(\eta\<\vec{k}|\sum_{j=1}^G H_j|\vec{0}\>\right)$ (the imaginary part of $\eta\<\vec{k}|\sum_{j=1}^G H_j|\vec{0}\>$). The amplitude of each mode of $|\vec{u}(\vec{H})\>_{\rm coh}$ is upper bounded by 
\begin{align}\label{amplitudebound}
    \max_{\vec{k}}|u^{\vec{k}}_1+iu^{\vec{k}}_2|\le \eta n_{\rm overlap}
\end{align}
where $n_{\rm overlap}:=\max_{i\in[n]}\sum_{j:i\in q_j}1$ is the maximum of the number of subsets overlapping at any local site.

\begin{theo}[Q-LAN for shallow-circuit states.]\label{theo:qlan}
When $\eta^2 G^2\ll \sqrt{N}$, there exist quantum channels $\map{V}_N$ (the Q-LAN) and $\map{V}^\ast_N$ (the inverse Q-LAN) such that  any $|\psi(\vec{H})\>\in\set{S}_{N,\eta}(Q)$ can be reversibly converted into a coherent state (\ref{def:cohstate}) with no more than $G\cdot 2^{\tilde{d}}$ modes up to vanishing errors: 
 \begin{align}
        &\lim_{N\to\infty}\sup_{\vec{H}}d_{\Tr}\left(\map{V}_N(\psi(\vec{H}^{\otimes N})),|\vec{u}(\vec{H})\>\<\vec{u}(\vec{H})|_{\rm coh}\right)=0\\
        &\lim_{N\to\infty}\sup_{\vec{H}}d_{\Tr}\left(\psi(\vec{H}^{\otimes N}),\map{V}^\ast_N(|\vec{u}(\vec{H})\>\<\vec{u}(\vec{H})|_{\rm coh})\right)=0.
    \end{align}
\end{theo}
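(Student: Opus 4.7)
The strategy is to exploit the fact that, under the stated condition, $|\psi(\vec H)\>$ is close enough to its first-order Taylor expansion $|\tilde\psi(\vec H)\>$ that the $N$-copy trace distance between them vanishes, and this Taylor-truncated state lives in the subspace of dimension $|\set B(Q)|\le G\cdot 2^{\tilde d}$ rather than $2^n$. Within that subspace, the $N$-copy symmetric state can be mapped, via the natural identification between type vectors and multi-mode Fock vectors, onto a state whose amplitudes converge to those of $|\vec u(\vec H)\>_{\rm coh}$. The underlying analytical fact is the Poisson limit theorem: $N$ multinomial trials with per-mode success probability $|\eta a_{\vec k}|^2/N$ converge jointly to independent Poissons with means $|u_{\vec k}|^2$, which is exactly the photon-number statistics of the target coherent state.

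\textbf{Truncation step.} Expanding each factor gives $\prod_j e^{-i(\eta/\sqrt N)H_j}=I-i(\eta/\sqrt N)\sum_j H_j+R$ with $\|R\|_\infty=O(\eta^2 G^2/N)$ since $\|H_j\|_\infty\le 1$; applying to $|0^n\>$ yields
\[
   |\psi(\vec H)\>=|\tilde\psi(\vec H)\>+\delta,\qquad \|\delta\|=O(\eta^2 G^2/N),
\]
where $|\tilde\psi(\vec H)\>\propto|0^n\>+(\eta/\sqrt N)\sum_{\vec p\in\set B(Q)\setminus\{0^n\}}a_{\vec p}|\vec p\>$ with $a_{\vec p}=-i\<\vec p|\sum_j H_j|0^n\>$ uniformly bounded in modulus by $n_{\rm overlap}$ via the overlap argument underlying (\ref{amplitudebound}). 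Under the hypothesis $\eta^2 G^2\ll\sqrt N$ the single-copy trace distance is $o(1/\sqrt N)$, so Lemma \ref{lemma:Ncopyfidelity} gives $d_{\Tr}(\psi(\vec H)^{\otimes N},\tilde\psi(\vec H)^{\otimes N})\to 0$ uniformly in $\vec H$. It thus suffices to construct the Q-LAN channels for $\tilde\psi(\vec H)^{\otimes N}$.

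\textbf{Definition of the channels.} Since $|\tilde\psi(\vec H)\>\in\spc H_Q:=\Span\{|\vec k\>:\vec k\in\set B(Q)\}$, the state $|\tilde\psi(\vec H)\>^{\otimes N}$ lies in the symmetric tensor power of $\spc H_Q$, with orthonormal type basis $\{|\vec t\>_{\rm sym}\}$ labeled by $\vec t=(t_{\vec k})_{\vec k\in\set B(Q)}$, $\sum_{\vec k} t_{\vec k}=N$. Define $\map V_N$ as the isometry
\[
   |\vec t\>_{\rm sym}\ \longmapsto\ \bigotimes_{\vec k\in\set B(Q)\setminus\{0^n\}}|t_{\vec k}\>_{\rm Fock},
\]
eliminating the $t_{0^n}$ coordinate through the total-count constraint and completing arbitrarily on Fock configurations with $M:=\sum_{\vec k} m_{\vec k}>N$. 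The multinomial expansion shows that $\map V_N(\tilde\psi(\vec H)^{\otimes N})$ has amplitude on $\bigotimes_{\vec k}|m_{\vec k}\>_{\rm Fock}$ equal to $\sqrt{N!/((N-M)!\prod_{\vec k} m_{\vec k}!)}\,\prod_{\vec k}(\eta a_{\vec k}/\sqrt N)^{m_{\vec k}}$ times the $(N-M)$-th power of the $|0^n\>$ amplitude of $|\tilde\psi(\vec H)\>$. For each fixed configuration, Stirling turns the combinatorial factor into $\prod_{\vec k} N^{m_{\vec k}/2}/\sqrt{m_{\vec k}!}\cdot(1+o(1))$, which combines with $(\eta a_{\vec k}/\sqrt N)^{m_{\vec k}}$ to give $\prod_{\vec k} u_{\vec k}^{m_{\vec k}}/\sqrt{m_{\vec k}!}$, while $(1+\eta^2\sum_{\vec p}|a_{\vec p}|^2/N)^{-(N-M)/2}\to e^{-\sum_{\vec k}|u_{\vec k}|^2/2}$; the result is exactly the amplitude of $|\vec u(\vec H)\>_{\rm coh}$. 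The inverse $\map V_N^\ast$ is defined by inverting the isometry on Fock configurations with $M\le N$ and completing arbitrarily elsewhere; its error vanishes because the coherent-state weight beyond $M=N$ is super-exponentially small (the total occupation is Poisson with mean $O(\eta^2 G)\ll N$).

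\textbf{Main obstacle.} The principal technical difficulty is that the number of modes $|\set B(Q)|$ may grow with $n$ through $G$, so a naive mode-by-mode Poisson limit followed by a union bound would cost a factor of $G$ in each estimate and would require a strictly stronger hypothesis than $\eta^2 G^2\ll\sqrt N$. To keep the condition sharp I would carry out a \emph{joint} amplitude comparison, bounding the ratio between the exact symmetric-state amplitude and the coherent-state amplitude uniformly over all types with $M\le M^\ast$ by an expression of the form $\exp(O(\eta^2 G^2/N))\cdot(1+o(1))$, and handling the complementary range $M>M^\ast$ via a single Chernoff tail bound on the total occupation. The delicate part is verifying that the bounds are uniform in $\vec H$ without sacrificing any polynomial factor in $G$, so that the truncation step, the Poisson convergence, and the Fock tail cutoff are all controlled by the one assumption $\eta^2 G^2\ll\sqrt N$.
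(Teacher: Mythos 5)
Your high-level strategy coincides with the paper's: expand $|\psi(\vec H)\>$ to first order in $\eta/\sqrt N$, pass to the truncated state via the $N$-copy fidelity lemma (Lemma~\ref{lemma:Ncopyfidelity}), restrict to the low-dimensional subspace $\Span\{|\vec k\>:\vec k\in\set B(Q)\}$, define $\map V_N$ by identifying the symmetric type basis with multi-mode Fock vectors, and argue that the resulting amplitudes match those of $|\vec u(\vec H)\>_{\rm coh}$. The truncation estimate, the definition of $\map V_N$, and the definition of the pseudo-inverse $\map V_N^\ast$ are all consistent with what the paper does, and you correctly note that the tail of the coherent state beyond total photon number $N$ is negligible.

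However, what you label the \emph{main obstacle} is exactly the part of the argument that carries the weight of the proof, and you have not actually closed it — you describe a plan (bound the amplitude ratio uniformly on $M\le M^\ast$, Chernoff the tail $M>M^\ast$) rather than carry it out. The ``for each fixed configuration, Stirling turns\ldots'' step establishes only pointwise convergence of individual amplitudes, which does not by itself yield $|\<\vec u|_{\rm coh}V_N|\tilde\psi\>^{\otimes N}|^2\to 1$ uniformly in $\vec H$ when $|\set B(Q)|$ grows with $n$. The paper handles this directly: it writes $\<\vec u|_{\rm coh}V_N|\tilde\psi_{\vec u/\sqrt N}\>^{\otimes N}$ in closed form, observes that the summands are real and non-negative because each is proportional to $|u^k|^{2m_k}$, and therefore obtains a lower bound on the overlap by restricting the sum to the moderate-deviation region $\set M_{\rm res}=\{\vec m: m_k\le N^{1/4}\eta/K\ \forall k\}$. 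On this region $[N]_{|\vec m|}/N^{|\vec m|}\to 1$ uniformly, and the restricted sum already recovers the factor $e^{\sum_k|u^k|^2}$ to leading order, so the overlap tends to $1$ from below. The hypothesis $\eta^2G^2\ll\sqrt N$ is used precisely to ensure $N^{1/4}\eta/K\gg|u^k|^2$ so that the restricted sum captures essentially all the Poisson mass. You have identified the right obstruction and sketched a viable route around it, but as submitted your proof stops at the point where the paper's actual computation begins.
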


\medskip

\noindent{\it Proof of Theorem \ref{theo:qlan}.}
We first prepare a few preliminary lemmas and then combine them with a varying dimensional Q-LAN to obtain a Q-LAN that converts $n$-qubit shallow-circuit states to $\poly(n)$-mode coherent states. 

\begin{lemma}\label{lemma:linearization}
Let $U=\prod_{j=1}^{G}W(H_j)$ where $W(H_j)=\exp\left(-i\epsilon H_j\right)$. Here each $H_j$ is a Hermitian operator with bounded norm $\|H_j\|_{\infty}\le 1$. Then, we have
\begin{align}
    \left\|U-W\left(\sum_{j=1}^G H_j\right)\right\|_{\infty}=O\left(\epsilon^2G^2\right).
\end{align} 
\end{lemma}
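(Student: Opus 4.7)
The plan is to show that both $U$ and $W(\sum_j H_j)$ lie within $O(\epsilon^2 G^2)$ of the common first-order approximation $T := I - i\epsilon \sum_{j=1}^G H_j$, and then conclude by the triangle inequality. Without loss of generality one may assume $\epsilon G$ is bounded by a small absolute constant, since otherwise $\epsilon^2 G^2$ is already $\Omega(1)$ and the stated bound is trivial (the left-hand side is at most $2$ in operator norm).

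For the right-hand side, I would invoke the standard second-order Taylor remainder for matrix exponentials, $\|e^{-i\epsilon A} - I + i\epsilon A\|_\infty \le \tfrac12 (\epsilon \|A\|_\infty)^2 e^{\epsilon \|A\|_\infty}$, applied with $A = \sum_j H_j$. Since $\|A\|_\infty \le G$ by the triangle inequality and the assumption $\|H_j\|_\infty \le 1$, this immediately gives $\|W(\sum_j H_j) - T\|_\infty = O(\epsilon^2 G^2)$ under the regime $\epsilon G = O(1)$.

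For the product, the cleanest route is the telescoping identity $\prod_{j=1}^G A_j - \prod_{j=1}^G B_j = \sum_{k=1}^G A_1\cdots A_{k-1} (A_k - B_k) B_{k+1}\cdots B_G$, applied with $A_j = W(H_j)$ and $B_j = I - i\epsilon H_j$. Each per-factor discrepancy $\|W(H_j) - (I - i\epsilon H_j)\|_\infty$ is $O(\epsilon^2)$ by the same Taylor bound, while the prefactor products have norm at most $(1+\epsilon)^G = O(1)$, so $U$ differs from $\prod_j(I - i\epsilon H_j)$ by $O(\epsilon^2 G)$. Expanding the latter product directly and collecting terms by degree in $\epsilon$, the degree-zero and degree-one contributions combine to exactly $T$, while the terms of degree $\ell \ge 2$ contribute at most $\sum_{\ell \ge 2}\binom{G}{\ell}\epsilon^\ell = O(\epsilon^2 G^2)$ in norm. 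Thus $\|U - T\|_\infty = O(\epsilon^2 G^2)$, and combining with the previous estimate via the triangle inequality finishes the proof.

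The argument is essentially bookkeeping and presents no real obstacle. The only mild subtlety is that the bound $\|\sum_j H_j\|_\infty \le G$ from the triangle inequality is generically loose, which is precisely why the error scales as $G^2$ rather than $G$; this quadratic $G$-dependence is also what forces the condition $\eta^2 G^2 \ll \sqrt{N}$ in Theorem \ref{theo:qlan}, rather than the weaker $\eta^2 G \ll \sqrt{N}$ one might naively hope for.
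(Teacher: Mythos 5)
Your proof is correct, but it takes a genuinely different route from the paper's. The paper telescopes directly between the two unitaries,
\begin{align}
\Bigl\|U-W\Bigl(\sum_j H_j\Bigr)\Bigr\|_\infty\le\sum_{i=2}^{G}\Bigl\|W(H_i)W\Bigl(\sum_{j<i}H_j\Bigr)-W\Bigl(\sum_{j\le i}H_j\Bigr)\Bigr\|_\infty,
\end{align}
and then invokes the Zassenhaus formula to bound each summand by $O(\epsilon^2 G)$ (the commutator $[H_i,\sum_{j<i}H_j]$ supplies the $O(\epsilon^2 G)$ scaling), giving $O(\epsilon^2 G^2)$ after summing $G-1$ terms. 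You instead route both sides through the common linear approximation $T=I-i\epsilon\sum_j H_j$: the exponential side is handled by a scalar Taylor remainder in $\|\sum_j H_j\|_\infty\le G$, and the product side by a telescope against $\prod_j(I-i\epsilon H_j)$ followed by direct expansion and the binomial tail bound $\sum_{\ell\ge2}\binom{G}{\ell}\epsilon^\ell\le e^{\epsilon G}-1-\epsilon G=O(\epsilon^2 G^2)$. Your version is more elementary (no Zassenhaus/BCH machinery needed) and is also more careful about one point that both proofs implicitly rely on: the explicit reduction to the regime $\epsilon G=O(1)$, which is needed for the prefactor bound $(1+\epsilon)^G=O(1)$ and for the Taylor remainder to be controlled, and which in the paper's argument is what makes the higher-order Zassenhaus corrections subleading. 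The paper's derivation is a bit shorter since it needs only one telescope, while yours needs a telescope plus a product expansion; neither requires anything beyond operator-norm submultiplicativity.
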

\begin{proof}
    By the triangle inequality and the unitary-invariance of the operator norm, the error can be bounded by a telescoping sum:
    \begin{align}\label{linear-proof-step1}
       \left\|U-W\left(\sum_j H_j\right)\right\|_{\infty}&\le \sum_{i=2}^{G}\left\|W(H_i)W\left(\sum_{j=1}^{i-1}H_j\right)-W\left(\sum_{j=1}^{i}H_j\right)\right\|_\infty.
    \end{align}
    Each error term can be treated with the convergence property of the Zassenhaus formula $e^{A+B}=e^Ae^Be^{-\frac{[A,B]}{2!}}\cdots$ (cf.~Ref.~\cite{suzuki1977convergence}). Noticing that $\|\sum_{j=1}^{i-1} H_j\|_\infty=O(G)$, we have 
    \begin{align}
        \left\|W(H_i)W\left(\sum_{j=1}^{i-1}H_j\right)-W\left(\sum_{j=1}^{i}H_j\right)\right\|_\infty=O(\epsilon^2G).
    \end{align}
    Substituting into Eq.~(\ref{linear-proof-step1}), we get the desired inequality.
\end{proof}
We proceed with a reduction of the unitary's current form into its standard form in Q-LAN:
\begin{lemma}\label{lemma:parameterreduction}
Let $U=\exp(-i\epsilon H)$ be an $n$-qubit unitary. Define an associated unitary by $$\tilde{U}:=\exp\left(-i\epsilon\sum_{\vec{k}} (\Re(h^{\vec{k}})\sigma_x^{\vec{k}}-\Im(h^{\vec{k}})\sigma_y^{\vec{k}})\right),$$ where $h^{\vec{k}}:=\<\vec{k}|H|\vec{0}\>$, $\sigma_x^{\vec{k}}=|\vec{k}\>\<\vec{0}|+|\vec{0}\>\<\vec{k}|$, $\sigma_y^{\vec{k}}=i(|\vec{k}\>\<\vec{0}|-|\vec{0}\>\<\vec{k}|)$ and $\vec{k}$ denotes a binary string of length $n$ with $\vec{0}:=0^n$. $\Re$ and $\Im$ denote the real part and the imaginary part of a complex number, respectively. Defining $|\psi\>:=U|0\>^{\otimes n}$ and $|\tilde{\psi}\>:=\tilde{U}|0\>^{\otimes n}$, we have
    \begin{align}
        d_{\Tr}\left(\psi,\tilde{\psi}\right)=O\left(\epsilon^2\|H\|_{\infty}^2\right).
    \end{align}
\end{lemma}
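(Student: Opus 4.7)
The plan is to Taylor-expand both $U|\vec{0}\>$ and $\tilde{U}|\vec{0}\>$ to first order in $\epsilon$, exhibit that the off-diagonal action of $\tilde{H}$ on $|\vec{0}\>$ agrees by design with that of $H$, and convert the resulting state-vector difference into the desired trace-distance bound via the pure-state identity $d_{\Tr}=\sqrt{1-F}$ from Eq.~(\ref{fuchsvandegraaff}).

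The first step is an algebraic identity for $\tilde{H}|\vec{0}\>$. Using $\sigma_{x}^{\vec{k}}|\vec{0}\>=|\vec{k}\>$ and $\sigma_{y}^{\vec{k}}|\vec{0}\>=i|\vec{k}\>$ for $\vec{k}\neq\vec{0}$, the coefficient of $|\vec{k}\>$ in $\tilde{H}|\vec{0}\>$ is $\Re(h^{\vec{k}})$ combined with $\pm i\,\Im(h^{\vec{k}})$, chosen (with the sign convention fixed accordingly) so that the recombined coefficient equals exactly $h^{\vec{k}}=\<\vec{k}|H|\vec{0}\>$. In other words, $\tilde{H}|\vec{0}\>=(I-|\vec{0}\>\<\vec{0}|)H|\vec{0}\>=:|\phi\>$, and moreover $\tilde{H}=|\phi\>\<\vec{0}|+|\vec{0}\>\<\phi|$ is a rank-two Hermitian operator satisfying $\|\tilde{H}\|_{\infty}=\|\phi\|\le\|H|\vec{0}\>\|\le\|H\|_{\infty}$.

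The second step is the Taylor comparison. With $\|H\|_{\infty}$ and $\|\tilde{H}\|_{\infty}$ both bounded by $\|H\|_{\infty}$, the remainders in
\begin{align*}
U|\vec{0}\>=|\vec{0}\>-i\epsilon H|\vec{0}\>+R_U,\qquad \tilde{U}|\vec{0}\>=|\vec{0}\>-i\epsilon\tilde{H}|\vec{0}\>+R_{\tilde U}
\end{align*}
satisfy $\|R_U\|,\|R_{\tilde U}\|\le\sum_{k\ge 2}(\epsilon\|H\|_{\infty})^{k}/k!=O(\epsilon^{2}\|H\|_{\infty}^{2})$. Since $H|\vec{0}\>-\tilde{H}|\vec{0}\>=h^{\vec{0}}|\vec{0}\>$ with $h^{\vec{0}}:=\<\vec{0}|H|\vec{0}\>\in\R$ (real by Hermiticity of $H$), I would compare $U|\vec{0}\>$ against $e^{-i\epsilon h^{\vec{0}}}\tilde{U}|\vec{0}\>$ rather than $\tilde{U}|\vec{0}\>$ itself, which cancels the leading $O(\epsilon)$ piece and yields $\bigl\|U|\vec{0}\>-e^{-i\epsilon h^{\vec{0}}}\tilde{U}|\vec{0}\>\bigr\|=O(\epsilon^{2}\|H\|_{\infty}^{2})$.

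The last step converts this vector bound into the stated trace-distance bound. A global phase leaves the fidelity invariant, so $1-|\<\psi|\tilde\psi\>|^{2}\le\bigl\|U|\vec{0}\>-e^{-i\epsilon h^{\vec{0}}}\tilde{U}|\vec{0}\>\bigr\|^{2}=O(\epsilon^{4}\|H\|_{\infty}^{4})$; invoking $d_{\Tr}(\psi,\tilde\psi)=\sqrt{1-F(\psi,\tilde\psi)}$ then gives the claim. The only delicate point is Step~1: one must check that the particular combination $\Re(h^{\vec{k}})\sigma_{x}^{\vec{k}}\pm\Im(h^{\vec{k}})\sigma_{y}^{\vec{k}}$ really produces the matrix element $h^{\vec{k}}$ and not its complex conjugate under the adopted sign conventions, since without this first-order match the analysis would only give $O(\epsilon)$ rather than $O(\epsilon^{2})$. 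Once the identity $\tilde{H}|\vec{0}\>=(I-|\vec{0}\>\<\vec{0}|)H|\vec{0}\>$ is secured, the rest is direct Taylor bookkeeping.
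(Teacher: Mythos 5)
Your proof is correct, and it actually follows the same overall strategy as the paper's: Taylor-expand $U|\vec{0}\>$ and $\tilde{U}|\vec{0}\>$, establish that $\tilde{H}$ replicates the action of $H$ on $|\vec{0}\>$ in the off-diagonal sector, and pass from a vector-norm bound to the trace distance via $d_{\Tr}=\sqrt{1-F}$. The difference is that your version handles a subtlety that the paper's proof glosses over. The paper asserts that the two Taylor series ``agree in the first two terms $|0\>^{\otimes n}$ and $-i\epsilon H|0\>^{\otimes n}$'' and therefore that $\||\psi\>-|\tilde\psi\>\|=(\epsilon^2/2)\|(H^2-\tilde H^2)|0\>^{\otimes n}\|+O(\epsilon^3\|H\|_\infty^3)$. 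But $\tilde{H}|\vec{0}\>$ lies in the orthogonal complement of $|\vec{0}\>$ while $H|\vec{0}\>$ has a component $h^{\vec{0}}|\vec{0}\>$, so the two first-order terms differ by $-i\epsilon h^{\vec{0}}|\vec{0}\>$, and the vector difference is genuinely $\Theta(\epsilon)$ when $h^{\vec{0}}\neq 0$. The paper's stated chain of inequalities therefore does not go through as written, and in the application to Theorem~\ref{theo:qlan} nothing guarantees $\<\vec{0}|\sum_j H_j|\vec{0}\>=0$. Your fix -- absorbing the mismatch into the global phase $e^{-i\epsilon h^{\vec{0}}}$ and then noting that $|\<\psi|\tilde\psi\>|$ is phase-invariant -- is exactly what is needed; the trace distance of the rank-one projectors is insensitive to this phase, so the bound $d_{\Tr}=O(\epsilon^2\|H\|_\infty^2)$ survives. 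Your observations that $\tilde{H}$ is a rank-two Hermitian with $\|\tilde{H}\|_\infty=\|\phi\|\le\|H\|_\infty$ and that $1-|\<\psi|\tilde\psi\>|^2\le\||a\>-|b\>\|^2$ for unit vectors are both correct. You are also right to flag the sign convention: with $\sigma_y^{\vec{k}}$ as written in the statement, the combination $\Re(h^{\vec{k}})\sigma_x^{\vec{k}}-\Im(h^{\vec{k}})\sigma_y^{\vec{k}}$ produces $\overline{h^{\vec{k}}}|\vec{k}\>\<\vec{0}|$ rather than $h^{\vec{k}}|\vec{k}\>\<\vec{0}|$; the intended combination should carry $+\Im(h^{\vec{k}})\sigma_y^{\vec{k}}$ (or equivalently the opposite sign in the definition of $\sigma_y^{\vec{k}}$), and without that correction the first-order terms would not cancel at all.
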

\begin{proof}
Denote by $\tilde{H}$ the generator of $\tilde{U}$. Since $\tilde{H}\le H$, we have $\|\tilde{H}\|_{\infty}\le\|H\|_{\infty}$. Now by comparing the Talyor series of $|\psi\>$ and $|\tilde{\psi}\>$ in terms of $\epsilon$, we can see that both series agree in the first two terms $|0\>^{\otimes n}$ and $-i\epsilon H|0\>^{\otimes n}$.
By using basic properties of the operator norm, we get
    \begin{align}
        \||\psi\>-|\tilde{\psi}\>\|&=\|U|0\>^{\otimes n}-\tilde{U}|0\>^{\otimes n}\|\\
        &=(\epsilon^2/2)\|(H^2-\tilde{H}^2)|0\>^{\otimes n}\|+O(\epsilon^3 \|H\|_{\infty}^3)\\
        &\le (\epsilon^2/2)\|H^2-\tilde{H}^2\|_\infty+O(\epsilon^3 \|H\|_{\infty}^3)\\
        &\le \epsilon^2\|H\|_{\infty}^2+O(\epsilon^3 \|H\|_{\infty}^3).
    \end{align}
    Combining the above inequality with $\||\psi\>-|\tilde{\psi}\>\|\ge\sqrt{2-2|\<\psi|\tilde{\psi}\>|}$ as well as the relation between the trace distance and the fidelity for pure states (i.e., $F=1-d_{\Tr}^2$) concludes the proof.
\end{proof}

Finally, directly applying Lemmas \ref{lemma:linearization} and \ref{lemma:parameterreduction} yields that any $|\psi(\vec{H})\>$ is $O(\eta^2 G^2/N)$-close (in trace distance) to 
\begin{align}
    |\psi'(\vec{H})\>:=\exp\left(-i\frac{\eta}{\sqrt{N}}\sum_{\vec{k}} (\Re(h^{\vec{k}})\sigma_x^{\vec{k}}-\Im(h^{\vec{k}})\sigma_y^{\vec{k}})\right)|0\>^{\otimes n}
\end{align}
where $h^{\vec{k}}:=\<\vec{k}|\sum_{j=1}^G H_j|\vec{0}\>$.
By Lemma \ref{lemma:Ncopyfidelity}, as long as $\eta^2 G^2\ll\sqrt{N}$, the trace distance between $|\psi(\vec{H})\>^{\otimes N}\in\set{S}_{N,\eta}(Q)$ and $|\psi'(\vec{H})\>^{\otimes N}$ vanishes for large $N$. In this case, we can focus on constructing a Q-LAN for $|\psi'(\vec{H})\>$.

Crucially, we argue that, while there are $2^n$ binary strings $\vec{k}$, at most $\poly(n)$ elements in $\{h^{\vec{k}}\}$ are nonzero. Therefore, the corresponding coherent state has only $\poly(n)$ non-trivial modes.
Indeed, since $H_j$ acts non-trivially on $|q_j|\le \tilde{d}$ qubits, $\<\vec{k}|H_j|\vec{0}\>=0$ unless $\vec{k}\to\vec{0}_{\overline{q_j}}$ when restricting to the complement $\overline{q_j}$ of $q_j$, and there are at most $2^{\tilde{d}}$ such $\vec{k}$. That is, 
\begin{align}\label{Bj}
    \set{B}_j:=\left\{\vec{k}\in2^{[n]}~:~k_i=0\,\forall i\not\in q_j\right\}
\end{align}
and we have $|\set{B}_j|\le 2^{|q_j|}\le 2^{\tilde{d}}$. Denote by $\set{B}(Q)=\cup_{j=1}^G \set{B}_j$.
By the union bound, the number of nonzero $h^{\vec{k}}$, upper bounded by $|\set{B}(Q)|$, cannot exceed $G2^{\tilde{d}}$. Note that the set $\set{B}(Q)$ is fixed for each $\set{S}_{N,\eta}(Q)$, since they depend only on $\{q_j\}$.
With these notations, we can also bound the amplitude of $|h^{\vec{k}}|$ universally as:
\begin{align}\label{boundhveck}
    |h^{\vec{k}}|\le\max_{i\in[n]}\sum_{j:i\in q_j}\|H_j\|_{\infty}\le n_{\rm overlap},
\end{align}
where $n_{\rm overlap}$ denotes the maximal number of overlapping local Hamiltonians at any qubit.

Noticing that $\sigma_i^{\vec{k}}\sigma_j^{\vec{k}'}|0\>^{\otimes n}=0$ unless $\vec{k}=\vec{k}'$,
each locally rotated state can be expanded as 
\begin{align}\label{local_expand_psi}
    |\psi_{\vec{u}/\sqrt{N}}\>&=|0\>^{\otimes n}-i\sum_{\vec{k}\in\set{B}(Q)}\frac{u_1^{\vec{k}}\sigma_y^{\vec{k}}-u_2^{\vec{k}}\sigma_x^{\vec{k}}}{\sqrt{N}}|0\>^{\otimes n}+O\left(|\set{B}(Q)|\left(\frac{\eta}{\sqrt{N}}\right)^2\right)\\
    &=\sqrt{1+\sum_{\vec{k}\in\set{B}(Q)}\frac{|u^{\vec{k}}|^2}{N}}|\tilde{\psi}_{\vec{u}/\sqrt{N}}\>+O\left(\frac{\eta^2|\set{B}(Q)|}{N}\right)\qquad |\tilde{\psi}_{\vec{u}/\sqrt{N}}\>:=\frac{|0\>^{\otimes n}+\sum_{\vec{k}\in\set{B}(Q)}\frac{u^{\vec{k}}}{\sqrt{N}}|\vec{k}\>}{\sqrt{1+\sum_{\vec{k}\in\set{B}(Q)}|u^{\vec{k}}|^2/N}}
\end{align}
where $u^{\vec{k}}=u_1^{\vec{k}}+iu_2^{\vec{k}}$. Since by assumption $\eta^2|\set{B}(Q)|\le 2^{\tilde{d}}G\eta^2\ll\sqrt{N}$, we have $d_{\Tr}(\psi_{\vec{u}/\sqrt{N}},\tilde{\psi}_{\vec{u}/\sqrt{N}})\ll 1/\sqrt{N}$. By Lemma \ref{lemma:Ncopyfidelity}, we can effectively consider the $N$-copy truncated state $|\tilde{\psi}_{\vec{u}/\sqrt{N}}\>^{\otimes N}$.

So far, we've reduced the task of showing the Q-LAN to showing an isometry $V_N$ exists such that $\<\vec{u}|_{\rm coh}V_N|\tilde{\psi}_{\vec{u}/\sqrt{N}}\>^{\otimes N}$ converges to one in the large $N$ limit. 
We consider a general case where $|\tilde{\psi}_{\vec{u}/\sqrt{N}}\>$ is supported on a $(K+1)$-dimensional basis $\{|\phi_m\>\}_{m=0}^K$ (i.e., where $|\psi_{\vec{u}/\sqrt{N}}\>$ has $2K$ generators). We require $\eta K\ll N^\frac{1}{4}$, where $\eta:=\max_k|u^k|$. It is immediate to check that this is indeed satisfied by our setting (since $K\le|\set{B}(Q)|=O(G)$).

We show an effective Q-LAN exists by explicitly constructing the Q-LAN transformation $\map{V}_N$ and the inverse Q-LAN transformation $\map{V}_N^\ast$ and then showing their effectiveness.
Note that $|\tilde{\psi}_{\vec{u}/\sqrt{N}}\>$ lives in the symmetric subspace $\spc{S}_{K+1}^{(N)}\subset\spc{H}_{K+1}^{\otimes N}$. Now, we define $V_N$ whose action on $|\tilde{\psi}_{\vec{u}/\sqrt{N}}\>^{\otimes N}$ is 
\begin{align}
    V_N:\spc{S}_{K+1}^{(N)}&\to\spc{H}\\
    |\vec{m};N\>&\to|\vec{m}\>.\label{VN}
\end{align}
The action of $V_N$ on the complement subspace can be defined arbitrarily.
Here $\{|\vec{m}\>\}$ is the Fock basis for $K$ modes, and the symmetric state
\begin{align}
    |\vec{m};N\>=\sqrt{\frac{m_1!\cdots m_K!}{[N]_{|\vec{m}|}}}\sum_{\pi\in\grp{P}(N)}U_{\pi}|\phi_1\>^{\otimes m_1}|\phi_2\>^{\otimes m_2}\cdots|\phi_K\>^{\otimes m_K}|\phi_0\>^{\otimes (N-|\vec{m}|)},
\end{align}
where $U_\pi$ is the action of an $N$-permutation $\pi$ in the symmetric group $\grp{P}(N)$, $|\vec{m}|$ denotes the sum of all entries of $\vec{m}$ and $[N]_k:=N\cdot(N-1)\cdots(N-k+1)$.
Note that an isometry is reversible only in its own range. We define:
\begin{align}\label{qlanchannel}
    \map{V}_N(\cdot)&:=V_N(\cdot)V_N^\dag\\
    \map{V}^\ast_N(\cdot)&:=V_N^\dag(\cdot)V_N+\Tr\left((I_{\spc{H}}-V_N^\dag V_N)(\cdot)\right)\psi_0\label{invqlanchannel}
\end{align}
where $\psi_0$ is an arbitrary fixed state.

Our goal is to show
\begin{align}
    \<\vec{u}|_{\rm coh}V_N|\tilde{\psi}_{\vec{u}/\sqrt{N}}\>^{\otimes N}&=\sum_{\vec{m}:|\vec{m}|\le N}\<\vec{u}|_{\rm coh}|\vec{m}\>\cdot\<\vec{m};N|\tilde{\psi}_{\vec{u}/\sqrt{N}}\>^{\otimes N},
\end{align}
converges to one for large enough $N$.
Note that
\begin{align}
    \<\vec{m};N|\tilde{\psi}_{\vec{u}/\sqrt{N}}\>^{\otimes N}&=\left(1+\sum_{j=1}^K\frac{|u^{j}|^2}{N}\right)^{-\frac{N}{2}}\sqrt{\frac{[N]_{|\vec{m}|}}{m_1!\cdots m_K!}}\prod_{k=1}^{K}\left(\frac{u^k}{\sqrt{N}}\right)^{m_k}.
\end{align}
 We therefore have
\begin{align}
    \<\vec{u}|_{\rm coh}V_N|\tilde{\psi}_{\vec{u}/\sqrt{N}}\>^{\otimes N}&=\sum_{\vec{m}:|\vec{m}|\le N}\left(\prod_{k=1}^K\left(e^{-\frac{|u^{k}|^2}{2}}\frac{(\overline{u^k})^{ m_{k}}}{\sqrt{m_{k}!}}\right)\right)\left(1+\sum_{j=1}^K\frac{|u^{j}|^2}{N}\right)^{-\frac{N}{2}}\sqrt{\frac{[N]_{|\vec{m}|}}{m_1!\cdots m_K!}}\prod_{k=1}^{K}\left(\frac{u^k}{\sqrt{N}}\right)^{m_k}\\
    &=e^{-\frac{\sum_j |u^{j}|^2}{2}}\left(1+\sum_{j=1}^K\frac{|u^{j}|^2}{N}\right)^{-\frac{N}{2}}\sum_{\vec{m}:|\vec{m}|\le N}\sqrt{\frac{[N]_{|\vec{m}|}}{N^{|\vec{m}|}}}\left(\prod_{k=1}^K \frac{|u^k|^{2m_{k}}}{m_{k}!}\right).
\end{align}
Noticing that every term in the summation is non-negative, we can lower bound the summation by restricting the range of $\vec{m}$ to  $\set{M}_{\rm res}:=\left\{\vec{m}~:~m_k\le N^{\frac14}(\eta/K)\ \forall k\right\}$. This allow us to split the summation and further simplify its expression:
\begin{align}
    \<\vec{u}|_{\rm coh}V_N|\tilde{\psi}_{\vec{u}/\sqrt{N}}\>^{\otimes N}&\ge e^{-\frac{\sum_j |u^{j}|^2}{2}}\left(1+\sum_{j=1}^K\frac{|u^{j}|^2}{N}\right)^{-\frac{N}{2}}\sum_{\vec{m}\in\set{M}_{\rm res}}\sqrt{\frac{[N]_{|\vec{m}|}}{N^{|\vec{m}|}}}\left(\prod_{k=1}^K \frac{|u^k|^{2m_{k}}}{m_{k}!}\right)\\
    &\ge e^{-\frac{\sum_j |u^{j}|^2}{2}}\left(1+\sum_{j=1}^K\frac{|u^{j}|^2}{N}\right)^{-\frac{N}{2}}\sum_{k=1}^K\sum_{m_k\le N^{\frac14}(\eta/K)}\left(1-\frac{|\vec{m}|}{N}\right)^{-\frac{N}{|\vec{m}|}\cdot\left(-\frac{|\vec{m}|^2}{2N}\right)}\frac{|u^k|^{2m_{k}}}{m_{k}!}.
\end{align}
Since $|\vec{m}|\le N^{\frac14}\eta\ll\sqrt{N}$ by assumption, $(1-|\vec{m}|/N)^{-N/|\vec{m}|}$ converges to $e$. In addition, since $|\vec{m}|^2/(2N)\ll 1$, when $N\gg 1$ the right hand side of the last inequality becomes 
\begin{align}
 \underbrace{e^{-\frac{\sum_j |u^{j}|^2}{2}}\left(1+\sum_{j=1}^K\frac{|u^{j}|^2}{N}\right)^{-\frac{N}{\sum_j |u^{j}|^2}\cdot\left(-\frac{\sum_j |u^{j}|^2}{2}\right)}}_{=:S_1}\underbrace{\sum_{k=1}^K\sum_{m_k\le N^{\frac14}(\eta/K)}\frac{|u^k|^{2m_{k}}}{m_{k}!}}_{=:S_2}.
\end{align}
As $\sum_j|u^j|^2\le K\cdot\eta^2\ll N$, the leading order term of $S_1$  is $e^{-\sum_j |u^j|^2}$.
Noticing that $N^{\frac14}(\eta/K)\gg \eta^2\ge |u^k|^2$ by assumption,
when $N$ is large, $\sum_{m_k\le N^{\frac14}(\eta/K)}\frac{|u^k|^{2m_{k}}}{m_{k}!}$ is close to $e^{|u^k|^2}$, and the leading order term of $S_3$ is thus $e^{\sum_k|u^k|^2}$. In conclusion, $S_1\cdot S_2$ converges to one in the large $N$ limit, and so does $\<\vec{u}|_{\rm coh}V_N|\tilde{\psi}_{\vec{u}/\sqrt{N}}\>^{\otimes N}$.

\qed

\section{A compression protocol for shallow-circuit states.}\label{sec:compression}

\subsection{Preliminaries: truncation and amplification of coherent states.}
As shown by the Q-LAN (Appendix \ref{sec:QLAN}), the $N$-copy states we are interested in are asymptotically equivalent to coherent states. Therefore, we introduce some preliminary protocols on processing coherent states, which will be used as building blocks of the compression protocol for shallow-circuit states.

\begin{lemma}[Amplification of coherent states \cite{caves1982quantum}]\label{lemma:amplification}
    There exists an $\alpha$-independent quantum channel that amplifies a coherent state $|\sqrt{1-\epsilon}\alpha\>$ to $|\alpha\>$ (i.e., with intensity gain $\frac{1}{1-\epsilon}$) with (trace distance) error $O(\epsilon)$. 
\end{lemma}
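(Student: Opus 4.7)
The plan is to invoke the standard phase-insensitive linear amplifier of Caves, realized as the Stinespring dilation in which the input mode $a$ interacts with a vacuum ancilla mode $b$ through a two-mode squeezing unitary $S_{ab}(r)=\exp(r(a^\dag b^\dag-ab))$, followed by tracing out $b$. Choosing the squeezing parameter so that $\cosh^2 r=G=1/(1-\epsilon)$ (and hence $\sinh^2 r=G-1=\epsilon/(1-\epsilon)$) fixes the channel once and for all, independently of $\alpha$, which is exactly the $\alpha$-independence the lemma demands.

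The key calculation is to verify the output state. In the Heisenberg picture, the squeezer acts as $a\mapsto \cosh(r)\,a+\sinh(r)\,b^\dag$. Feeding in $|\sqrt{1-\epsilon}\alpha\rangle_a\otimes|0\rangle_b$ and tracing out $b$, the Gaussian-channel/characteristic-function formalism yields
\begin{align*}
\mathcal{N}\!\left(|\sqrt{1-\epsilon}\alpha\rangle\langle\sqrt{1-\epsilon}\alpha|\right)
\;=\;D(\alpha)\,\rho_{\rm th}(\bar n)\,D(\alpha)^\dag,
\qquad \bar n = G-1 = \frac{\epsilon}{1-\epsilon},
\end{align*}
since the linear gain carries the mean amplitude $\sqrt{1-\epsilon}\alpha$ to $\sqrt{G}\sqrt{1-\epsilon}\alpha=\alpha$, while the vacuum ancilla injects exactly $G-1$ units of added noise. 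Thus the output differs from $|\alpha\rangle\langle\alpha|$ only by a displaced thermal broadening of size $O(\epsilon)$.

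For the error bound, I would use the unitary invariance of the trace distance under $D(\alpha)$ to reduce to
\begin{align*}
d_{\Tr}\!\left(\mathcal{N}(|\sqrt{1-\epsilon}\alpha\rangle\langle\sqrt{1-\epsilon}\alpha|),\,|\alpha\rangle\langle\alpha|\right)
\;=\; d_{\Tr}\!\left(\rho_{\rm th}(\bar n),\,|0\rangle\langle 0|\right).
\end{align*}
Evaluating the right-hand side in the Fock basis using $\rho_{\rm th}(\bar n)=\sum_{k\ge0}\frac{\bar n^k}{(1+\bar n)^{k+1}}|k\rangle\langle k|$ gives $d_{\Tr}=\bar n/(1+\bar n)=\epsilon$, matching the claim up to the indicated $O(\epsilon)$ scaling.

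I do not expect a real conceptual obstacle here: the only step that requires a bit of care is the explicit verification of the displaced-thermal form of the output, which can be done cleanly either via the symplectic action of $S_{ab}(r)$ on the covariance matrix and first moments, or by a direct characteristic-function computation using $\chi_{|\beta\rangle}(\xi)=e^{-|\xi|^2/2}e^{\beta\bar\xi-\bar\beta\xi}$ and the Gaussian composition rule. Once that form is in hand, the trace-distance step is elementary, and the $\alpha$-independence of the channel is manifest from the construction.
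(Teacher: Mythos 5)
Your proof is correct and follows exactly the route the paper has in mind: the paper does not spell out a proof of this lemma in the Appendix but simply cites Caves (1982), and in the complexity analysis (Appendix on time/gate complexity) it confirms that the amplifier is the two-mode squeezer with a vacuum ancilla, which is precisely your construction. Your explicit verification that the output is $D(\alpha)\rho_{\rm th}(\bar n)D(\alpha)^\dag$ with $\bar n=\epsilon/(1-\epsilon)$, together with the exact trace-distance evaluation $d_{\Tr}(\rho_{\rm th}(\bar n),|0\rangle\langle 0|)=\bar n/(1+\bar n)=\epsilon$, cleanly fills in the computation that the paper leaves to the citation, and the $\alpha$-independence is manifest since $r$ depends only on $\epsilon$.
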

\begin{lemma}[Compression of coherent states]\label{lemma:truncation}
For every $\delta>0$, there exists an $\alpha$-independent compression protocol that compresses any coherent state $|\alpha\>_{\rm coh}$ with $|\alpha|\le \alpha_{0}$ to an $(e^2\alpha_0^{2})$-dimensional quantum memory with recovery error $O(e^{-\alpha_0^2})$.
\end{lemma}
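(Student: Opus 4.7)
The plan is to realize the compression as Fock-basis truncation, with the recovery error controlled by the Poisson tail of the photon-number distribution. Let $M:=\lceil e^2\alpha_0^2\rceil$ and $P_M:=\sum_{m=0}^{M-1}|m\>\<m|$ be the projector onto the first $M$ Fock states. The $\alpha$-independent encoder I would use is
\[
\map{E}(\rho):=P_M\rho P_M+\Tr\bigl[(I-P_M)\rho\bigr]\,|0\>\<0|,
\]
which maps any state on the Fock space into the $M$-dimensional subspace $\spc{H}_M$, and the decoder $\map{D}$ is simply the inclusion $\spc{H}_M\hookrightarrow\spc{H}_{\rm coh}$. Both maps depend only on the design parameter $\alpha_0$, not on $\alpha$ itself, and $\dim\spc{H}_M\le e^2\alpha_0^2+1$ as required.

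To estimate the recovery error I would exploit the fact that the photon-number distribution of $|\alpha\>_{\rm coh}$ is Poisson with rate $\lambda:=|\alpha|^2\le\alpha_0^2$, so that the only probability leaking out of the truncated subspace is
\[
\<\alpha|(I-P_M)|\alpha\>_{\rm coh}=\Pr(N\ge M)=\sum_{m\ge M}e^{-\lambda}\frac{\lambda^m}{m!}.
\]
Applying the elementary Stirling estimate $m!\ge(m/e)^m$ and summing the resulting geometric series whose common ratio $e\lambda/M$ is at most $1/e$, one obtains $\Pr(N\ge M)=O(e^{-e^2\alpha_0^2})$. Combining this with the pure-state identity $d_{\Tr}^2=1-F$ of Eq.~(\ref{fuchsvandegraaff}) and the bound $F(\map{D}\map{E}(|\alpha\>\<\alpha|_{\rm coh}),|\alpha\>\<\alpha|_{\rm coh})\ge\<\alpha|P_M|\alpha\>_{\rm coh}^2\ge 1-2\Pr(N\ge M)$, I expect to arrive at
\[
d_{\Tr}\bigl(\map{D}\map{E}(|\alpha\>\<\alpha|_{\rm coh}),\,|\alpha\>\<\alpha|_{\rm coh}\bigr)\le\sqrt{2\Pr(N\ge M)}=O(e^{-\alpha_0^2}),
\]
as claimed.

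The only nontrivial step in this argument is the Poisson tail bound, and the critical design choice is the factor $e^2$ between the cutoff $M$ and the largest possible mean $\alpha_0^2$: this is precisely what forces the ratio $e\lambda/M$ to stay bounded away from $1$ uniformly in $|\alpha|\le\alpha_0$, so that Stirling's estimate produces an exponentially small prefactor rather than merely a polynomial one. All remaining pieces---CPTP verification of $\map{E}$, the concatenation $\map{D}\circ\map{E}$, and the passage from fidelity to trace distance---are routine.
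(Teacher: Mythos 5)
Your proposal is correct and matches the paper's proof in all essentials: the same $\alpha_0$-dependent Fock truncation channel, the same passage from fidelity to trace distance via the Fuchs--van de Graaff relation, and the same Poisson-tail estimate with cutoff $e^2\alpha_0^2$. The only cosmetic difference is that you bound the tail by Stirling's inequality and a dominating geometric series, while the paper invokes the Chernoff bound $\mathrm{Poistail}(|\alpha|^2,m_0)\le(e|\alpha|^2/m_0)^{m_0}$ directly---these are two phrasings of the same estimate.
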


\begin{proof}
    Consider the photon-number truncation channel $\map{T}_{m_0}(\cdot):=P_{m_0}(\cdot)P_{m_0}+\rho_0\Tr(I-P_{m_0})(\cdot)$, where $m_0\in\N^\ast$, $P_{m_0}$ is the projector on $\Span(\{|m\>\}_{m=0}^{m_0-1})$ and $\rho_0$ is a fixed state on $\Span(\{|m\>\}_{m=0}^{m_0-1})$. Applying $\map{T}_{m_0}$ to a coherent state $|\alpha\>_{\rm coh}$ compresses it into a memory of $m_0$ qubits, and the error of compression (with respect to the trivial decoding) is
    \begin{align}
        d_{\Tr}\left(|\alpha\>\<\alpha|_{\rm coh},\map{T}_{m_0}(|\alpha\>\<\alpha|_{\rm coh})\right)&\le \sqrt{1-\<\alpha|_{\rm coh}\map{T}_{m_0}(\alpha_{\rm coh})|\alpha\>_{\rm coh}}\\
        &\le\sqrt{1-(\<\alpha|_{\rm coh}P_{m_0}|\alpha\>_{\rm coh})^2}\\
        &\le\sqrt{2{\rm Poistail}(|\alpha|^2,m_0)},
    \end{align}
    having used Eq.~(\ref{fuchsvandegraaff}), the Fuchs-van de Graaff inequalities. Here ${\rm Poistail}(\lambda,k):=\sum_{m\ge k}e^{-\lambda}\lambda^m/m!$ denotes the tail of a Poisson distribution. By the Chernoff bound, as long as $m_0>|\alpha|^2$, the tail is bounded as
    \begin{align}
        {\rm Poistail}(|\alpha|^2,m_0)\le e^{-|\alpha|^2}\left(\frac{e|\alpha|^2}{m_0}\right)^{m_0}\le\left(\frac{e|\alpha|^2}{m_0}\right)^{m_0}.
    \end{align}
    Taking $m_0 = (e\alpha_0)^2$ and combining the above inequalities conclude the proof.
\end{proof}

\subsection{Efficient tomography of shallow-circuit states.}\label{subsec:tomography}
To prepare for the compression, we introduce some recent results on tomography of shallow-circuit states, which is achieved by constructing a dense-enough covering of shallow-circuit states and then performing a hypothesis testing to find a member of the net close enough to $|\psi_{\sc}\>$.

An $\epsilon$-covering of a set $\set{S}$ of unitary gates is a set $\hat{\set{S}}$ such that for every $U\in\set{S}$ there exists $\hat{U}\in\hat{\set{S}}$ satisfying $d_{\diamond}(U,\hat{U})\le\epsilon$.
The following lemma (see \cite[Theorem 8]{zhao2023learning} and its proof) gives a small-cardinality covering of low complexity unitaries:
\begin{lemma}[Covering complexity-$G$ unitaries]\label{lemma:covering}
Let $\set{U}^G$ be the set of $n$-qubit unitaries that can be implemented by $G$ two-qubit gates. For any $\epsilon\in(0,1]$, there exists an $\epsilon$-covering $\set{U}^G(\epsilon)$ of $\set{U}^G$ whose cardinality is bounded as
\begin{align}
    \log_2\left|\set{U}^G(\epsilon)\right|\le 32G\log_2\left(\frac{12G}{\epsilon}\right)+2G\log_2 n.
\end{align}
Moreover, the covering is tomographic, meaning that each $\hat{U}$ in the covering consists of a sequence of $G$ gates $(\hat{U}_1,\dots,\hat{U}_G)$.
\end{lemma}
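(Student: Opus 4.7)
The plan is to build $\set{U}^G(\epsilon)$ as a Cartesian product of per-gate approximants. Any $U\in\set{U}^G$ admits a factorization $U=U_G U_{G-1}\cdots U_1$, where each $U_j$ is a two-qubit gate acting on some pair $(a_j,b_j)\in\binom{[n]}{2}$. For each slot $j$ I would independently record (i) the qubit pair $(a_j,b_j)$, and (ii) a discrete approximant $\hat U_j$ drawn from a fixed net $\set{N}(\delta)$ on $\grp{U}(4)$ at precision $\delta:=\epsilon/G$. A covering element is then the ordered sequence $(\hat U_1,\dots,\hat U_G)$ together with the placement data, so the construction is tomographic by design.

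Counting bits: for step (i) there are $\binom{n}{2}\le n^2/2$ placements per gate, contributing at most $2G\log_2 n$ bits across all $G$ gates. For step (ii), I would invoke the standard volume-comparison bound. The group $\grp{U}(4)$ embeds in the $32$-real-dimensional space of $4\times 4$ complex matrices equipped with the operator norm and is contained in its unit ball; a greedy packing (equivalently, a volume-ratio) argument then yields $|\set{N}(\delta)|\le(c/\delta)^{32}$ for a universal constant $c$, and careful bookkeeping on the packing radius of $\grp{U}(4)$ allows one to take $c=12$. By the upper bound in Eq.~(\ref{bounddiamond}), operator-norm closeness of two unitaries directly upper-bounds the diamond distance of the induced channels, so $\set{N}(\delta)$ remains a valid $\delta$-net once mapped to two-qubit channels, with no further constant loss. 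Summed over the $G$ slots, step (ii) contributes $32G\log_2(12G/\epsilon)$ bits, and adding the two budgets reproduces the claimed inequality.

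The approximation guarantee then follows from sub-additivity of the diamond distance under channel composition, $d_{\diamond}(\map{A}_2\circ\map{A}_1,\map{B}_2\circ\map{B}_1)\le d_{\diamond}(\map{A}_1,\map{B}_1)+d_{\diamond}(\map{A}_2,\map{B}_2)$, together with the fact that tensoring with the identity on the remaining $n-2$ qubits preserves the diamond distance. Chaining these inequalities over the $G$ layers yields $d_{\diamond}(\map{U},\hat{\map{U}})\le\sum_{j=1}^G d_{\diamond}(\map{U}_j,\hat{\map{U}}_j)\le G\cdot(\epsilon/G)=\epsilon$, as required.

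There is no conceptual obstacle here beyond bookkeeping: pinning down the explicit constant $12$ inside the logarithm requires an explicit volume-ratio estimate for the operator-norm ball of $4\times 4$ complex matrices restricted to $\grp{U}(4)$, after which the argument reduces to metric-entropy estimates for compact matrix groups combined with the triangle inequality for the diamond norm.
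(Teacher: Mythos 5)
Your proposal is correct and follows essentially the standard route; note that the paper itself does not reprove this lemma but cites it from \cite[Theorem~8]{zhao2023learning}, whose proof is the same per-gate construction you describe — record the qubit pair of each of the $G$ two-qubit gates (cost $\log_2\binom{n}{2}\le 2\log_2 n$ bits per gate), cover $\grp{U}(4)$ at scale $\delta=\epsilon/G$ by an ambient-space volume argument in $\R^{32}$ (giving the exponent $32$ rather than the manifold dimension $16$), pass from operator norm to diamond norm via the upper bound in Eq.~(\ref{bounddiamond}), and chain the per-gate errors with sub-additivity of $d_\diamond$ under composition and its stability under tensoring with the identity. The only part you leave implicit is pinning down the constant $12$: a direct ambient-ball covering gives roughly $(3/\delta)^{32}$ net points, and one loses a factor because those centers must then be snapped back onto $\grp{U}(4)$ (doubling the effective radius), plus whatever bookkeeping the original proof spends to state a clean constant; since the lemma only asserts an upper bound, this slack is harmless and your argument closes.
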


Given a covering of unitaries (and, consequently, a covering of the generated states) and copies of an unknown state, a hypothesis testing protocol can be run to find a member of the covering that is close enough to the unknown state, as long as the covering is not too big. We will use the following result \cite{buadescu2021improved,zhao2023learning} on a hypothesis testing which uses classical shadow \cite{huang2020predicting} to improve its performance:
\begin{lemma}[Hypothesis selection by classical shadow]\label{lemma:hypothesis}
Let $0<\epsilon,\delta<1/2$. Given access to $\rho^{\otimes M}$ and classical descriptions of $m$ hypothesis states $\sigma_1,\dots,\sigma_m$, there exists a quantum algorithm that selects $\sigma_k$ such that $d_{\Tr}(\rho,\sigma_k)\le 3\eta+\epsilon$ with probability at least $1-\delta$, where $\eta=\min_id_{\Tr}(\rho,\sigma_i)$ and
\begin{align}
    \epsilon = O\left(\frac{\log_2(m/\delta)}{\sqrt{M}}\right).
\end{align}
\end{lemma}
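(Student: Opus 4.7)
The plan is to reduce hypothesis selection to a Scheff\'e-style tournament in which every pair of candidate states is compared through a single Helstrom-type observable, and then to use classical shadows of $\rho$ to estimate all of the relevant expectation values simultaneously from the $M$ copies.

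First, for each ordered pair $(i,j)$ I would define the Helstrom projector $\Pi_{ij}$ onto the positive eigenspace of $\sigma_i-\sigma_j$, so that $d_{\Tr}(\sigma_i,\sigma_j)=\Tr(\Pi_{ij}(\sigma_i-\sigma_j))$. By monotonicity of trace distance under projective tests, $|\Tr(\Pi_{ij}\rho)-\Tr(\Pi_{ij}\sigma_i)|\le d_{\Tr}(\rho,\sigma_i)$, so controlling the single scalar $\Tr(\Pi_{ij}\rho)$ is enough to decide which of $\sigma_i$ and $\sigma_j$ is closer to $\rho$ up to a small additive slack. I would then invoke the classical shadow protocol of Huang--Kueng--Preskill, sharpened by a median-of-means estimator, on the $M$ copies of $\rho$ to produce estimators $\hat{p}_{ij}\approx\Tr(\Pi_{ij}\rho)$ that are, by a union bound over the $\binom{m}{2}$ pairs, simultaneously accurate to $\varepsilon=O(\log(m/\delta)/\sqrt{M})$ with confidence at least $1-\delta$.

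Next comes the tournament: declare $\sigma_i$ the winner over $\sigma_j$ whenever $\hat{p}_{ij}\ge \frac{1}{2}\bigl(\Tr(\Pi_{ij}\sigma_i)+\Tr(\Pi_{ij}\sigma_j)\bigr)$, and output any candidate $\sigma_k$ that wins against every other hypothesis, breaking ties by picking the candidate with the fewest losses. A Devroye--Lugosi-type density-estimation argument, adapted to the quantum setting in \cite{buadescu2021improved,zhao2023learning}, shows that on the good event where all $\hat{p}_{ij}$ are $\varepsilon$-accurate, the closest candidate $\sigma_{i^\ast}$ wins every match and any actual tournament winner $\sigma_k$ must satisfy $d_{\Tr}(\rho,\sigma_k)\le 3\eta+O(\varepsilon)$. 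The factor $3$ is the standard loss incurred when the winner is compared to $\sigma_{i^\ast}$ and then to $\rho$ via the triangle inequality.

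The main obstacle is controlling the effective shadow norm of the projectors $\Pi_{ij}$, since the naive Huang--Kueng--Preskill bound for random Clifford shadows scales with $\sqrt{\Tr(\Pi_{ij}^2)}$ and can therefore blow up exponentially in $n$ for high-rank projectors. I would handle this by following the strategy of \cite{buadescu2021improved}, namely by replacing each Helstrom projector with a suitably thresholded observable of operator norm $O(1)$ that still separates a pair of hypotheses by at least (a constant fraction of) their trace distance, so that the median-of-means estimator's variance and hence its sample complexity depends only on the logarithm of the number of hypotheses. Combining this variance bound with the tournament analysis above then yields exactly the claimed relation $\varepsilon=O(\log(m/\delta)/\sqrt{M})$.
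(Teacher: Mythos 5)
The paper does not prove Lemma~\ref{lemma:hypothesis} itself; it is imported verbatim from B\u adescu--O'Donnell and Zhao et al.\ (\cite{buadescu2021improved,zhao_learning:=zhao2023learning}), so there is no in-paper argument to compare against. What follows is an assessment of your sketch on its own merits.

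The tournament skeleton is right and matches the cited proofs: pairwise Helstrom tests $\Pi_{ij}$ with $d_{\Tr}(\sigma_i,\sigma_j)=\Tr\bigl(\Pi_{ij}(\sigma_i-\sigma_j)\bigr)$ and the data-processing bound $|\Tr(\Pi_{ij}\rho)-\Tr(\Pi_{ij}\sigma_i)|\le d_{\Tr}(\rho,\sigma_i)$; a union bound over $\binom{m}{2}$ comparisons; declaring a tournament winner; and recovering the factor $3$ from the triangle inequality through the near-optimal hypothesis. Up to there the plan is sound.

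The gap is in the final paragraph, where you discuss controlling the estimator's variance. You correctly identify that Huang--Kueng--Preskill random-Clifford shadows fail here because the relevant figure of merit is the shadow norm, which scales with $\Tr(\Pi_{ij}^2)=\rank(\Pi_{ij})$ and can be exponential in $n$ for mixed hypotheses. But your proposed remedy, ``replacing each Helstrom projector with a suitably thresholded observable of operator norm $O(1)$,'' does not address this: the Helstrom projector already satisfies $\|\Pi_{ij}\|_\infty\le 1$. Bounding the operator norm does nothing for a Clifford-shadow variance bound that depends on the Hilbert--Schmidt norm. The actual resolution in \cite{buadescu2021improved} is to leave the HKP shadow estimator entirely: they estimate the $O(m^2)$ bounded observables $\{\Pi_{ij}\}$ by quantum threshold search built from gentle measurements (in the Aaronson shadow-tomography sense), whose sample complexity $O(\log^2(m/\delta)/\epsilon^2)$ depends only on $\|\Pi_{ij}\|_\infty\le1$ and $\log m$, not on the rank. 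That yields exactly $\epsilon=O(\log(m/\delta)/\sqrt{M})$. As an aside relevant to the use of this lemma in the paper: because the hypotheses there are pure states, $\sigma_i-\sigma_j$ has rank at most two, $\Pi_{ij}$ has rank at most one, $\Tr(\Pi_{ij}^2)\le1$, and HKP Clifford shadows \emph{would} in fact work directly with bounded shadow norm (and even give the slightly stronger $\epsilon=O(\sqrt{\log(m/\delta)/M})$); but that observation cannot be used to prove the lemma as stated, which allows arbitrary mixed hypotheses.
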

 
We substitute our setting (i.e., to distinguish $m=|\set{U}^G(\epsilon)|$ hypotheses) into the above results and obtain the efficient tomography needed for our task. 
Since an $n$-qubit, $d$-depth brickwork circuit of two-qubit gates consists of at most $dn/2$ gates, we pick $G=dn/2$ in Lemma \ref{lemma:covering}, combining with Lemma \ref{lemma:hypothesis} yields the following tomography algorithm to be used later:
\begin{lemma}[Efficient tomography of shallow-circuit states]\label{lemma:tomography}
    For an unknown $n$-qubit (brickwork) shallow-circuit state, there exists a quantum algorithm that finds a state $|\hat{\psi}_{\sc}\>$ in an $\epsilon$-covering of shallow-circuit states such that $d_{\Tr}(\psi_{\sc},\hat{\psi}_{\sc})\le 4\epsilon$ with probability $1-\delta$ while requiring 
    \begin{align}\label{tomographyerror}
        N_0=O\left(\frac{\left(nd\log_2(nd/\epsilon)+\log_2(1/\delta)\right)^2}{\epsilon^2}\right)
    \end{align}
    copies of $|\psi_{\sc}\>$.
\end{lemma}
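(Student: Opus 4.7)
The plan is to combine the two preceding lemmas (Lemma \ref{lemma:covering} and Lemma \ref{lemma:hypothesis}) essentially as a two-step pipeline: first build a small hypothesis class from the covering, then identify the right element via classical-shadow-based hypothesis selection.

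First I would observe that a depth-$d$ brickwork circuit on $n$ qubits consists of at most $G = dn/2$ two-qubit gates, so the generating unitary of any $|\psi_{\sc}\>$ lies in $\set{U}^G$ for this choice of $G$. Instantiating Lemma \ref{lemma:covering} with $G = dn/2$ yields an $\epsilon$-covering $\set{U}^G(\epsilon)$ of shallow-circuit unitaries whose logarithmic cardinality is
\begin{align*}
\log_2 \left|\set{U}^G(\epsilon)\right| \le 16nd\log_2(6nd/\epsilon) + nd\log_2 n = O\!\left(nd\log_2(nd/\epsilon)\right).
\end{align*}
Each $\hat{U} \in \set{U}^G(\epsilon)$ induces a candidate state $\hat{U}|0\>^{\otimes n}$, and since $d_{\Tr}(\map{U}(\psi),\map{V}(\psi)) \le d_{\diamond}(\map{U},\map{V})$ by definition of the diamond distance, the resulting set of candidate pure states forms an $\epsilon$-covering of $\set{S}_{\sc}$ in trace distance. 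In particular, for the unknown $|\psi_{\sc}\>$ there exists some element of this covering within trace distance $\epsilon$.

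Next I would invoke Lemma \ref{lemma:hypothesis}, taking $\rho = \psi_{\sc}$ and the $m = |\set{U}^G(\epsilon)|$ hypothesis states to be the elements of this induced state covering. Since the covering is $\epsilon$-dense, the minimum-distance parameter satisfies $\eta \le \epsilon$. Setting the lemma's internal tolerance also equal to $\epsilon$, the hypothesis-selection algorithm outputs some $|\hat{\psi}_{\sc}\>$ satisfying $d_{\Tr}(\psi_{\sc}, \hat{\psi}_{\sc}) \le 3\eta + \epsilon \le 4\epsilon$ with probability at least $1-\delta$, as claimed.

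Finally I would solve the sample-complexity relation $\epsilon = O(\log_2(m/\delta)/\sqrt{M})$ for $M$. Substituting $\log_2 m = O(nd\log_2(nd/\epsilon))$ and $\log_2(m/\delta) = \log_2 m + \log_2(1/\delta)$ gives
\begin{align*}
N_0 = M = O\!\left(\frac{(nd\log_2(nd/\epsilon) + \log_2(1/\delta))^2}{\epsilon^2}\right),
\end{align*}
matching the advertised bound. There is no real obstacle here: the argument is a direct composition of two black boxes, and the only care point is the conversion between the diamond-distance covering of unitaries (provided by Lemma \ref{lemma:covering}) and the trace-distance covering of the generated pure states (required by Lemma \ref{lemma:hypothesis}), which is immediate because the trace distance is monotone under applying a common input state.
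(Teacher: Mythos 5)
Your proposal is correct and follows essentially the same route as the paper's own (very terse) proof: instantiate the covering lemma with $G=dn/2$ to get the hypothesis class, feed it into the classical-shadow hypothesis-selection lemma with internal tolerance $\epsilon$, and read off the sample complexity. The one extra sentence you add—noting that the diamond-distance covering of unitaries induces a trace-distance covering of the generated states because $d_{\Tr}(\map{U}(\psi_0),\map{V}(\psi_0))\le d_{\diamond}(\map{U},\map{V})$—is a legitimate care point that the paper leaves implicit, and your arithmetic on the covering size checks out.
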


\subsection{A compression protocol for shallow-circuit states.}\label{subsec:protocol} 
      Finally, we are in place to present our compression protocol for shallow-circuit states:
\begin{algorithm}[H]
  \caption{Compression protocol for $N$-copy shallow-circuit states.}
  \label{protocol:compression}
   \begin{algorithmic}[1]
  % \setcounterref{ALC@line}{0}
   \Statex {\em Encoder:}
   \Statex {\bf Input:}  
   \Statex $N$ copies of a $n$-qubit state $|\psi_\sc\>$ generated by a brickwork circuit of depth-$d$; 
    \Statex a configuration of parameters $(\epsilon_0, N_0,\alpha_0)$.
   \Statex {\bf Require:} A classical memory of $M_{\rm c}$ bits and a quantum memory of $M_{\rm q}$ qubits.
   \State Construct a covering $\set{U}_{\sc}(\epsilon_0)$ for brickwork circuits of depth-$d$ (cf.~Lemma \ref{lemma:covering}).
   \State ({\em Tomography.}) Run state tomography with $N_0$ copies of $|\psi_\sc\>$ (cf.~Lemma \ref{lemma:tomography}), which outputs an estimate $|\hat{\psi}_{\sc}\>=\hat{U}|0\>^{\otimes n}$ with $\hat{U}\in\set{U}_{\sc}(\epsilon_0)$. Store $\hat{U}$ in the classical memory.
   \item ({\em Q-LAN.}) Apply first the inverse of $\hat{U}^{\otimes (N-N_0)}$ and then the Q-LAN transformation (cf.~Theorem \ref{theo:qlan}) on the remaining $N-N_0$ copies.
   \item ({\em Amplification.}) Amplify each mode of the resultant multi-mode coherent state: $|z\>\to |\sqrt{N/(N-N_0)}z\>$, i.e., with intensity gain $N/(N-N_0)$.
   \item ({\em Truncation.}) Compress the multi-mode coherent state via truncating each mode to less than $\alpha_0$ photons. Store the resultant state in the quantum memory. 
   \end{algorithmic}
   \vspace*{0.5em}
    \begin{algorithmic}[1]
  % \setcounterref{ALC@line}{0}
   \Statex {\em Decoder:} 
   \Statex {\bf Require from the encoder:} both the quantum memory (step 5) and the classical memory (step 2).
   \item Apply the inverse Q-LAN transformation on the state of the quantum memory.
   \item Retrieve $\hat{U}$ from the classical memory and apply $\hat{U}^{\otimes N}$. 
   \item {\bf Output} the final state.
   \end{algorithmic}
\end{algorithm} 

Our main result is that, when $\log_2 N/\log_2 n>32/3$, there exists a configuration $(\epsilon_0,N_0,\alpha_0)$ such that Protocol \ref{protocol:compression} is a faithful compression protocol (i.e., one whose error vanishes as $N\to\infty$):
\begin{theo}[Performance of the shallow-circuit state compression]\label{theo:compression}
Let $\set{U}_\sc$ be the collection of all $n$-qubit pure states generated by depth-$d$ brickwork circuits, with $d$ being a fixed parameter. When $N=\Theta(n^{32/3+\gamma})$ for some $\gamma>0$, for any  $\Delta\in\left(\frac{6}{32+3\gamma},\frac{3}{4}-\frac{18}{32+3\gamma}\right)$, there exist a compression protocol that faithfully compresses $N$ copies of any state from $\set{U}_\sc$ into a hybrid memory of $8(1-2\Delta/3)nd\left(\log_2 N+O\left(\log_2 n\right)\right)$ classical bits and $(n/d+4)2^{8d+1}/3\cdot\left(\Delta\log_2 N+O(\log_2 n)\right)$ qubits.
\end{theo}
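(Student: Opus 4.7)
My plan is to bound the total compression error by a sum over the five encoder steps of Protocol \ref{protocol:compression} and then tally the memory usage, using the configuration $N_0=\Theta(N^{1-\Delta/2})$, $\epsilon_0=\Theta(n\cdot N^{-(1-2\Delta/3)/2})$, $\alpha_0=\Theta(\sqrt{N}\epsilon_0)$ advertised in the body.

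First I invoke Lemma \ref{lemma:tomography} for step 2: with $N_0$ copies one finds $\hat U\in\set{U}_\sc(\epsilon_0)$ satisfying $d_{\Tr}(\psi_\sc,\hat\psi_\sc)\le 4\epsilon_0$ with probability approaching one, provided $N_0\gg(nd\log_2(nd/\epsilon_0))^2/\epsilon_0^2$; this is the constraint that gives the lower endpoint $\Delta>6/(32+3\gamma)$ once $N=\Theta(n^{32/3+\gamma})$ is plugged in. After applying $(\hat U^\dag)^{\otimes(N-N_0)}$ to the remaining copies, Theorem \ref{theo:localization} writes the residual state as a product of at most $n_{\rm gate}\le n/d+4$ local $(28\epsilon_0)$-rotations, each acting on up to $8d$ qubits, so the $N-N_0=\Theta(N)$ surviving copies fall into the family $\set{S}_{N-N_0,\eta}(Q)$ with $\eta=\Theta(\sqrt{N}\epsilon_0)$ and $G=n_{\rm gate}$. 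The Q-LAN of Theorem \ref{theo:qlan} then produces a coherent state on at most $G\cdot 2^{8d}$ modes with vanishing error, provided $\eta^2G^2\ll\sqrt N$; under the chosen parameters this becomes $n^4\ll N^{1/2-2\Delta/3}$, equivalent to the upper endpoint $\Delta<3/4-18/(32+3\gamma)$.

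Steps 4 and 5 are routine. Applying Lemma \ref{lemma:amplification} mode-by-mode with gain $N/(N-N_0)=1+\Theta(N^{-\Delta/2})$ contributes $\Theta(N^{-\Delta/2})$ error per mode, summing over $O(n)$ modes to $\Theta(nN^{-\Delta/2})$, which vanishes under the stated scaling. Lemma \ref{lemma:truncation} at threshold $\alpha_0^2=\Theta(n^2N^{2\Delta/3})$ contributes super-polynomially small error per mode. The decoder's inverse Q-LAN and the unitary $\hat U^{\otimes N}$ preserve trace distance, so a telescoping triangle inequality yields overall vanishing error. On the memory side, Lemma \ref{lemma:covering} with $G=dn/2$ produces $16dn\log_2(6dn/\epsilon_0)+dn\log_2 n=8(1-2\Delta/3)nd\log_2 N+O(nd\log_2 n)$ classical bits, matching the classical count of the theorem. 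The truncated coherent state occupies at most $(n/d+4)\cdot 2^{8d}$ modes with $\log_2(e^2\alpha_0^2)=(2\Delta/3)\log_2 N+O(\log_2 n)$ qubits each; pulling the factor $2/3$ out of the per-mode count reproduces the advertised $(n/d+4)\cdot 2^{8d+1}/3\cdot(\Delta\log_2 N+O(\log_2 n))$ qubits.

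The main obstacle is checking that the four asymptotic inequalities, namely the tomography bound $N_0\gg(nd\log_2(nd/\epsilon_0))^2/\epsilon_0^2$, the Q-LAN bound $\eta^2G^2\ll\sqrt N$, the amplification bound $nN_0/N\ll 1$, and the truncation condition $\alpha_0^2\gg\epsilon_0^2 N$, share a common solution inside the advertised window. The first two turn out to be tight and pin down the interval $\Delta\in(6/(32+3\gamma),3/4-18/(32+3\gamma))$, while the latter two are strictly weaker under the chosen scaling. The interval is nonempty precisely when $\gamma>0$, which matches and justifies the hypothesis $N=\Theta(n^{32/3+\gamma})$.
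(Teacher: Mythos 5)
Your proposal follows the paper's route step-for-step: tomography on $N_0$ copies, efficient local parameterization via Theorem \ref{theo:localization}, Q-LAN, amplification, truncation, and the same covering-number and per-mode accounting for the hybrid memory. The memory totals check out.

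The one flaw is in your diagnosis of which constraints are binding. You claim the tomography inequality $N_0\gg(nd\log_2(nd/\epsilon_0))^2/\epsilon_0^2$ pins down the lower endpoint $\Delta>6/(32+3\gamma)$. Under the configuration $\epsilon_0=\Theta(nd\,N^{-(1-2\Delta/3)/2})$, $N_0=\Theta(N^{1-\Delta/2})$, the right-hand side is $\Theta(N^{1-2\Delta/3}(\log_2 N)^2)$, so the inequality reduces to $N^{\Delta/6}\gg(\log_2 N)^2$, which holds for every fixed $\Delta>0$ -- tomography is slack and does not produce the lower endpoint. The binding condition is instead the amplification bound $n N_0/N\ll 1$, i.e., $n\ll N^{\Delta/2}$, which with $N=\Theta(n^{32/3+\gamma})$ is equivalent to $\Delta>6/(32+3\gamma)$. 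You in fact compute $\epsilon_{\rm amp}=\Theta(n N^{-\Delta/2})$ in the body of your argument but then list amplification among the ``strictly weaker'' constraints, inverting the actual roles. The Q-LAN bound (correctly identified as tight, yielding $\Delta<3/4-18/(32+3\gamma)$) and the rest of the error and memory analysis are right; you should swap the attribution of the lower endpoint from tomography to amplification.
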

The proof is provided in Appendix \ref{subsec:proofcompression}.

The total memory cost is in the order of $n\cdot\log_2 N$. The quantum-to-classical cost ratio (in the leading order) is 
\begin{align}
    r_{\rm q-c}=\frac{2^{8d-2}\Delta}{d^2(3-2\Delta)}
\end{align}
When the number of copies gets larger fixing $n$, i.e., when $\gamma\gg 1$, we may choose $\Delta\to 6/(32+3\gamma)$ and the ratio $r_{\rm q-c}=O(1/\gamma)$, which means that
the hybrid memory consists mainly of classical bits. 
\iffalse
In this sense, when $N$ is large enough, $N$-copy shallow-circuit states can be converted into classical bits plus a small portion of qubits. On the other hand, these qubits are indispensable as shown in Proposition \ref{prop:noclassicalcompression}. In this sense, the compressed $N$-copy state can be regarded as a long (classical) file with a quantum signature appended to it, similar as in the case of general qudits \cite{yang2018titcompression}.
\fi

 \subsection{Proof of Theorem \ref{theo:compression}.}\label{subsec:proofcompression}
Here we prove the main theorem on the effectiveness of shallow-circuit state compression. We will argue that the following configuration $(\epsilon_0, N_0,\alpha_0)$ of Protocol \ref{protocol:compression} achieves the desired features:
\begin{align}\label{configure}
    \epsilon_0 = n\cdot d\cdot N^{-\frac12(1-\frac23\Delta)}\qquad N_0=N^{1-\frac{1}{2}\Delta}\qquad \alpha_0=896\sqrt{N}\epsilon_0.
\end{align}

 \medskip

 \noindent{\it Faithfulness.} By data processing and the triangle inequality, we can upper-bound the overall error of the compression protocol by the sum of the errors of each step:
 \begin{align}
     \epsilon_N \le \epsilon_{\rm tomo} + \epsilon_{\rm qlan} + \epsilon_{\rm amp} + \epsilon_{\rm trun},
 \end{align}
 where $\epsilon_{\rm tomo}$ is the error of tomography, $\epsilon_{\rm qlan}$ is the error of Q-LAN, $\epsilon_{\rm amp}$ is the error of amplification, and $\epsilon_{\rm trun}$ is the error of coherent state truncation.
 It is enough to show that each error term vanishes for large $N$.
\begin{itemize}
    \item  
By Lemma \ref{lemma:tomography}, to output a $(4\epsilon_0)$-close estimate of $|\psi_\sc\>$ with confidence $1-\delta$, taking $\delta=\epsilon_0$ we need at least
\begin{align}
    N_0'=O\left(\frac{\left(nd\log_2(nd/\epsilon_0)\right)^2}{\epsilon_0^2}\right)
\end{align}
copies of $|\psi_\sc\>$. Since $\epsilon_0=nd/\sqrt{N^{1-2\Delta/3}}$, we have $N'_0 = O\left(N^{1-2\Delta/3}(\log N)^2\right)$ and thus $N_0=N^{1-\Delta/2}\gg N_0'$, i.e., the number of copies allocated to tomography is sufficient to guarantee the desired accuracy.
Notice that, when the estimate falls outside the confidence region, the distance between the estimate and the real state is upper bounded by one.
Then the tomography error, $\epsilon_{\rm tomo}$, can be upper bounded by the probability that the estimate fails outside the confidence region:
\begin{align}
    \epsilon_{\rm tomo}\le\delta=\epsilon_0,
\end{align}
which vanishes for large $N$.
\item We now focus on the case when $|\psi_\sc\>$ is in the confidence region (i.e., when the tomography is good) and apply Theorem \ref{theo:localization}. Since the confidence region has radius $4\epsilon_0$, $\hat{U}^\dag|\psi_\sc\>=\prod_{j\le n/d+4}W_j^{112\epsilon_0}|0\>^{\otimes n}$, where $\{W_j^{112\epsilon_0}\}$ are a collection of $(112\epsilon_0)$-rotations. The total number of rotations is no more than $n/d+4$ with overlapping number $n_{\rm overlap}=4$. By Lemma \ref{lemma:epsrotation} and Eq.~(\ref{configure}), each rotation may be expressed as  
\begin{align}
    W_j^{112\epsilon_0}=\exp\left\{-i\frac{224ndN^{\Delta/3}}{\sqrt{N}} H_j\right\}
\end{align}
for some $(8d)$-partite Hamiltonian $H_j$ with $\|H_j\|_{\infty}\le 1$.
Since $224ndN^{\Delta/3}\cdot(n/d+4)\ll N^{\frac14}$, the error of Q-LAN, $\epsilon_{\rm qlan}$, vanishes as guaranteed 
by Theorem \ref{theo:qlan}.
\item The error of photon-number truncation (i.e., of compressing the multi-mode coherent state), $\epsilon_{\rm trun}$, can be bounded as follows. First, substituting $n_{\rm overlap}=4$ and $\eta=224ndN^{\Delta/3}$ into Eq.~(\ref{amplitudebound}), the maximum amplitude of the multi-mode coherent state can be bounded as:
\begin{align} 
    \max_{\vec{k}}|u^{\vec{k}}_1+iu^{\vec{k}}_2|\le 896ndN^{\Delta/3}=896\sqrt{N}\epsilon_0.
\end{align} 
By Lemma \ref{lemma:truncation}, truncating each mode at photon number equal to $\alpha_0=896\sqrt{N}\epsilon_0$ [cf.~Eq.~(\ref{configure})] guarantees an $O(e^{-\alpha_0^2})$ error at each mode. 
By Theorem \ref{theo:qlan}, the coherent state has no more than $(n/d+4)2^{8d}$ modes. 
Therefore, the total truncation error is 
\begin{align}
    \epsilon_{\rm trun}=O\left(n2^{8d}e^{-n^2d^2N^{2\Delta/3}}\right),
\end{align} which vanishes for large $N$.
\item The amplification error, $\epsilon_{\rm amp}$, can be bounded by directly substituting the intensity gain $N/(N-N_0)$ into Lemma \ref{lemma:amplification}. The error for each single mode is thus $N_0/N=O(N^{-\frac12\Delta})$, and for the entire state we have  \begin{align}
    \epsilon_{\rm amp}=O\left(n2^{8d}N^{-\frac12\Delta}\right),
\end{align} 
which vanishes for large $N$.
\end{itemize}

 \noindent{\it Memory cost.}
The memory $M$ consists of two parts: a classical memory of $M_{\rm c}$ bits and a quantum memory of $M_{\rm q}$ qubits. 

As stated in Protocol \ref{protocol:compression} (step 6), the classical memory is to store $\hat{U}$, an element of the covering $\set{U}_{\sc}(\epsilon_0)$. By Lemma \ref{lemma:covering} and substituting in $\epsilon_0=n\cdot d\cdot N^{-\frac12(1-2\Delta/3)}$ and $G=nd/2$, the size of the classical memory is bounded as
\begin{align}
    M_{\rm c}=\log_2|\set{U}_\sc(\epsilon_0)|\le 8\left(1-\frac{2\Delta}{3}\right)nd\left(\log_2 N+O\left(\log_2 n\right)\right).
\end{align}
The quantum memory is used to store the truncated multi-mode states. Again, since the truncation for each mode costs no more than $\log_2(e\alpha_0)^2$ qubits (with $\alpha_0=896ndN^{\Delta/3}$) by Lemma \ref{lemma:truncation} and there are no more than $(n/d+4)2^{8d}$ modes, the size of the quantum memory is bounded as
\begin{align}
    M_{\rm q}\le \frac{(n/d+4)2^{8d+1}}{3}\left(\Delta\log_2 N+O(\log_2 n)\right).
\end{align}

\subsection{Optimality of the memory scaling.}\label{subsec:optimal}
Here we show that the memory consumption of Protocol \ref{protocol:compression} is the minimum at least in scaling.
\begin{theo}[Lower bound on the memory cost]\label{theo:optimal}
Any faithful $N$-copy compression for $n$-qubit, depth-$d$, brickwork shallow-circuit states requires a memory of size $\Omega(n\cdot\log_2 N)$.
\end{theo}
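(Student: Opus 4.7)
The plan is to establish the lower bound via Holevo's theorem applied to an explicit sub-ensemble of depth-$1$ brickwork states that is tomographically distinguishable from $N$ copies. I would first restrict to product states: for each qubit $k\in[n]$, let $\theta_k$ range over a regular grid of $K_N=\Theta(\sqrt{N/\log n})$ points with spacing $\Delta_N=c_0\sqrt{(\log n)/N}$ inside a fixed interval (say, $[\pi/3,2\pi/3]$) on which $\sin\theta$ is bounded away from zero. Consider the ensemble $\{p_{\vec\theta},\psi_{\vec\theta}\}$ with uniform prior and $|\psi_{\vec\theta}\rangle=\bigotimes_{k=1}^n R_y(\theta_k)|0\rangle$. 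Each such $|\psi_{\vec\theta}\rangle$ is generated by a single layer of single-qubit rotations, hence belongs to $\set{U}_\sc$ for any $d\geq 1$, and the cardinality is $M_N:=K_N^n=2^{\Omega(n\log_2 N)}$, so it suffices to show that any faithful compression has memory at least $(1-o(1))\log_2 M_N$.

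Next, I would show that the $M_N$ hypotheses can be jointly identified from $N$ copies with vanishing error. Splitting the $N$ copies into $n$ groups of $N/n$ and measuring the $k$-th qubit of each copy in the computational basis yields an empirical estimate of $\sin^2(\theta_k/2)$ with standard deviation $O(\sqrt{n/N})$; since $\sin\theta$ is bounded below on the chosen interval, this translates to an estimate of $\theta_k$ of the same order. A Hoeffding bound gives a per-qubit failure probability $\exp(-\Omega(c_0^2\log n))$ of leaving the correct grid cell, and choosing $c_0$ sufficiently large together with a union bound over the $n$ qubits yields a POVM $\{P_{\vec\theta}\}$ on $\psi_{\vec\theta}^{\otimes N}$ whose probability of correctly identifying $\vec\theta$ is at least $1-\delta_N$ for some $\delta_N\to 0$.

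For any faithful $(\map{E}_N,\map{D}_N)$ with error $\epsilon_N\to 0$, monotonicity of the trace distance implies that the same POVM applied to $\map{D}_N\circ\map{E}_N(\psi_{\vec\theta}^{\otimes N})$ still succeeds with probability at least $1-\delta_N-\epsilon_N$. Combining Fano's inequality with the data-processing inequality for $\map{D}_N$ and Holevo's bound on the encoded ensemble yields
\begin{align*}
(1-\delta_N-\epsilon_N)\log_2 M_N - H_2(\delta_N+\epsilon_N)\leq I_{\rm acc}\bigl(\{\map{E}_N(\psi_{\vec\theta}^{\otimes N})\}\bigr)\leq \chi\bigl(\{\map{E}_N(\psi_{\vec\theta}^{\otimes N})\}\bigr)\leq\log_2\dim\Supp\{\map{E}_N(\cdot)\}=M,
\end{align*}
where the last inequality uses $\chi\leq S(\bar\rho)\leq\log_2\dim$. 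Since $\delta_N+\epsilon_N\to 0$, the left-hand side equals $(1-o(1))\log_2 M_N$, giving $M=\Omega(n\log_2 N)$.

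The main obstacle is the balance struck between the first two steps: $\Delta_N$ must be small enough that the ensemble carries $\Omega(n\log_2 N)$ classical bits, yet large enough that tomography succeeds simultaneously on all $n$ qubits after a union bound. The borderline scaling $\Delta_N=\Theta(\sqrt{(\log n)/N})$ is what makes the union bound close, and only costs an $O(n\log_2\log n)$ slack that is absorbed by the $\Omega$-notation. A secondary point, worth noting because the paper uses a hybrid memory, is that Holevo's bound treats classical and quantum memory uniformly: it sees only the total Hilbert-space dimension of $\Supp\{\map{E}_N(\cdot)\}$, so a memory of $M_{\rm c}$ bits plus $M_{\rm q}$ qubits is constrained by $M_{\rm c}+M_{\rm q}=\Omega(n\log_2 N)$, matching the memory cost $M$ as defined in the paper.
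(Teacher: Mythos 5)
Your approach is correct in spirit and takes a genuinely different route from the paper's. The paper selects the continuous ensemble of Haar-random $n$-fold products of single-qubit pure states, evaluates its Holevo information \emph{exactly} as $n\log_2(N+1)$ by Schur's lemma (the Haar average of $\psi^{\otimes N}$ is the maximally mixed state on the symmetric subspace), and then passes from $I(X:A)$ to $I(X:M)$ via the Markov chain $X\to A\to M\to B$ together with a continuity bound on the mutual information under the faithfulness assumption. Your discretize-and-distinguish route (a packing of product states, an explicit POVM that resolves them from $N$ copies, then Fano + Holevo) avoids the continuity-of-mutual-information step and any appeal to Schur's lemma, at the price of doing a concentration argument by hand. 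Both approaches yield the $\Omega(n\log_2 N)$ bound, and your closing remark that the Holevo bound only sees $\log_2\dim\Supp$ — so it constrains $M_{\rm c}+M_{\rm q}$ uniformly — is correct and matches the paper's notion of hybrid memory.

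There is, however, a computational slip in your tomography step that as written breaks the union bound. You split the $N$ copies into $n$ groups of $N/n$, giving each qubit only $N/n$ samples, and quote a per-qubit estimation standard deviation of $O(\sqrt{n/N})$; but you then invoke a Hoeffding failure probability of $\exp(-\Omega(c_0^2\log n))$, which is what $N$ samples per qubit would give. With only $N/n$ samples, the per-qubit failure probability for grid cells of width $\Delta_N=c_0\sqrt{(\log n)/N}$ is $\exp(-\Omega((N/n)\Delta_N^2))=\exp(-\Omega(c_0^2(\log n)/n))\to 1$, and the union bound over $n$ qubits does not close; indeed your own stated accuracy $O(\sqrt{n/N})$ exceeds $\Delta_N$ for large $n$. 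The fix is simply to drop the split: since the states are products, measuring every qubit of every copy in the computational basis yields $N$ independent samples per qubit, the per-qubit standard deviation becomes $O(1/\sqrt{N})$, Hoeffding gives per-qubit failure probability $n^{-\Omega(c_0^2)}$, and a union bound over $n$ qubits then succeeds for $c_0$ a sufficiently large constant. With that repair the rest of the argument — Fano applied to the POVM composed with $\map{D}_N$, data processing, $\chi\le S(\bar\rho)\le\log_2\dim\Supp\{\map{E}_N(\cdot)\}$, and $\log_2 M_N=\Omega(n\log_2 N)$ for $N$ growing polynomially in $n$ — goes through.
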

\begin{proof}
Any compression protocol $(\map{E}_N,\map{D}_N)$ can be used by two parties Alice and Bob for communicating some classical information: the message $X$ is first encoded in a quantum register $A$ in an $N$-copy shallow-circuit state, which is compressed into a memory $M$ via $\map{E}_{N}$; the memory $M$ is transmitted to Bob, who decodes the $N$-copy state via $\map{D}_N$ and further decodes the message by measuring the state (which is not too relevant here). Therefore, the memory $M$, as the bottleneck of the communication protocol, must be large enough to host the information content of the communicated ensemble. In another word, a good lower bound on the memory size can be established by finding an ensemble of $N$-copy shallow-circuit states with large enough information content.

We construct such an ensemble as follows.
Define the $N$-copy ensemble of $n$ uncorrelated qubits $\set{S}_{\rm q}$ as
\begin{align}
    \set{S}_{\rm q}:=\left\{\d\psi_1\cdots\d\psi_n,\,\left(|\psi_1\>\otimes\cdots\otimes|\psi_n\>\right)^{\otimes N}\right\}.
\end{align}
Here each $|\psi_j\>$ is a single-qubit pure state, and $\d\psi_j$ is a measure of single-qubit pure states induced by the Haar measure of $\grp{SU}(2)$. 
Obviously, states in $\set{S}_{\rm q}$ are $N$-copy shallow-circuit states (of depth one), and any faithful compression protocol for shallow-circuit states should work faithfully for them as well.
The Holevo information of an ensemble $\set{S}=\{p_x, \rho_x\}$, which captures the information content of $\set{S}$, is defined as \cite{holevo1973bounds}:
\begin{align}
    \chi(\set{S}_{\rm q}):=I(X:A)_{\overline{\rho}},
\end{align}
where $X$ is the classical register that stores the classical description of the $N$-copy state,  $A$ is the quantum register that stores the associated state, and $\overline{\rho}:=\sum_x p_x|x\>\<x|_{X}\otimes(\rho_x)_A$.
Here $I(X:A):=H(X)+H(A)-H(XA)$ is the mutual information, where $H(\cdot)$ denotes the von Neumann entropy of quantum states.
Observing that the classical-quantum system can be split into $n$ uncorrelated pairs, we can reduce the Holevo information to
\begin{align}
    \chi(\set{S}_{\rm q})=n\chi(\set{S}_{\rm 1-q}),
\end{align}
with $\set{S}_{\rm 1-q}$ being the single-qubit ensemble 
\begin{align}
    \set{S}_{\rm 1-q}:=\left\{\d\psi,\,|\psi\>^{\otimes N}\right\}
\end{align}
where $|\psi\>$ is a single-qubit pure state and $\d\psi$ is the Haar measure induced measure of pure states. The Holevo information of the single-qubit ensemble can be explicitly evaluated as:
\begin{align}
    \chi(\set{S}_{\rm 1-q})&=H\left(\int(\d\psi)\,\psi^{\otimes N}\right)-\int(\d\psi)\,H\left(\psi^{\otimes N}\right)\\
    &=H\left(\int\d\psi\,\psi^{\otimes N}\right)\\
    &=H\left(\frac{P_{\rm sym}}{N+1}\right)\\
    &=\log_2(N+1),
\end{align}
where $P_{\rm sym}$ is the projector onto the symmetric subspace of $N$ qubits.
The second equality holds since the entropy of a pure state is zero, the third equality holds thanks to Schur's lemma and that the $N$-copy pure states live in the symmetric subspace, and the last equality holds by definition of the von Neumann entropy. Therefore, the Holevo information of interest is
\begin{align}
    \chi(\set{S}_{\rm q})=n\log_2(N+1).
\end{align}

It is immediate that the above Holevo information $\chi(\set{S}_{\rm q})$ has the desired scaling for both $n$ and $N$. What remains is to show that the required memory size is lower bounded by a quantity close to $\chi(\set{S}_{\rm q})$.
First, noticing that in the aforementioned communication protocol
\begin{align}
    X\xrightarrow{}A\xrightarrow{\map{E}_N}M\xrightarrow{\map{D}_N}B
\end{align}
forms a Markov chain, by data processing, we have $I(X:B)\le I(X:M)$.
Since $X$ is classical, we have $I(X:M)=H(M)-H(M|X)\le H(M)\le \log_2 D_M$ with $D_M$ being the dimension of the memory. Combining the two inequalities yields
\begin{align}
    \log_2 D_M\ge I(X:B).
\end{align}
Second, since the compression is faithful, $I(X:B)$ is almost the same as $I(X:A)=\chi(\set{S}_{\rm q})=n\log_2(N+1)$. Leveraging the continuity of the mutual information \cite{alicki2004continuity,wilde2013quantum} (see also \cite[Supplementary Information]{yang2016prlefficient}), we have
\begin{align}
\log_2 D_M\ge (1-2\epsilon_N)n\log_2(N+1)-2h_2(\epsilon_N),
\end{align}
where $\epsilon_N$ is the compression error and $h_2(x):=-x\log_2 x$. Since $\epsilon_N$ vanishes in the large $N$ limit, we conclude that the  cost of \emph{any faithful compression protocol} is $\Omega(n\cdot\log_2 N)$.
\end{proof}

\subsection{Necessity of quantum memory.}\label{subsec:classicalmemo}
In Appendix \ref{subsec:protocol}, we have seen that the compression does not require a fully quantum memory. Even more, under certain conditions, the hybrid memory can be made almost classical. It is thus tempting to ask whether quantum memory is necessary at all, i.e., whether a fully classical memory can fulfill the goal of faithfully compression. Here we give a negative answer by showing that any protocol using only classical memory cannot achieve faithfulness. Note that this fact was first shown in Ref.~\cite{yang2018titcompression} for qudits; see Proposition \ref{prop:noclassicalcompression}. Here we show the same fact for shallow-circuit states by making some modifications to the original proof.
\begin{theo}[Necessity of quantum memory]\label{theo:classicalmemo}
Consider any $N$-copy compression protocol $(\map{E}_N,\map{D}_N)$ that faithfully compresses $n$-qubit shallow-circuit states of depth-$d$. If $(\map{E}_N,\map{D}_N)$ uses a fully classical memory, it cannot have an error that vanishes in $N$.
\end{theo}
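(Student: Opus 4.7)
The proof is by contradiction through a no-cloning argument. Suppose a faithful classical-memory compression $(\map{E}_N,\map{D}_N)$ exists with worst-case trace-distance error $\epsilon_N\to 0$ on $\set{U}_\sc$. I would restrict attention to the sub-family $|\psi(U)\>:=(U|0\>)\otimes|0\>^{\otimes(n-1)}$ with $U\in\grp{SU}(2)$, which belongs to $\set{U}_\sc$ because each such state is generated by a single depth-one gate. The crucial feature of a purely classical memory is that its readout $m$ can be duplicated arbitrarily many times: after obtaining $m$ from $\map{E}_N$, the same $m$ can be fed to two independent instances of $\map{D}_N$, producing the state $\rho_\psi^{(2)}:=\mathbb{E}_{m\sim P_\psi}[D_m\otimes D_m]$ with $P_\psi:=\map{E}_N(\psi(U)^{\otimes N})$ and $D_m:=\map{D}_N(|m\>\<m|)$.

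Next I would lower-bound $F(\rho_\psi^{(2)},\psi(U)^{\otimes 2N})$. Faithfulness together with the Fuchs--van de Graaff inequality~(\ref{fuchsvandegraaff}) gives $\mathbb{E}_m[1-F(D_m,\psi(U)^{\otimes N})]\le 2\epsilon_N$, so Markov's inequality shows that the event $G:=\{m:F(D_m,\psi(U)^{\otimes N})\ge 1-\sqrt{\epsilon_N}\}$ has probability at least $1-2\sqrt{\epsilon_N}$. Since the target is pure and fidelity with a pure state is linear in the mixed argument and multiplicative across tensor factors,
\[
F\bigl(\rho_\psi^{(2)},\psi(U)^{\otimes 2N}\bigr)=\mathbb{E}_m\bigl[F(D_m,\psi(U)^{\otimes N})^2\bigr]\ge(1-2\sqrt{\epsilon_N})(1-\sqrt{\epsilon_N})^2\ge 1-4\sqrt{\epsilon_N},
\]
uniformly in $U$. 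Tracing out the inactive wires (which in $\psi(U)^{\otimes 2N}$ carry the known ancilla $|0\>^{\otimes 2(n-1)N}$) can only increase fidelity with the active target $(U|0\>)^{\otimes 2N}$, so the whole procedure defines a universal $N\to 2N$ qubit cloning channel $\Lambda_{N,2}$ whose worst-case global fidelity is at least $1-4\sqrt{\epsilon_N}$.

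Finally I would apply Werner's optimal universal-cloning bound~\cite{werner1998optimal}, which upper-bounds the worst-case global fidelity of \emph{every} $N\to M$ qubit cloning channel by $(N+1)/(M+1)$; for $M=2N$ this tends to $1/2$ as $N\to\infty$. The resulting estimate $1-4\sqrt{\epsilon_N}\le(N+1)/(2N+1)$ fails as soon as $\epsilon_N<1/64$ and $N$ is large, and since $\epsilon_N\to 0$ such $N$ must exist, contradicting the existence of the compression. The main obstacle is promoting Werner's result---originally stated for the covariant optimum---to every CPTP cloning channel; this is the standard twirling argument, where averaging any candidate cloner over $\grp{SU}(2)$ together with the input/output permutation symmetries weakly increases worst-case product-state fidelity and reduces the problem to the covariant-optimal case. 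A monotone-based alternative closer to~\cite{yang2018titcompression}, exploiting that $N$-copy pure states saturate $d_H^2=2d_{\Tr}^2$ whereas classical distributions cannot match this identity for generic overlaps, is technically feasible but more delicate.
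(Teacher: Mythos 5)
Your proof is correct and follows a genuinely different route from the paper's. You reduce the claim to no-cloning: since a classical register can be read out and copied, a faithful classical-memory compression would yield an $N\to 2N$ universal single-qubit cloner (after appending the known ancilla $|0\>^{\otimes(n-1)}$ and tracing it back out) whose worst-case global fidelity is at least $1-4\sqrt{\epsilon_N}$, exceeding Werner's optimal bound $(N+1)/(2N+1)$ as soon as $\epsilon_N$ is small. The paper instead uses a pair of distance monotones: the quantum Hellinger and Bures distances satisfy $d_{\rm H}\ge d_{\rm B}$ with equality if and only if the two states commute, so passing through a classical (hence commuting) memory and back collapses the gap $d_{\rm H}-d_{\rm B}$ up to $O(\sqrt{\epsilon_N})$, while an explicit Q-LAN computation shows two nearby shallow-circuit states keep an asymptotic gap $>0.167$ in this quantity; this is essentially the ``monotone-based alternative'' you sketch at the end. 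Your approach buys a self-contained argument that avoids the Q-LAN machinery entirely, needs only a depth-one single-qubit sub-family plus the symmetric-subspace cloner, and yields a numerically stronger lower bound on $\lim_{N\to\infty}\epsilon_N$ (about $1/64$ versus the paper's $\approx 0.003$). One clarification on the step you flag as the ``main obstacle'': Werner's theorem in \cite{werner1998optimal} already bounds the worst-case global fidelity of \emph{every} CPTP $N\to M$ cloner---the covariance/twirling reduction is part of his original proof, not an additional argument you need to supply---so you may simply cite the bound directly.
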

It is noteworthy that the result does not put any restriction on the size of the memory, i.e., the compression cannot be faithful no matter how large the classical memory is.

    To show Theorem \ref{theo:classicalmemo}, we define two distance measures of quantum states (cf.~Ref.~\cite{shunlong2004informational}): the quantum Hellinger distance
    \begin{align}
        d_{\rm H}(\rho,\sigma):=\sqrt{2-2\Tr\left(\rho^{\frac12}\sigma^{\frac12}\right)}
    \end{align}
    and the quantum Bures distance
    \begin{align}
        d_{\rm B}(\rho,\sigma):=\sqrt{2-2\Tr\left|\rho^{\frac12}\sigma^{\frac12}\right|}.
    \end{align}
    The quantum Hellinger distance is reated to the trace distance via:
    \begin{align}\label{distHTr}
    d_{\rm H}(\rho,\sigma)\le\sqrt{2d_{\Tr}(\rho,\sigma)}.
    \end{align}
    Besides the general properties of distance measures (e.g., the triangle inequality and data processing),
    we need two additional properties of the two measures in the proof. The first is:
    \begin{lemma}\label{lemma:propBH1}
        For every $\rho$ and $\sigma$, $d_{\rm H}(\rho,\sigma)\ge d_{\rm B}(\rho,\sigma)$. The equality holds if and only if $[\rho,\sigma]=0$.
    \end{lemma}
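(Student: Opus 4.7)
The plan is to reduce the distance inequality to a scalar trace inequality for the single operator $A := \rho^{1/2}\sigma^{1/2}$. Since both $d_{\rm H}$ and $d_{\rm B}$ take the form $\sqrt{2 - 2(\cdot)}$, the desired inequality $d_{\rm H}(\rho,\sigma) \ge d_{\rm B}(\rho,\sigma)$ is equivalent to $\Tr|A| \ge \Tr(A)$. Note that $\Tr(A)$ is real: Hermiticity of $\rho^{1/2},\sigma^{1/2}$ together with cyclicity gives $\overline{\Tr(A)} = \Tr(\sigma^{1/2}\rho^{1/2}) = \Tr(A)$, so the comparison is well-defined.

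For the inequality itself I would invoke the polar decomposition $A = U|A|$ and pick an orthonormal basis $\{|e_i\>\}$ diagonalizing $|A|$ with singular values $s_i \ge 0$. Then
\begin{align}
    \Tr(A) = \sum_i s_i \<e_i|U|e_i\> \le \sum_i s_i |\<e_i|U|e_i\>| \le \sum_i s_i = \Tr|A|,
\end{align}
where the last step uses $|\<e_i|U|e_i\>| \le \|U\|_{\infty} \le 1$. This establishes $d_{\rm H}(\rho,\sigma) \ge d_{\rm B}(\rho,\sigma)$.

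For the equality characterization, the forward direction is immediate: if $[\rho,\sigma]=0$, functional calculus yields $[\rho^{1/2},\sigma^{1/2}]=0$, so $A$ is a product of commuting positive semi-definite operators and hence positive semi-definite, giving $|A|=A$ and $\Tr|A|=\Tr(A)$. For the converse, equality forces every intermediate inequality above to be tight, and in particular $\<e_i|U|e_i\> = 1$ for each $i$ with $s_i > 0$. Since $U$ is an isometry from the range of $|A|$ onto the range of $A$, this pinching condition forces $U|e_i\> = |e_i\>$ on this subspace, so $A = U|A| = |A|$ is Hermitian (and positive semi-definite). Hermiticity of $A$ translates to $\rho^{1/2}\sigma^{1/2} = \sigma^{1/2}\rho^{1/2}$, and functional calculus on this commuting pair of Hermitian operators yields $[\rho,\sigma]=0$.

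The main delicacy is the equality analysis when $|A|$ has a nontrivial kernel, since the polar decomposition only determines $U$ on the range of $|A|$. I would handle this by extending $U$ to a full unitary in an arbitrary way on the kernel and observing that the conclusion $A = |A|$ only needs to hold on the range of $|A|$ (both operators vanish on the kernel), so the extension does not affect the Hermiticity conclusion. The remaining step, the equivalence $[\rho^{1/2},\sigma^{1/2}] = 0 \Leftrightarrow [\rho,\sigma] = 0$, is standard functional calculus for commuting positive operators and their squares.
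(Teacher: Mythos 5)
Your proof is correct and follows essentially the same route as the paper: reduce to the scalar comparison $\Tr|A|\ge\Tr A$ for $A=\rho^{1/2}\sigma^{1/2}$ (noting $\Tr A$ is real) and characterize equality via Hermiticity of $A$, which for this particular $A$ is equivalent to $[\rho,\sigma]=0$. The paper's proof is much terser --- it asserts "the two measures coincide if and only if $A=A^\dagger$" without justification --- so your polar-decomposition pinching argument, including the careful treatment of the kernel of $|A|$, is a more complete writeup of the same idea.
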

    The inequality holds as $\Tr|A|\ge\Tr A$ for any square matrix $A$. To see the condition for equality, notice that the two measures coincide if and only if $A=A^\dag$ for $A:=\rho^{\frac12}\sigma^{\frac12}$, i.e., 
    \begin{align}
        \rho^{\frac12}\sigma^{\frac12}=\left(\rho^{\frac12}\sigma^{\frac12}\right)^\dag=\sigma^{\frac12}\rho^{\frac12},
    \end{align}
    which is equivalent to $[\rho,\sigma]=0$. The second property is an approximate continuity as follows:
    \begin{lemma}\label{lemma:propBH2}
        Let $\map{E}$ be a channel sending states on $\map{H}$ to states on $\spc{H}'$, and $\map{D}$ be a channel sending states on $\spc{H}'$ to states on $\spc{H}$. Let $\rho_1$ and $\rho_2$ be states on $\spc{H}$ such that 
        \begin{align}
            [\map{E}(\rho_1),\map{E}(\rho_2)]=0
        \end{align}
        and $d_{\Tr}(\map{D}\circ\map{E}(\rho_i),\rho_i)\le\epsilon$ for $i=1,2$. Then, we have $d_{\rm H}(\rho_1,\rho_2)-d_{\rm B}(\rho_1,\rho_2)\le 2\sqrt{2}\epsilon$.
    \end{lemma}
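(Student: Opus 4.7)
The plan is to insert $\sigma_i:=\map{D}\circ\map{E}(\rho_i)$ as intermediate points and decompose $d_{\rm H}(\rho_1,\rho_2)$ via the metric triangle inequality,
$$
d_{\rm H}(\rho_1,\rho_2)\le d_{\rm H}(\rho_1,\sigma_1)+d_{\rm H}(\sigma_1,\sigma_2)+d_{\rm H}(\sigma_2,\rho_2).
$$
The outer terms are immediately controlled by the hypothesis $d_{\Tr}(\sigma_i,\rho_i)\le\epsilon$ together with Eq.~(\ref{distHTr}), each being bounded by $\sqrt{2\epsilon}$. The whole problem therefore reduces to the single task of showing that the middle term $d_{\rm H}(\sigma_1,\sigma_2)$ is majorised by $d_{\rm B}(\rho_1,\rho_2)$, which is exactly where the commutativity hypothesis must enter.

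For the middle term I would chain three monotonicity/identity steps. First, by data processing of the quantum Hellinger distance under the decoder,
$$
d_{\rm H}(\sigma_1,\sigma_2)=d_{\rm H}(\map{D}(\map{E}(\rho_1)),\map{D}(\map{E}(\rho_2)))\le d_{\rm H}(\map{E}(\rho_1),\map{E}(\rho_2)).
$$
Second, the commutativity hypothesis $[\map{E}(\rho_1),\map{E}(\rho_2)]=0$ activates the equality case of Lemma~\ref{lemma:propBH1} and collapses Hellinger to Bures on the encoded pair,
$$
d_{\rm H}(\map{E}(\rho_1),\map{E}(\rho_2))=d_{\rm B}(\map{E}(\rho_1),\map{E}(\rho_2)).
$$
Third, standard Uhlmann monotonicity of the Bures distance under the encoder gives $d_{\rm B}(\map{E}(\rho_1),\map{E}(\rho_2))\le d_{\rm B}(\rho_1,\rho_2)$. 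Chaining these delivers $d_{\rm H}(\sigma_1,\sigma_2)\le d_{\rm B}(\rho_1,\rho_2)$, and combining with the outer bounds from paragraph~1 gives the target $d_{\rm H}(\rho_1,\rho_2)-d_{\rm B}(\rho_1,\rho_2)\le 2\sqrt{2\epsilon}$.

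The hard part is justifying data processing for $d_{\rm H}$, which is less standard than the fidelity-based monotonicity of $d_{\rm B}$. I would establish it by a Stinespring lift: write $\map{D}(X)=\Tr_E(VXV^\dag)$ with $V$ an isometry; direct manipulation shows that conjugation by an isometry leaves $\Tr(\sqrt{X}\sqrt{Y})$ invariant and also preserves commutativity of $V\map{E}(\rho_i)V^\dag$, so the claim reduces to partial-trace monotonicity of $d_{\rm H}$ on a commuting pair $\tilde\sigma_1,\tilde\sigma_2$. In that setting one can simultaneously diagonalise and reduce the overlap to the classical Bhattacharyya coefficient, for which partial-trace monotonicity follows from operator concavity of the square root (Jensen-type inequality $\Tr_E\sqrt{\tilde\sigma_i}\le\sqrt{\Tr_E\tilde\sigma_i}$) together with Cauchy--Schwarz; alternatively, one can invoke the monotonicity already proven by Luo--Zhang~\cite{shunlong2004informational}. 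One cosmetic caveat: the proof naturally yields the $\sqrt{\epsilon}$-scaling bound $2\sqrt{2\epsilon}$ rather than the $\epsilon$-linear $2\sqrt{2}\,\epsilon$ printed in the statement, which I read as a typographical issue with the placement of the square-root, since the downstream application in Theorem~\ref{theo:classicalmemo} only requires the left-hand side to vanish in the limit $\epsilon\to 0$.
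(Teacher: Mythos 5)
Your proof matches the paper's argument step for step: triangle inequality through the intermediate points $\map{D}\circ\map{E}(\rho_i)$, the outer terms bounded by $\sqrt{2\epsilon}$ via the Hellinger--trace relation, and the middle term reduced to $d_{\rm B}(\rho_1,\rho_2)$ by data processing of $d_{\rm H}$ under $\map{D}$, the commutativity-induced equality $d_{\rm H}=d_{\rm B}$ on the encoded pair, and Bures monotonicity under $\map{E}$. You also correctly spot that the argument yields $2\sqrt{2\epsilon}$ rather than the $2\sqrt{2}\,\epsilon$ in the lemma statement; the paper's own proof and its downstream use in Theorem~\ref{theo:classicalmemo} both use the $2\sqrt{2\epsilon}$ form, confirming the statement has a typo.
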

    \begin{proof}
By the triangle inequality for $d_{\rm H}$, we have
\begin{align}
    d_{\rm H}(\rho_1,\rho_2)&\le d_{\rm H}(\rho_1,\map{D}\circ\map{E}(\rho_1))+d_{\rm H}(\map{D}\circ\map{E}(\rho_1),\map{D}\circ\map{E}(\rho_2))+d_{\rm H}(\map{D}\circ\map{E}(\rho_2),\rho_2)\\
    &\le d_{\rm H}(\map{D}\circ\map{E}(\rho_1),\map{D}\circ\map{E}(\rho_2))+2\sqrt{2\epsilon}.
\end{align}
The last step comes from the combination of the assumption and Eq.~(\ref{distHTr}). Next, by data processing and Lemma \ref{lemma:propBH1}, we get
\begin{align}
    d_{\rm H}(\map{D}\circ\map{E}(\rho_1),\map{D}\circ\map{E}(\rho_2))&\le d_{\rm H}(\map{E}(\rho_1),\map{E}(\rho_2))\\
    &=d_{\rm B}(\map{E}(\rho_1),\map{E}(\rho_2))\\
    &\le d_{\rm B}(\rho_1,\rho_2).
\end{align}
Combining the above inequalities yields the desired statement.
\end{proof}

We are now in place to prove Theorem \ref{theo:classicalmemo}.

\medskip

\noindent{\em Proof of Theorem \ref{theo:classicalmemo}.}
Consider  two $n$-qubit states $|\psi_1\>:=|0\>^{\otimes n}$ and 
\begin{align}
    |\psi_2\>:=e^{-i\frac{\sigma_X}{\sqrt{N}}}|0\>\otimes|0\>^{\otimes(n-1)}.
\end{align}
where $\sigma_X:=|0\>\<1|+|1\>\<0|$ is the Pauli-$X$ operator.
Obviously, both states are in the set of shallow-circuit states $\set{S}_{\sc}$.
Moreover, both are in a subset $\set{S}'_{\sc}$ of $\set{S}_{\sc}$ where the states are within an $O(1/\sqrt{N})$-radius neighborhood around $|0\>^{\otimes n}$. The Q-LAN has vanishing error for states in $\set{S}'_\sc$. In particular, by Theorem \ref{theo:qlan}, we have
\begin{align}\label{qlan:G1G2}
    \max_{j=1,2}\left\{d_{\Tr}\left(\map{V}_N(\psi_j^{\otimes N}),G_j\right),d_{\Tr}\left(\psi_j^{\otimes N},\map{V}_N^\ast(G_j)\right)\right\}=:\epsilon_{{\rm Q-LAN},N}\qquad\lim_{N\to\infty}\epsilon_{{\rm Q-LAN},N}=0.
\end{align}
Here $|\psi_2\>$ is mapped by the Q-LAN to
\begin{align}
    |G_2\>:=|1\>_{\rm coh}\otimes |{\rm vac}\>
\end{align}
where $|1\>_{\rm coh}$ is a single-mode coherent state with unit amplitude and $|{\rm vac}\>$ denotes the vacuum state of the remaining modes. $|\psi_1\>$ is mapped to  
\begin{align}
    |G_1\>:=|0\>_{\rm coh}\otimes |{\rm vac}\>
\end{align}
where $|0\>_{\rm coh}$ is the single-mode vacuum state. 
Using their definitions, the quantum Hellinger distance and the quantum Bures distance between $G_1$ and $G_2$ can be explicitly evaluated as
\begin{align}
    d_{\rm H}(G_1,G_2)=\sqrt{2(1-e^{-1/2})}\qquad d_{\rm B}(G_1,G_2)=\sqrt{2(1-1/e)},
\end{align}
and thus
\begin{align}\label{distdiff}
    d_{\rm H}(G_1,G_2)-d_{\rm B}(G_1,G_2)>0.167.
\end{align}

For any compression protocol $(\map{E}_N,\map{D}_N)$ for shallow-circuit states, we consider its error on $\psi_1$ and $\psi_2$:
\begin{align}
    \epsilon_N:=\max_{j=1,2}d_{\Tr}({\map{D}}_N\circ{\map{E}}_N(\psi_j),\psi_j).
\end{align}
First, via the Q-LAN, we can define a compression protocol $(\tilde{\map{E}}_N,\tilde{\map{D}}_N)$ for multi-mode coherent states as
\begin{align}
    \tilde{\map{E}}_N:=\map{E}_N\circ\map{V}^\ast_N\qquad\tilde{\map{D}}_N:=\map{V}_N\circ\map{D}_N,
\end{align}
where $\map{V}_N$ and $\map{V}_N^\ast$ are the Q-LAN and the inverse Q-LAN, respectively (cf.~Appendix \ref{sec:QLAN}). 

If the original protocol $(\map{E}_N,\map{D}_N)$ uses a fully classical memory, by definition, so does the new protocol $(\tilde{\map{E}}_N,\tilde{\map{D}}_N)$. We thus have
\begin{align}
    \left[\tilde{\map{E}}_N(G_1),\tilde{\map{E}}_N(G_2)\right]=0.
\end{align}
Applying Lemma \ref{lemma:propBH2}, we get
\begin{align}\label{contra1}
d_{\rm H}(G_1,G_2)-d_{\rm B}(G_1,G_2)\le 2\sqrt{2\tilde{\epsilon}_N},
\end{align}
where $\tilde{\epsilon}_N$ denotes the error of $(\tilde{\map{E}}_N,\tilde{\map{D}}_N)$.
Next, we bound $\tilde{\epsilon}_N$ by using the error of the original protocol.
By data processing and the triangle inequality, we get:
\begin{align}
    \tilde{\epsilon}_N:=\max_{j=1,2}d_{\Tr}(\tilde{\map{D}}_N\circ\tilde{\map{E}}_N(G_j),G_j)&=\max_{j=1,2}d_{\Tr}(\map{V}_N\circ\map{D}_N\circ\map{E}_N\circ\map{V}_N^\ast(G_j),G_j)\\
    &\le \max_{j=1,2}\left(d_{\Tr}(\map{V}^\ast_N(G_j),\psi_j)+d_{\Tr}(\map{D}_N\circ\map{E}_N(\psi_j),\psi_j)+d_{\Tr}(\map{V}_N(\psi_j),\psi_j\right)\\
    &\le 2\epsilon_{{\rm Q-LAN},N}+\epsilon_N.\label{contra2}
\end{align}
Combining Eqs. (\ref{qlan:G1G2}), (\ref{contra1}) and (\ref{contra2}) and taking the large $N$ limit, we get
\begin{align}
    d_{\rm H}(G_1,G_2)-d_{\rm B}(G_1,G_2)\le 2\sqrt{2\lim_{N\to\infty}\epsilon_N}.
\end{align}
Finally, combining the above inequality with Eq.~(\ref{distdiff}), we get
\begin{align}
    \lim_{N\to\infty}\epsilon_N>0.003.
\end{align}
Therefore, the original compression protocol $(\map{E}_N,\map{D}_N)$ is not faithful.

\qed

\subsection{Time complexity and gate complexity of the compression.}\label{subsec:computationcomplexity}

As shown in the description of Protocol \ref{protocol:compression}, the compression consists of several steps. Their execution times are analyzed in the following:
\begin{itemize}
    \item {\it Tomography.} The tomography subroutine \cite{zhao2023learning} we plan to use has a gate complexity that is at least exponential in $G$, the number of gates in the generating circuit of the state. Instead of directly executing the tomography subroutine \cite{zhao2023learning}, we make it more computationally efficient by using the structure of shallow circuits. Since the brickwork shallow circuit can be reduced into two layers (cf.~Section \ref{sec:localparameterization}), the tomography can be done in just two consecutive steps. First, gates in the first layer are determined (up to local unitaries) with a small error. Since all gates in the same layer commute with each other, they can be determined in parallel, and thus the time complexity is equal to the complexity for a single gate (with an error that vanishes at most as $\poly(n)$). Next, the same procedure is applied to gates in the second layer. Since each gate is of a constant size and there are $O(n)$ gates in each layer, the whole tomography can be done efficiently.  

    \item {\it The Q-LAN and the inverse Q-LAN.} Essentially, one needs to implement any transformation such that Eq.~(\ref{VN}) (and its pseudo-inverse) holds, which is the transformation from a certain subset of computational states to the corresponding symmetric states. 
 
    Here we present a concrete implementation of this step consisting of two subroutines. The first is to apply the same unitary transformation $F$ on each $n$-qubit system that reorders the computational bases by the number of ones they contain: $F$ maps a computational basis state $|x\>$ into $|f_x\>$, where $f_x$ is a binary string such that $f_x\ge f_y$ if there are more ones in $x$ than in $y$. This $F$ is a crucial step of the Schumacher compression and can be achieved by the construction in Ref.~\cite{cleve1996schumacher}, which has time complexity $O(n^3)$ with $O(\sqrt{n})$ ancillary qubits. In total, $N$ uses of $F$ in parallel, and thus this subroutine is both time-efficient and computationally efficient. For the inverse Q-LAN in decoding, $F^\dag$ can be implemented (efficiently) with the same complexity. 
    The purpose of doing so is to inversely order the basis states \emph{by the number of ones} for the convenience of the truncation (see the next step). 
    
    The second subroutine is a basis transformation into the symmetric basis. It is sufficient (but not necessary) to implement the Schur transform (and its inverse) on the $N$ systems. According to Refs.~\cite{harrow2005applications,bacon2006efficient,krovi2019efficient}, the Schur transform can be done with polynomial gate complexity in both $N$ and $\log_2 D$ (with $D:=2^n$ being the single-copy dimension), and thus our Q-LAN subroutines are computationally efficient and, as an immediate consequence, time efficient. The algorithm in Ref.~\cite{krovi2019efficient} outputs the representation register in the Gelfand-Tsetlin basis. More explicitly, it takes states in the computational basis as inputs and outputs states in the Schur basis, which has three registers: the index register (labeled by Young diagrams), the multiplicity register, and the representation register. Since only symmetric states are concerned here, the first two registers are trivial, and the third is spanned by the Gelfand-Tsetlin basis. For symmetric states of $\spc{H}_{D=2^n}^{\otimes N}$ with type $(N_1,\dots,N_{2^n})$, the nonzero subregister of the Gelfand-Tsetlin state reads $|N,N-N_1,N-N_1-N_2,\dots,N_{2^n}\>$. Here $N_j$ corresponds to the number of $|j\>$.
    
    \item {\it Photon-number truncation.} 
    We start from the Gelfand-Tsetlin state which reads $|N,N-N_1,N-N_1-N_2,\dots,N_{2^n}\>$ with $2^n$ sub-registers.
    Crucially, note that now the computational basis has been reordered by the unitary $F$ at the previous step (cf.~the first subroutine), and now $N_j$ corresponds to the original binary string with the $j$-th lowest number of ones.
    By Eq.~(\ref{local_expand_psi}), we can focus only on the first $O(2^d n)$ sub-registers with at most $O(d)$ ones adjacent to each other, i.e., those corresponding to binary strings in $\set{B}_j$ (\ref{Bj}).
    One can simply discard the other $O(2^n)$ sub-registers. The remaining registers (consisting of $O(2^n n)$ sub-registers, each of dimension $N$) $|N,N-N_1,N-N_1-N_2,\dots\>$ can be transformed into a type register $|N_1,N_2,\dots\>$ by consecutive subtraction, which can be done efficiently. 

    Next, we need to do a number truncation on each sub-register ($N$-dimensional), as required in Lemma \ref{lemma:truncation}.
    Since the truncation operations on different modes are independent and can be implemented in parallel. It is sufficient to analyze the time it takes to implement one truncation. 
    
    %Note that, by Eq.~(\ref{VN}), the state after the Q-LAN isometry is already a truncated coherent state on a finite-dimensional system, with its photon number bounded by $N$. The remaining job is to truncate 

    Suppose one is given a register (consisting of qubits) in a pure state. In our setup, to realize a truncation above (or below, in a similar way) a threshold value $N_0$, it is enough to simply discard all but the last $\lceil \log_2(N_0-1)\rceil$ qubits.
    Indeed, as the state can be represented as 
    \begin{align}
        |\psi\>=|0\cdots0\>\otimes\sum_{b\le N_0}\psi_b|b\>+\sqrt{\epsilon_0}|\psi'\>\in\spc{H}_{A}\otimes\spc{H}_{B},
    \end{align}
    for some normalized vector $|\psi'\>$ orthogonal to $\{|0\cdots0\>\otimes|b\>\}$, where $\spc{H}_B$ corresponds to the last $\lceil\log_2(N_0-1)\rceil$ qubits and $\spc{H}_A$ corresponds to the other qubits, $b$ is the basis index in the binary representation and $\psi_b$ denotes the corresponding amplitude with $\sum_{b\le N_0}|\psi_b|^2=1-\epsilon_0$. Discarding the first sub-register yields $\rho=\sum_{b,b'\le N_0}\psi_b\psi^\ast_{b'}|b\>\<b'|+\Tr_{A}\psi'$, which is a state on $\lceil\log_2(N_0-1)\rceil$ qubits, and appending zero qubits to it yields a state $|0\cdots0\>\<0\cdots0|\otimes\rho$ with at least $(1-\epsilon_0)^2$ fidelity with $\psi$. Therefore, the truncation would have a negligible error if $\epsilon_0$ is small.

    Overall, the truncation can be done time efficiently. Whether it is also gate-efficient depends on the complexity of discarding/resetting a sub-register. As one has to discard $O(2^n)$ sub-registers (in parallel, so the time complexity is low), the truncation step is gate-efficient if the complexity cost of discarding these registers is zero; otherwise, this step is not gate-efficient.

    \item {\it Amplification.}
    For amplification, we need to apply the same single-mode amplifier to each of the $O(n)$ modes. Again, we may simply analyze the efficiency of a single-mode amplifier. The amplification can be realized by the original proposal in Ref.~\cite{caves1982quantum}, where the mode is jointly squeezed together with a vacuum mode by a two-mode squeezing operation. The degree of squeezing is related to the degree of amplification required and is close to one in our case (i.e., the two-mode squeezing operation is close to identity). The two-mode squeezing is a standard technique in quantum optics and can be implemented under nowadays lab conditions. On the other hand, the cost of such an operation on infinite dimensional systems is not directly linked with the notion of gate complexity. 

    Instead, we show that even skipping the amplification step (i.e., applying the identity gate to each mode), the whole protocol may still be faithful since the intensity gain is close to one. 
    To see this, notice that for $g\ge 1$, $_{\rm coh}\<g\cdot z|z\>_{\rm coh}=e^{-\frac{(g-1)^2}{2}|z|^2}$ for two coherent states. Therefore, the amplification error per mode scales as $O\left((g-1)^2\alpha_0^2\right)$ as long as $(g-1)^2\alpha_0^2\ll 1$, where $(\alpha_0)^2$ (as in Protocol \ref{protocol:compression}) upper bounds the intensity per mode. Substituting in the explicit configuration (\ref{configure}) (here $g^2=N/(N-N_0)$), we get that the single-mode error scales as $O(n^2\cdot N^{-\Delta/3})$. Since there are $O(n)$ modes in total, the total error is $O(n^3\cdot N^{-\Delta/3})$. Since $N=\Theta(n^{32/3+\gamma})$, the error is vanishing whenever $\Delta>\frac{27}{32+3\gamma}$. This is achievable when $\gamma$ is large enough (i.e., when $N$ is large enough compared to $n$). We remark that there are also other promising and interesting approaches, e.g., finite-dimensional approximation via spin-coherent states, that are to be tested in future works.
    Overall, the amplification step can also be realized both time-efficiently and gate-efficiently.
    \iffalse
    Alternatively, we can consider approximating this amplification procedure on the $O(N)$-dimensional finite system at hand:
    First, define a spin-coherent state on an $(N+1)$-dimensional space as
    \begin{align}
        |\mu\>_{\rm spin-coh}:=\sum_{m=0}^{N}\sqrt{\frac{{N\choose m}}{(1+|\mu|^2)^N}}\mu^m|m\>.
    \end{align}
    We notice that a coherent state $|z\>_{\rm coh}$ is well approximated by a spin-coherent state $|z/\sqrt{N}\>_{\rm spin-coh}$ when $N$ is large, 
    since $_{\rm coh}\<z|z/\sqrt{N}\>_{\rm spin-coh}$.
    Therefore, we can apply a new amplifier, inspired by the original coherent state amplification \cite{caves1982quantum}, that works well for spin-coherent states. Then, as our (truncated) coherent state is well-approximated by a spin-coherent state, we can claim that this approach yields an output state close to the desired amplified coherent state.
    Moreover, since the new amplifier requires only a unitary gate on $O(\log_2 N)$ qubits, it can be implemented computationally efficiently: The gate complexity of $\epsilon_0$-approximating any $D$-dimensional unitary gate is $O(D^2\poly\log_2(1/\epsilon_0))$ \cite{dawson2005solovay}, which can be limited to $\poly(n)$ in our case.

    \fi
\end{itemize}
In conclusion, we have shown that all subroutines of the compression protocol (Protocol \ref{protocol:compression}) can be realized time-efficiently in principle. It can also be realized gate-efficiently, if the gate complexity of discarding qubits can be neglected (see the discussion on the truncation subroutine).
We note that there is definitely room for further improving the algorithm's efficiency, and the minimum complexity of shallow-circuit compression remains an interesting and important question to be explored in future works.

\end{widetext}
 
\end{document}